\pdfoutput=1
\documentclass[letterpaper]{article} 
\usepackage[preprint]{aaai2027}  
\usepackage[hyphens]{url}  
\usepackage{graphicx} 
\usepackage{natbib}  
\usepackage{caption} 
\usepackage{booktabs}
\usepackage{multirow}
\usepackage{amsmath}
\usepackage{amssymb}
\title{Agogic: Performance-Timed Music Tokens for LLM-Native Text-to-Symbolic-Music Generation}
\author{
    Junhao Chen\textsuperscript{1},
    Mingjin Chen\textsuperscript{2},
    Jingjia Mao\textsuperscript{1},
    Lin Chen\textsuperscript{3},
    Saining Zhang\textsuperscript{4},
    Minglin Chen\textsuperscript{5},
    Ruocheng Wu\textsuperscript{6},
    Liaoyuan Fan\textsuperscript{6},
    Wenyi Li\textsuperscript{7},
    Mingju Gao\textsuperscript{8},
    Henghaofan Zhang\textsuperscript{9}, \\
    Zhihao Li\textsuperscript{10},
    Hao Zhao\textsuperscript{1},
    Yufei Wang\textsuperscript{10,\dag},
    Ruqi Huang\textsuperscript{1,\dag}
}
\affiliations{
    \textsuperscript{1}Tsinghua University \quad
    \textsuperscript{2}The Hong Kong Polytechnic University \\
    \textsuperscript{3}Beijing Technology and Business University \quad
    \textsuperscript{4}Nanyang Technological University \\
    \textsuperscript{5}Sun Yat-sen University \quad
    \textsuperscript{6}The University of Hong Kong \quad
    \textsuperscript{7}University of Chinese Academy of Sciences \\
    \textsuperscript{8}Peking University \quad
    \textsuperscript{9}University of Electronic Science and Technology of China \quad
    \textsuperscript{10}SparcAI Inc. \\[3pt]
    Project page: \clickurl{https://yisuanwang.github.io/Agogic}
}

\newtheorem{proposition}{Proposition}
\newenvironment{proof}{\par\noindent\textit{Proof.}\ }{\hfill$\square$\par\medskip}
\newcommand\blfootnote[1]{%
  \begingroup\renewcommand\thefootnote{}\footnote{#1}%
  \addtocounter{footnote}{-1}\endgroup}
\newcommand{\clickurl}[1]{%
  \pdfstartlink attr{/Border[0 0 0]}%
    user{/Subtype/Link/A<</Type/Action/S/URI/URI(#1)>>}%
  \texttt{#1}\pdfendlink}
\begin{document}
\maketitle

\begin{abstract}
\blfootnote{$^{\dag}$Corresponding authors.}
Building a text-to-music language model begins with a choice usually made by default: how to \emph{tokenize} music. Because that choice is normally entangled with the backbone, data, and training recipe, its effect has never been measured in isolation. We fix all three, pretrained Qwen3.5 (0.8B--27B), data, budget, and decoding, and swap \emph{only} the representation across seven tokenizations, anchoring texture metrics to each representation's model-free expressive ceiling. The resulting ordering is clean and, we think, surprising: \textbf{the representation, not the model's size, is the binding variable for distributional fidelity}. Scaling the backbone $34\times$ (0.8B$\to$27B) barely moves Fr\'echet Music Distance, whereas switching representation halves it. A performance-resolution stream we release, PMT (10\,ms timing, per-note velocity, multi-track texture; 609 symbols), reaches FMD $159$ at 0.8B against $272$--$286$ for beat grids ($1.7$--$1.8\times$ lower at this protocol, up to $2.8\times$ at others, non-overlapping bootstrap confidence intervals), so \emph{a 0.8B performance-resolution model beats a 27B beat grid}. The ordering reappears on a 26M from-scratch backbone and on a second, independent performance-resolution tokenizer, making it a property of the representation \emph{class}, not one lucky vocabulary. It is robust, not a finer-lattice artifact: snapping PMT's generated onsets to a fixed grid at the beat grids' resolution still leaves it $67$--$129$ FMD points ahead of \emph{both} ($n{=}500$). We are precise about scope. The effect is distributional; whether it is \emph{audible} is a separate question, which our automatic probe leaves open and a pre-registered, properly-powered human study is collecting to settle. Native caption adherence is weak but \emph{separable}: a lightweight decode-time constraint doubles instrument-F1 ($.28\!\to\!.60$) and Correct-Key ($.16\!\to\!.35$) at no distributional cost. Alongside the harness and 25+ checkpoints we release two corpora, an 86.6k set aligned across caption/MIDI/ABC/audio and a 6.25M captioned corpus (the largest for music to date, symbolic or audio; only \emph{general}-audio corpora are larger), and an imprinting diagnostic: published text-to-MIDI systems reproduce their training distribution near-invariant to the caption ($72\%$ vs.\ $71\%$ chord-time on two disjoint domains). The field's next representation claim can now be measured, not asserted.
\end{abstract}

\section{Introduction}
Text-to-symbolic-music generation maps a natural-language description (``a lively Irish jig with flowing eighth-note runs in D major'') to a machine-readable score that can be played, edited and re-orchestrated. Unlike text-to-audio synthesis, symbolic output preserves note-level structure, underpinning composition assistance, adaptive scoring and music education; it is hard because the model must turn underspecified perceptual vocabulary into precise pitch, timing, velocity and multi-instrument texture over hundreds of correlated events.

Existing LLM-based systems make one of two representational compromises. Beat-grid tokenizations (REMI and descendants~\citep{huang2020pop,fradet2023miditok}) snap onsets and durations to metrical positions: robust, but quantization discards the micro-timing and per-note dynamics that distinguish a performance from a mechanical score. Text-native systems (ChatMusician, MuPT, NotaGen~\citep{yuan2024chatmusician,qu2025mupt,wang2025notagen}) serialize scores as ABC to exploit the LLM's textual prior, arguing from token compactness (${\sim}5\times$ fewer tokens/song than REMI) that it is the better LLM format; that is compelling for music \emph{understanding} but untested for high-fidelity multi-instrument \emph{generation} under controlled conditions. Our measurements suggest compactness does not convert into generative quality: ABC models produce half the polyphony of token models at every scale despite perfect syntax and are furthest from real music (FMD $406$ vs.\ PMT $159$), because ABC has no native fields for the performance parameters that generation fidelity rewards. Dedicated text-to-MIDI systems (text2midi, MIDI-LLM~\citep{bhandari2025text2midi,wu2025midi}) each commit to one representation and recipe and, as our identical-caption evaluation reveals, can imprint their training distribution onto every output regardless of the caption.

We take a different position: \emph{the representation should preserve the performance}. If music is serialized with its performance parameters intact (fine-grained timing, per-note velocity, full multi-track texture), an LLM can learn music as a foreign language while retaining the fine timing and per-note dynamics a beat grid discards; we validate this retention \emph{distributionally} (the perceptual consequence we probe but leave open). We instantiate it as \textbf{PMT} (Performance-timed Music Tokens), released together with the corpora and harness under the project name \textsc{Agogic} after the musical term for the expressive timing deviation that a beat grid removes. Its core idea is \emph{performance-resolution parametric tokenization}: every note becomes a short sequence of interpretable parameter tokens (pitch, duration, velocity, 10\,ms time-shift), multi-instrument music flows in one flat stream, and a pretrained LLM consumes it through a small vocabulary extension.

\begin{figure*}[t]
\centering
\includegraphics[width=\textwidth]{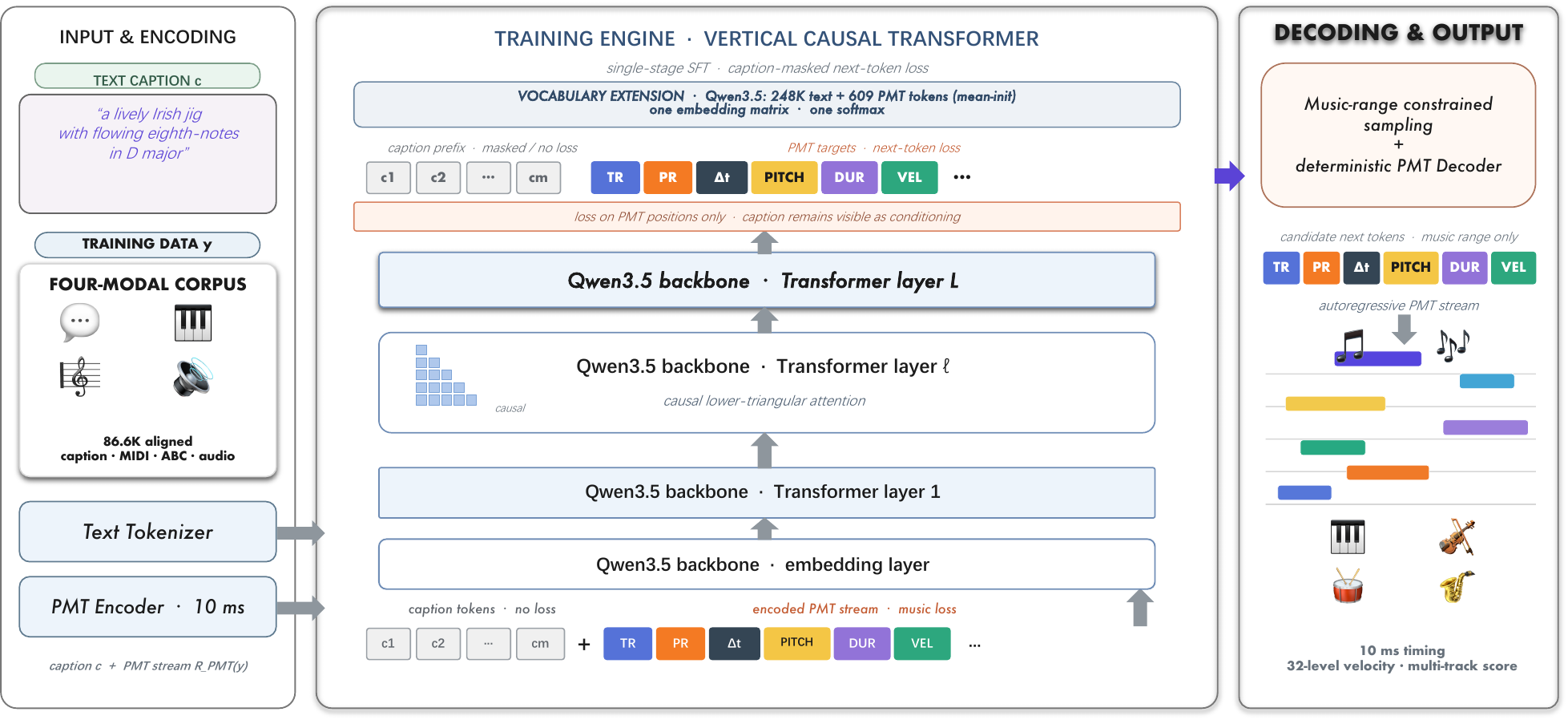}
\caption{\textbf{PMT overview.} \emph{Left:} a caption paired with a piece from the four-modality corpus. \emph{Middle:} the piece as a performance-timed token stream. \emph{Right:} decoding back to MIDI.}
\label{fig:pipeline}
\end{figure*}

PMT comprises three components. The \textbf{performance-timed parametric representation} keeps 10\,ms time-shifts and 32-level velocities (which beat grids quantize away) in a compact 609-symbol vocabulary covering multi-track music in one stream; its round-trip retains 39.1\% chord-time versus 36.7\% raw, among the highest we test; but this chord-time proxy is subset-sensitive, so we rest expressiveness on the robust round-trip onset error ($2.3$\,ms vs.\ $53$\,ms), not chord-time. The \textbf{LLM integration recipe} makes it trainable at scale: mean-initialized vocabulary extension, caption-masked loss, and music-range-constrained decoding, run unchanged from 0.8B to 27B. The \textbf{ceiling-anchored evaluation protocol} proves the design choices rather than asserting them: competing representations (beat-grid TSD, REMI, MIDI-Like, ABC) are trained under the \emph{same} backbone, data, budget, and decoding, and every texture metric is normalized by each representation's own round-trip ceiling, separating what a format can express from what a model learns to use.

Across 25+ trained models, scale barely moves distributional fidelity while the representation moves it decisively. A performance-resolution stream roughly halves the Fr\'echet Music Distance of every beat grid ($152$--$159$ vs.\ $271$--$286$) at every scale from 0.8B to 27B, with non-overlapping bootstrap intervals, and a 0.8B performance-resolution model beats a 27B beat grid: a $34\times$ parameter increase at a matched budget does not overturn the ordering. The ordering reappears on a 26M from-scratch backbone, ruling out a pretraining-prior artifact, and because that model is far better converged than the 27B at the shared budget the effect is not merely large-model undertraining. Our controls then \emph{locate} the effect rather than over-attribute it. Velocity is held at 32 levels across \emph{all} arms, so the swap isolates timing quantization. The raw onset-interval gap (JSD$_{ioi}$ $.03$ vs.\ $.15$) is largely \emph{representability} and closes once every arm is coarse-quantized to a common $60$\,ms grid, but the FMD gap \emph{survives} that coarsening (${\approx}67$ points ahead of both beat grids, $n{=}500$), so it is a robust distributional difference rather than the finer lattice. Per marginal, beat grids in fact match real note density and polyphony better, so we characterize the gap holistically and make no perceptual claim. On identical captions PMT tracks the reference note distribution, whereas MIDI-LLM and text2midi imprint their training data regardless of the caption.

Our contributions are threefold.\\
\textbf{(1) A released tokenizer, PMT.} A $609$-symbol vocabulary that streams multi-track music at ${\approx}4$ tokens/note while preserving $10$\,ms micro-timing and per-note velocity. It extends the performance-event lineage~\citep{oore2020time} with per-note velocity and multi-track program structure, which the closest LLM-vocabulary system MIDI-LLM omits, at $4$ vs.\ PerTok's $6.8$ tokens/note. We claim the controlled swap as novel, not the encoding.\\
\textbf{(2) Two released datasets.} A four-modality aligned corpus of $86{,}598$ real-music pieces, each with an expert-length English caption, MIDI, ABC and rendered audio, to our knowledge the first captioned symbolic corpus aligned per case across all four; and a $6.25$M scaled captioned corpus, the largest for music outside \emph{general}-audio collections.\\
\textbf{(3) A controlled benchmark with released harness.} The first \emph{ceiling-anchored, cross-family} representation comparison for text-conditioned symbolic generation, extending within-family swaps~\citep{fradet2023impact}. It finds that the \emph{representation Pareto-dominates model scale} for distributional fidelity, reproduced on a 26M from-scratch backbone and robust to seeds, and it yields an \emph{imprinting diagnostic} that distance-to-reference metrics expose: published text-to-MIDI systems replay their training distribution near-invariant to the caption (MIDI-LLM emits $72\%$ vs.\ $71\%$ chord-time on two disjoint domains).

The lens is not specific to music: it applies to any LLM extended to a new tokenized modality, echoing the tokenization debate in language modeling~\citep{sennrich2016neural,xue2022byt5,yu2023megabyte} and the same question now being asked of other structured artifacts a model can be taught to emit, from vector paths to character rigs and reconstructed 4D scenes~\citep{Chen_2026_CVPR_LottieGPT,wu2023iconshop,xu2020rignet,chen2026ovow}.

\section{Related Work}

\subsection{Symbolic Music Representations}
Symbolic music is serialized in three families: MIDI-event streams (performance events with fine time shifts~\citep{oore2020time}, bar/position grids in REMI~\citep{huang2020pop}, duration-based TSD standardized by MidiTok~\citep{fradet2023miditok}, and the decoders built over them~\citep{huang2019musictransformer,yu2022museformer,wang2024whole,shih2022theme,liu2022symphony}); grouped attribute tuples~\citep{hsiao2021compound,zeng2021musicbert}, which cost multi-head architectures; and text notations using ABC as plain text~\citep{wu2023tunesformer,yuan2024chatmusician,qu2025mupt,wang2025notagen,deng2024composerx}. Multi-track generation has its own line~\citep{ens2020mmm,dong2023multitrack,von2023figaro,malandro2024composer,xu2020rignet}; pressure to make the encoding itself cheap recurs in other sequence domains~\citep{chen2023towards,xu2026efficient,chen2026exploration}, as does full-song and long-sequence modelling~\citep{chen2025segment,dai2026pushing}, and the same parameterise-or-render choice recurs for character rigs and object skeletons~\citep{sun2025drive,xu2020rignet,Sun_2026_CVPR_Animator}, for reconstructed scenes and 3D assets, where dense fields stand apart from structured, editable output~\citep{poole2022dreamfusion}, for interaction video~\citep{owens2016visually,Chen_2026_CVPR_hvg3d}, for vector graphics~\citep{wu2023iconshop,xing2024svgfusion}, and for generated avatars, where geometry and texture are produced apart~\citep{chen2026ultraman}; PMT instead flattens all tracks into one performance-timed stream consumable by an off-the-shelf single-softmax LLM, descending from the performance-event lineage of \citet{oore2020time}. Learned discrete codes are a third option: MuseTok quantizes bar-wise segments with an RQ-VAE~\citep{huang2026musetok}, as discrete latent vocabularies do for other structured artifacts~\citep{weng2026garmentgpt}, and Pianoroll-Event mixes piano-roll frames with event tokens for encoding efficiency~\citep{qian2026pianoroll}; window-level latents~\citep{yao2026arima}, delay-based token scheduling~\citep{wang2026time} and 2D spectrogram grids~\citep{cheng2026modeling} continue to widen the design space; both target the encoding itself, whereas we hold the encoding fixed per arm and compare families. Prior comparisons stay within one family~\citep{fradet2023impact,wang2024exploring,qian2026beat}, and the editable-parameterisation goal is shared beyond music, from editable illustrations and character rigs to part-level 3D editing and engine-native scene reconstruction~\citep{chen2026Bunraku,xu2020rignet,weng2026feedforward3deditinglearns,poole2022dreamfusion,chen2026enginenativeeditable3dworld} or benchmark representations for \emph{classification and understanding}~\citep{zhang2023symbolic,strepetov2025symurbench} rather than the controlled, cross-family, text-conditioned \emph{generation} we study.

\subsection{Text-Conditioned Symbolic Music Generation}
Text-to-MIDI systems pair a caption encoder with a music decoder: text2midi couples FLAN-T5 with a REMI decoder~\citep{bhandari2025text2midi}; MIDI-LLM extends Llama-3.2's vocabulary with MIDI events~\citep{wu2025midi}; MuseCoco routes free text through musical attributes~\citep{lu2023musecoco}, and attribute- or emotion-conditioned lines control the same axes without free text~\citep{kang2023emogen,huang2024emotion}; the ABC line reaches text conditioning by instruction tuning or metadata prompts~\citep{wu2022exploring,lu2023musecoco}; a further route keeps the artifact in a language the model already writes and lets an interpreter realise it, whether code, vector paths or an edit instruction~\citep{chen2026paircoder,wu2023iconshop,chen2025idea23d,xing2024svgfusion,chen2026paircoderplus,lu2023musecoco,chen2023soulstyler}. Each entangles its representation with its data and recipe, so published numbers cannot isolate either. PMT is an LLM-vocabulary-extension method like MIDI-LLM, which likewise encodes fine ($10$\,ms) arrival time~\citep{thickstun2023anticipatory}, but also preserves per-note velocity (which MIDI-LLM omits) and multi-track program structure and, more importantly, differs in \emph{validation}: we hold backbone, data and budget fixed and swap only the representation, and running MIDI-LLM through our harness on identical captions reveals that it imprints its training distribution ($72\%$ chord-time against a $36.7\%$ reference) where PMT matches within 1pp. Very recent systems continue both lines, symbolic~\citep{su2026amadeus,bhandari2026text2score} and audio~\citep{copet2023simple,liu2024audioldm,evans2025stable,yuan2026yue,xu2026qwen,magentart2_2026} built on neural audio codecs and audio language models~\citep{zeghidour2021soundstream,defossez2022high,dhariwal2020jukebox,kreuk2022audiogen}; video generation is the instructive counter-case, and it separates two things that are easily conflated. Human-centric video models emit pixels and no audio at all~\citep{chen2026dancetogether,owens2016visually,miao2026framessequencestemporallyconsistent}, and the models that do emit sound alongside video produce a joint waveform~\citep{hacohen2026ltx,team2026mova}, as does the video-to-audio line before them~\citep{luo2023diff,gan2020foley,zhang2026foleycrafter}. That is \emph{sound} generation, which is not music generation, and in either case a waveform offers no handle on key, chord, instrument or note: what a model can be asked to control is fixed by what it is asked to emit; a parallel line builds music- and audio-\emph{understanding} models~\citep{gardner2023llark,liu2024music,chu2023qwen,li2024mert,ye2024mmad}, one of which (Qwen3-Omni) supplies our captions. Appendix~\ref{app:related} treats the closest of these in full.

\subsection{Evaluation of Symbolic Music Generation}
Evaluation here is fragmented: MidiCaps supplies the de-facto text--MIDI dataset~\citep{melechovsky2024midicaps} without a protocol; papers mix perplexity, feature histograms~\citep{yang2020evaluation} and rendered-audio scores~\citep{kilgour2018fr}; FMD contributes a symbolic Fr\'echet distance over CLaMP-2~\citep{retkowski2024frechet} and CLaMP-3 a symbolic-native alignment score~\citep{wu2025clamp}; surveys document the absence of standardized benchmarks~\citep{xiong2023comprehensive,ma2024foundation}; benchmarks for other structured outputs face the same problem~\citep{iwbench,fradet2023impact,zhujiu,qian2026beat,llmspark}. Two pitfalls matter for method papers: per-token loss is incomparable across vocabularies, and raw texture statistics ignore that encodings differ in what they can express at all. Our ceiling-anchored protocol addresses both.

\section{Method}

\subsection{Overview}
\textbf{Problem definition.} Given a natural-language description $c$, generate a symbolic score $y$, notes with pitch, onset, duration, velocity and instrument, that is well-formed (decodable and renderable), matches the description, and matches the statistical profile of real music. A representation $R$ serializes $y$ into a token sequence with decoder $R^{-1}$; a text-to-music method is a choice of $R$ plus a procedure for training $p_\theta(R(y)\,|\,c)$. PMT's pipeline (Figure~\ref{fig:pipeline}) has three interlocking components: the \emph{representation} encodes each piece into a flat music-token stream, the \emph{integration recipe} appends that stream to the caption and fine-tunes a pretrained backbone with the caption masked, and the \emph{evaluation protocol} validates every design choice by training competing representations under the identical recipe. The recipe holds everything but the representation fixed, which is what lets the protocol attribute quality differences to the representation.

\begin{figure}[t]
\centering
\includegraphics[width=\columnwidth]{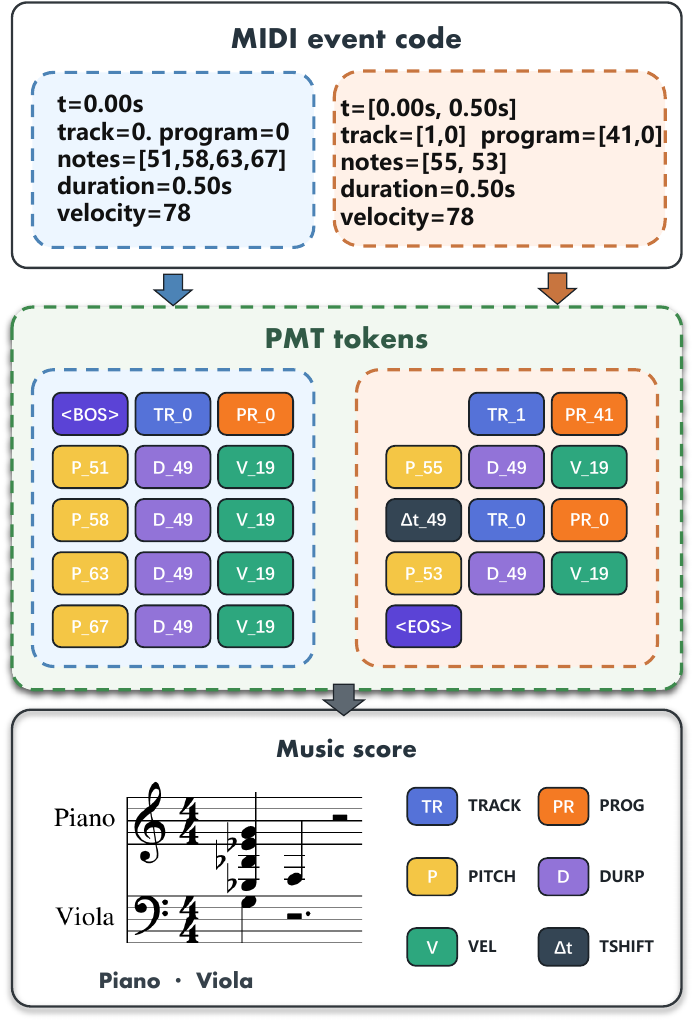}
\caption{\textbf{One excerpt through the encoding.} \emph{Top:} the MIDI events as the file stores them. \emph{Bottom:} the PMT token stream for the same notes.}
\label{fig:tokenizer}
\end{figure}

\subsection{Component 1: Performance-Timed Parametric Representation}
PMT serializes each note as an interpretable parameter tuple; Figure~\ref{fig:tokenizer} walks one excerpt through it, a four-note chord on one track followed by a single onset advance, from the file's own \texttt{note\_on} events to the token stream and back. The vocabulary comprises \texttt{PITCH} (128), \texttt{DUR} (200 bins at 10\,ms, covering 10--2000\,ms), \texttt{VEL} (32 levels), \texttt{TSHIFT} (100 bins at 10\,ms; gaps beyond 1\,s emit repeated shifts), \texttt{TRACK} (16), and \texttt{PROG} (129) plus four special tokens, 609 symbols total. Small named vocabularies of this kind are how other structured artifacts are made LLM-writable~\citep{wu2023iconshop,xing2024svgfusion}, and naming the parts is what makes an output editable after the fact~\citep{poole2022dreamfusion,xu2020rignet}, $\approx$4.0 tokens per note. Writing $[a,b)$ for a family's integer index range and $\mathcal{S}$ for the four special tokens, the music vocabulary is the disjoint union of the six parameter families,
\begin{align}
\mathcal{V}_{\text{music}}=&\ \texttt{TRACK}_{[0,16)}\sqcup\texttt{PROG}_{[0,129)}\sqcup\texttt{PITCH}_{[0,128)}\notag\\
&\sqcup\ \texttt{DUR}_{[1,201)}\sqcup\texttt{VEL}_{[0,32)}\sqcup\texttt{TSHIFT}_{[0,100)}\sqcup\mathcal{S},
\label{eq:vocab}
\end{align}
so $|\mathcal{V}_{\text{music}}|=16{+}129{+}128{+}200{+}32{+}100{+}4=609$. Formally, a piece $y=\{(p_j,o_j,d_j,v_j,c_j)\}_{j=1}^{N}$ (pitch, onset, duration, velocity, channel) is serialized in onset order as
\begin{align}
R_{\text{PMT}}(y) = \big[&\texttt{TSHIFT}_{q_t(\Delta_j)},(\texttt{TRACK}_{c_j}\,\texttt{PROG}_{g_j}),\notag\\[-2pt]
&\texttt{PITCH}_{p_j},\texttt{DUR}_{q_d(d_j)},\texttt{VEL}_{q_v(v_j)}\big]_{j=1}^{N},
\label{eq:pmt}
\end{align}
which is \emph{schematic}: the leading \texttt{TSHIFT} carries the gap to the previous onset and is omitted when $\Delta_j{=}0$, and $(\texttt{TRACK},\texttt{PROG})$ are emitted only when the instrument changes (the exact run-length gating is in Appendix~\ref{app:tokenizer}). With base time unit $\tau{=}10$\,ms and backward inter-onset gap $\Delta_j\!=\!o_j\!-\!o_{j-1}$ (taking $o_0{=}0$), the three quantizers are deterministic and fixed,
\begin{align}
q_t(\Delta)&=\min\!\big(\lfloor \Delta/\tau\rceil,\,99\big),\notag\\[-1pt]
q_d(d)&=\mathrm{clip}\!\big(\lfloor d/\tau\rceil,\,1,\,200\big),\notag\\[-1pt]
q_v(v)&=\lfloor 32v/128\rfloor,
\label{eq:quant}
\end{align}
so time-shift and duration share one 10\,ms lattice and velocity is binned to 32 levels; simultaneous onsets give $\Delta_j{=}0\Rightarrow\texttt{TSHIFT}_0$, so chords are runs of pitches at zero shift, longer gaps are tiled by repeated maximal shift tokens, and \texttt{TRACK}/\texttt{PROG} are emitted only on instrument change, giving $\approx\!4$ tokens/note (full procedure and a worked C-major-triad example in Appendix~\ref{app:tokenizer}). Two design choices define the representation. \emph{Performance resolution}: 10\,ms time-shifts and 32-level velocities preserve the micro-timing and dynamics that beat grids quantize away, with clean model-free evidence in the round-trip \emph{onset error} ($2.3$\,ms median on real music versus $53$\,ms for beat grids). The mechanism is precise: beat-grid tokenizers (REMI, TSD) place each onset on a tempo-relative grid of $R$ steps per beat, encoding it as a bar index plus an in-bar position,
\begin{equation}
\mathrm{pos}(o)=\big\lfloor (o-\mathrm{bar}(o))/\rho\big\rceil,\qquad \rho=\tfrac{60}{B\cdot R}\ \text{s},
\label{eq:beatgrid}
\end{equation}
whose step $\rho$ is \emph{tempo-dependent} ($B$ is beats-per-minute) and bounded by the grid density $R$ (typically $R\!\in\!\{4,8\}$, i.e.\ $\rho\!\approx\!63$ to $125$\,ms at $120$\,BPM, coarser when slower). PMT instead encodes the raw inter-onset interval $\Delta_j$ on the fixed $\tau{=}10$\,ms lattice of Eq.~\eqref{eq:quant}, independent of tempo and metre, so swing and rubato survive rather than snapping to a grid line. That single substitution, continuous performance time for quantized score time, is the one variable the controlled swap of Component~3 isolates. We also report a chord-time ceiling ($39.1\%$ vs.\ REMI/Beat-TSD $35.5\%$, ABC $32.5\%$) but read it with care: because the 10\,ms lattice snaps near-simultaneous onsets to \texttt{TSHIFT}$_0$, PMT's ceiling sits within a couple of points of the raw reference and its sign is subset-dependent, which is why we treat onset error, not the chord-time ceiling, as the primary expressiveness signal (Appendix~\ref{app:tokenizer}). \emph{Flat single stream}: unlike grouped encodings that need multi-head outputs, PMT's stream is consumed by any single-softmax LLM, and it interleaves all instruments so the model attends across the full arrangement. Decoding is deterministic and total, so validity is $0.99$--$1.00$, and because every token names one parameter the representation is disentangled by construction.

The decoder is exact on the quantization lattice: encode$\rightarrow$decode is the identity on quantized pitch, onset, duration, velocity and channel, so every deviation from a source piece is attributable to quantization, never to the serialization. Proposition~\ref{prop:exact} in Appendix~\ref{app:tokenizer} states and proves this, with the error bounds ($\pm$5\,ms onset, $\pm$5\,ms duration, $\pm$2 velocity levels).
\subsection{Component 2: LLM Integration Recipe}
The recipe converts a pretrained text LLM into a text-to-music generator with minimal surgery. (i) \emph{Vocabulary extension}: the 609 music symbols are appended to the backbone tokenizer (Qwen3.5~\citep{yang2025qwen3}, 248k text tokens) as mean-initialized embeddings, no architecture change, one softmax. (ii) \emph{Caption-masked SFT}: each example concatenates caption tokens $c{=}(c_1,\dots,c_m)$ and the music stream $x{=}R_{\text{PMT}}(y){=}(x_1,\dots,x_T)$ into one $1024$-token block $s$. With a per-position mask $\ell_i{=}\mathbb{1}[i{>}m]$, the backbone minimizes the \emph{caption-masked} autoregressive negative log-likelihood over caption--score pairs $\mathcal{D}$,
\begin{equation}
\mathcal{L}(\theta)=-\,\mathbb{E}_{(c,y)\sim\mathcal{D}}\;\frac{1}{|\{i:\ell_i{=}1\}|}\sum_{i=1}^{m+T}\ell_i\,\log p_\theta\big(s_i \mid s_{<i}\big),
\label{eq:loss}
\end{equation}
so gradients flow only from the $T$ music positions. Causal attention over the shared block means the music tokens still attend the full caption, so conditioning is preserved though the caption is not a prediction target. (iii) \emph{Music-constrained decoding}: at inference, logits are renormalized over the music range only,
\begin{equation}
\tilde p_\theta(v \mid c,x_{<t}) \;\propto\; p_\theta(v \mid c,x_{<t})\;\mathbb{1}\!\left[v\in\mathcal{V}_{\text{music}}\right],
\label{eq:decode}
\end{equation}
sampling until \texttt{EOS}, after which the deterministic inverse returns a guaranteed-playable score at any temperature. (iv) \emph{Scale portability}: the identical recipe runs from 0.8B to 27B, and one encoded dataset serves every scale. The recipe is deliberately minimal (single stage, no auxiliary losses, no RL), so measured quality is attributable to the representation.

\subsection{Component 3: Ceiling-Anchored Evaluation Protocol}
The protocol answers ``is each design choice right?'' with controls. \emph{Representation swap}: seven tokenizations grouped by family, \emph{performance-resolution} (PMT and PerTok~\citep{lenz2024pertok}), \emph{beat-grid} (Beat-TSD, identical to PMT except timing; REMI; MIDI-Like, all via MidiTok with an identical base configuration \emph{including 32-level velocity}, so the swap isolates timing quantization), \emph{grouped} (a Structured attribute-tuple control), and \emph{text} (ABC-as-text, whose web-pretraining prior \emph{favors} it), all trained under the same frozen splits, steps, batch, learning rate and decoding. \emph{Ceiling anchoring}: every texture metric $m$ is reported raw, as distance to the real reference, and as \emph{attainment}, where for held-out real music $Y_{\text{test}}$ and encoder/decoder $R,R^{-1}$,
\begin{equation}
A_R(m)=\frac{m(\hat{Y})}{\mathrm{ceil}_R(m)},\qquad \mathrm{ceil}_R(m)=m\!\big(R^{-1}(R(Y_{\text{test}}))\big),
\label{eq:attain}
\end{equation}
so the ceiling is what survives a \emph{model-free} round-trip, the most of $m$ that $R$ can physically carry (a beat grid cannot round-trip sub-beat chords). Attainment separates expressiveness, a property of the \emph{format}, from learnability, the fraction a \emph{model} realizes, the only fair basis for comparing formats of different capacity. \emph{Guarded metrics}: FMD over CLaMP-2 for distributional fidelity, CLaMP-3 for caption alignment with all systems routed through one MIDI$\to$MTF path, and per-token loss used strictly within a representation, since across vocabularies it is meaningless.

\section{Experiments}

\textbf{Dataset.} We build and release a four-modality corpus of 86{,}598 real-music pieces, each aligned across an English caption, sliced MIDI, ABC and rendered audio, from four sources (The Session folk, GiantMIDI classical piano~\citep{kong2022giantmidi}, a genre-balanced Lakh~\citep{raffel2016learning} subset, and jazz). Splits are frozen and source-stratified (84{,}576 train / 2{,}022 test, seed 42) and PMT-encoded lengths are heavy-tailed (median 835 tokens), motivating the 1024-token block that covers $58\%$ of pieces unclipped. Table~\ref{tab:datasets} places the corpus among existing collections: the large raw-MIDI datasets supply scale but no language supervision, multi-track audio--MIDI sets pair synthesized audio but no captions, and the one captioned dataset (MidiCaps) is single-modality, whereas ours is the only corpus aligning \emph{four} modalities per piece. Because the largest raw collections ship un-captioned, we apply the same pipeline to their deduplicated union and build a \emph{scaled} captioned corpus of \textbf{6.25M} MIDI--caption pairs, the largest captioned symbolic dataset, $37\times$ MidiCaps, deduplicated against the frozen test split. The controlled experiments use the aligned 86.6k set for its per-case alignment; \S\ref{app:curation} gives per-source counts, the caption pipeline and its quality-control filter, and redistribution terms.

\textbf{Implementation details.} All backbones are from the \textbf{Qwen3.5} family (0.8B/2B/4B/27B and a 35B-A3B MoE), fine-tuned in a single SFT stage. Music vocabularies extend the tokenizer with mean-initialized embeddings; captions are truncated to 640 tokens and masked with $-100$. Training: 10k steps ($\approx$1.9 epochs), effective batch 16, lr $8{\times}10^{-5}$ cosine, 3\% warmup, bf16, block 1024. Evaluation: temperature 0.95, top-k 60, up to 900 music tokens, music-range-constrained logits, $n{=}100$ frozen-test captions per cell, up to 3 seeds (per-scale splits, vocabulary sizes and compute in Appendix~\ref{app:training}). Two cautions on reading FMD. It is comparable only \emph{within} one protocol: the same PMT-0.8B scores $159$ against the 500-piece stratified reference and $114$ against the 475-piece deduplicated one, and the PMT-vs-beat-grid \emph{ratio} ranges $1.6$--$2.8\times$ across protocols, so the ``${\approx}2\times$'' headline is a representative midpoint. And it carries sampling noise: a real-vs-real floor is $43$ at $n{=}100$, while the PMT-vs-beat-grid gap ($159$ vs.\ $272$) clears that floor. We run no per-comparison significance tests and coverage thins at scale, so the \emph{large} gaps are the claim and finer orderings are trends.

\subsection{Main Results: PMT vs.\ Alternative Representations}
\begin{table}[t]
\centering
\small
\setlength{\tabcolsep}{3.5pt}
\begin{tabular}{llcccc}
\toprule
Size & Repr. & FMD$\downarrow$ & JSD$_{ioi}\downarrow$ & Chord\%$\rightarrow$ & Att.$\uparrow$ \\
\midrule
\multicolumn{6}{l}{\emph{Reference: FMD 0, JSD$_{ioi}$ 0, chord-time 36.7\%}}\\
\midrule
0.8B & \textbf{PMT} & \textbf{159$\pm$8} & \textbf{0.032$\pm$0.002} & \textbf{36.2$\pm$5.6} & \textbf{0.92} \\
0.8B & Beat-TSD & 286$\pm$1 & 0.152$\pm$0.008 & 30.8$\pm$0.9 & 0.87 \\
0.8B & REMI & 285$\pm$1 & 0.152$\pm$0.007 & 30.2$\pm$1.7 & 0.85 \\
0.8B & MIDI-Like & 285$\pm$2 & 0.160$\pm$0.006 & 26.6$\pm$2.0 & 0.75 \\
0.8B & PerTok & 188$\pm$7 & \textbf{0.011$\pm$0.001} & 30.2$\pm$2.6 & 0.86 \\
0.8B & Structured & 287 & 0.121 & 30.3 & 0.85 \\
0.8B & ABC & 406 & 0.079 & 16.2 & 0.50 \\
\midrule
27B & \textbf{PMT} & \textbf{156} & \textbf{0.048} & \textbf{35.8$\pm$0.5} & \textbf{0.92} \\
27B & REMI & 272 & 0.144$\pm$0.005 & 28.8$\pm$0.4 & 0.81 \\
27B & MIDI-Like & 285 & 0.148 & 27.9 & 0.78 \\
\bottomrule
\end{tabular}
\caption{Controlled comparison on 100 frozen-test captions per cell (mean$\pm$std over seeds; single value $=$ one seed); identical backbone, data, budget, decoding, validity 0.99--1.00. \textbf{FMD}$\downarrow$ $=$ CLaMP-2 Fr\'echet Music Distance; \textbf{JSD$_{ioi}$}$\downarrow$ $=$ inter-onset-interval divergence; \textbf{Chord\%}$\rightarrow$/\textbf{Att.}$\uparrow$ $=$ chord-time closer-to-reference and its ceiling-normalized attainment, a coarser proxy that is mixture-confounded per source (Appendix~\ref{app:persource}).}
\label{tab:main}
\end{table}

Table~\ref{tab:main} shows only the two extremes of the grid; the 2B, 4B and 9B blocks behave
identically and are in Appendix~\ref{app:extgrid}. It orders the representations PMT $>$ Beat-TSD $\approx$ PerTok $\approx$ REMI $>$ MIDI-Like\footnote{Three similarly-named entities appear in this paper and are distinct: \textbf{MIDI-Like} is a MidiTok \emph{tokenization} (a controlled-swap arm here); \textbf{MIDI-LLM}~\citep{wu2025midi} and \textbf{MIDILM}~\citep{li2026midilm} are two separate published \emph{systems} used as external baselines.} $\gg$ ABC. Performance resolution is not unique to PMT: PerTok, another performance-resolution encoding, joins PMT in the low-FMD/low-JSD$_{ioi}$ corner and even edges it on onset divergence ($.011$ vs.\ $.032$), corroborating that performance resolution ($10$\,ms timing \emph{and} per-note velocity), not the beat grid, is the binding distinction. PMT's edge over PerTok is \emph{completeness}, not timing: PerTok spends $6.8$ tokens/note against PMT's $4.0$ and its stream is single-track, so the arrangement PMT carries in one stream is outside PerTok's reach (Appendix~\ref{app:extgrid}).

\textbf{Chord-time is a coarse, mixture-confounded proxy.} Stress-testing our own headline statistic, a per-source breakdown (Appendix~\ref{app:persource}) shows the aggregate flatters no arm uniformly: PMT \emph{over}-harmonizes the folk majority ($25\%$ vs.\ a $13\%$ reference) and \emph{under}-fills chordal classical ($50\%$ vs.\ $82\%$), so its blended match ($36\%$ vs.\ $36.7\%$) is partly cancellation across a $69\%$-folk mixture. We therefore rest the quality claim on the full-distribution metrics, FMD and the timing/duration divergences, which are not chord-count cancellations. Unlike the chord-time cancellation, that advantage does \emph{not} wash out per source: split into folk and non-folk, PMT leads Beat-TSD on FMD in \emph{both} ($67$ vs.\ $220$ and $337$ vs.\ $492$).

\textbf{Efficiency and ceilings.} PMT pays a moderate token cost ($4.03$ tokens/note against ABC's $2.84$) and converts it: it has the highest round-trip ceiling ($39.1\%$ chord-time vs.\ $35.5\%$ for beat grids) and, at $0.8$B, the highest attainment of it ($.92$), while compactness alone buys nothing, ABC being leanest and worst. The efficiency table, the model-free encode/decode probe and wall-clock generation rates are in Appendix~\ref{app:structural}.
\subsection{Scaling Behavior}

The clean reading of scale is cross-representation (Appendix~\ref{app:extgrid}, Figure~\ref{fig:scaling}): PMT's FMD band ($152$--$159$ across $0.8$B--$27$B) lies entirely below every beat-grid arm's ($272$--$286$), so the efficiency claim is a Pareto statement, not a fragile single point. The curves do not cross at
any scale, so no amount of backbone size moves a beat grid into PMT's distributional regime. Fitting FMD against $\log_{10}$-parameters, every arm's slope is within noise of flat, leaving a near-constant ${\approx}125$-point PMT-vs-beat-grid offset that scale does not erode. We do not extrapolate a crossover from a fit this flat over $1.5$ decades; the defensible statement is that there is \emph{no} closing trend within range, so representation, not parameter count, is the binding variable at every scale we measure.

\textbf{From-scratch validation.} A web-pretrained backbone could favor some representations through prior exposure (it demonstrably favors ABC). Repeating the swap with a 26M decoder-only Transformer trained \emph{from scratch} on MAESTRO~\citep{hawthorne2018enabling}, changing only the tokenization, reproduces the ordering (PMT FMD $125.1\pm11.7$ over four seeds vs.\ REMI's $344.4\pm9.8$) and transfers to POP909~\citep{wang2020pop909}, so two independent backbone families agree and the pretraining-prior confound is closed. The ordering also holds on \emph{multi-instrument} material trained from scratch, the case a solo-piano reproduction leaves open: on a $1{,}208$-piece genre-balanced subset PMT reaches FMD $287.9\pm1.0$ (three seeds) against $395.3\pm1.1$ for Beat-TSD, with non-overlapping intervals (Appendix~\ref{app:ablate}).

\textbf{Ablations and controls.} Eight controls (Appendix~\ref{app:ablate}) delimit the effect rather than only supporting it. Two matter most. Coarsening PMT's timing to a beat-grid resolution on a fixed absolute grid moves its FMD toward the beat-grid arms, and removing PMT's performance fields one at a time hurts under the from-scratch MAESTRO setup (dropping velocity degrades FMD from $106.1$ to $697$), so the performance parameters are load-bearing. The recipe is not the driver: unmasked caption$+$music loss and unconstrained decoding both leave the ordering intact, while the \emph{representation} family does matter, swapping any MIDI-token encoding for ABC halves attainment ($.69$--$.92\!\to\!.46$--$.51$) and cuts polyphony ($4.1$--$5.8\!\to\!3.0$) under identical conditions. Training-seed std is $\leq$2pp for baselines and ${\approx}6$pp for PMT at $0.8$B; repeated evaluation bounds sampling noise at ${\approx}2.5$pp.
\subsection{Comparison with Published Systems}
Run through our harness on identical captions, \emph{both} published systems imprint their training distributions rather than following the caption: MIDI-LLM emits $72\%$ chord-time and text2midi $66\%$ with $17.3$ max-polyphony, near-invariant to what the caption asks for (Appendix~\ref{app:baselines}, which also shows the full generation pipeline side by side).

\textbf{Public-benchmark transfer (MidiCaps).}
We evaluate on 200 captions from MidiCaps, text2midi's training domain, disjoint from our training subset, with identical extractors and the text2midi adherence metrics (Tempo-Bin, TBT $\pm$10\%; Correct-Key).

\begin{table}[t]
\centering
\small
\setlength{\tabcolsep}{3pt}
\begin{tabular}{lcccccc}
\toprule
System & FMD$\downarrow$ & Valid. & TBT & CK & IF1 & Poly \\
\midrule
\textbf{PMT-4B} 86.6k & \textbf{196} & \textbf{1.00} & .44 & .10 & .46 & 10.6 \\
\textbf{PMT-4B} 6.25M & \textbf{211} & \textbf{1.00} & .44 & .17 & .47 & 10.2 \\
MIDI-LLM & 314 & \textbf{1.00} & .44 & .32 & \textbf{.63} & 11.7 \\
MIDILM & 351 & \textbf{1.00} & .70 & .39 & .54 & 16.8 \\
Amadeus & 380 & \textbf{1.00} & \textbf{.90} & \textbf{.57} & .62 & 72.1 \\
text2midi & 421 & 0.99 & .44 & --$^{\dagger}$ & .34 & 18.5 \\
ChatMusician & 460 & \textbf{1.00} & .45 & .36 & .29 & \phantom{0}3.8 \\
\midrule
\emph{Real reference} & \emph{0} & -- & -- & -- & -- & \emph{6.1} \\
\bottomrule
\end{tabular}
\caption{MidiCaps public test, 200 captions disjoint from our training data, identical extractors and one FMD protocol for every row. \textbf{FMD}$\downarrow$ $=$ distance to the real note distribution; \textbf{TBT} $=$ tempo within $\pm$10\% of the caption$\uparrow$; \textbf{CK} $=$ correct-key rate$\uparrow$; \textbf{IF1} $=$ instrument-F1 against the caption's named instruments$\uparrow$; \textbf{Poly} $=$ max-polyphony. Both PMT rows are zero-shot. Full table, the PMT-27B and zero-shot-LLM rows and the CLAP/audio comparison: Appendix~\ref{app:baselines}, Table~\ref{tab:midicapsfull}. $^{\dagger}$text2midi's decoded MIDI is unparseable by the music21 key-finder.}
\label{tab:midicaps}
\end{table}

Two readings. First, \emph{every} model, ours included, regresses toward its training mean (PMT over-harmonizes folk, Appendix~\ref{app:persource}); the pathology we diagnose is caption-\emph{invariance}, and it is domain-invariant, MIDI-LLM emits a near-identical distribution ($71\%$ vs.\ $72\%$ chord-time) on two disjoint caption domains and does not improve with scale, whereas PMT shifts with the testbed and transfers zero-shot to parity on theirs (Figure~\ref{fig:imprint}). A ``more-texture-is-better'' reading would reward MIDI-LLM's constant $71$--$72\%$; scoring distance to each domain's own real reference is what exposes it as caption-invariance instead, and MIDI-LLM sits near the MidiCaps reference only because MidiCaps \emph{is} its home distribution, which is why the test is run in both directions. Second, on the shared tempo axis in-domain text2midi ($.44$) buys \emph{no} advantage over zero-shot PMT ($.44$); what separates systems is the explicit-attribute axes, where systems that condition on them directly lead. Single-domain benchmarks favor systems trained on that domain; bidirectional evaluation is the corrective.

\paragraph{Where PMT leads and where it does not.} Table~\ref{tab:midicaps} separates two axes that are usually reported as one. On distance to the real note distribution, the axis our thesis targets, PMT is decisively closest: same-protocol FMD $196$ for the 86.6k-trained PMT-4B and $211$ for the 6.25M one-epoch model, against $314$--$460$ for every published system, $1.5$--$2.3\times$ farther. On caption-attribute adherence PMT trails: MIDILM~\citep{li2026midilm} and Amadeus~\citep{su2026amadeus} lead tempo and key (Amadeus TBT $.90$/CK $.57$ vs.\ PMT $.44$/$.17$), and zero-shot general LLMs lead key adherence by copying the stated key (GPT-5.5 CK $.66$) while being playable for only $58\%$ of captions. No competitor is both adherent \emph{and} close-to-real: the adherence leaders reach it by over-texturing, at $16.8$ and $72.1$ max-polyphony against a $6.1$ reference. Cross-modal comparison against audio generators does not cleanly rank quality on this task, caption-matched real music scores \emph{lowest} on CLAP ($31.5$), so we report but do not rank by it (Appendix~\ref{app:baselines}).

\textbf{Robustness to piece complexity.} Public text-to-MIDI benchmarks skew short, so we also build a \emph{tiered} benchmark of 240 real pieces stratified by PMT-token length into four levels, each with its own real reference. PMT-4B and MIDILM both stay $100\%$ valid at every tier, but PMT is closer to the per-tier real distribution at \emph{every} level ($298/348/357/396$ vs.\ $422/378/377/436$) while MIDILM over-textures throughout, so the distributional advantage persists across complexity and degrades gracefully with length (Appendix~\ref{app:structural}).

\subsection{Qualitative Results and Perceptual Scope}
PMT generations are caption-conditioned in texture: folk captions yield single-line dance tunes, while classical, jazz and pop captions yield sustained chordal writing with up to $13$--$17$ simultaneous notes. Appendix~\ref{app:staff} shows the same caption realized by each representation, in piano roll and engraved: the beat-grid arms collapse to sparser tracks and ABC to a two-staff keyboard reduction under identical backbone and budget, so the representation, not the backbone, bounds the notated texture. An automatic listener (Qwen3-Omni-30B) is a \emph{weak} proxy whose signal is material-dependent: it prefers micro-timing on expressive solo piano ($60\%$, $54/90$, $p{=}0.07$) but not on the folk-heavy generated contrast ($42\%$), so we make \emph{no} perceptual claim from it. A pre-registered human study on expressive material is collecting; we report no perceptual result from it until it reaches its target sample, so the timing claim here stays distributional (Appendix~\ref{app:humanstudy}).
We probe audibility with an automatic listener (Qwen3-Omni-30B) and read it as a \emph{weak} proxy that scopes where a human study should look rather than delivering a verdict. The signal is \emph{material-dependent}: on a model-free control isolating micro-timing on expressive solo piano the judge prefers the micro-timing-preserving version $60\%$ of the time ($54/90$, directional, $p{=}0.07$), while on the folk-heavy generated contrast it does \emph{not} prefer PMT ($42\%$), consistent with a source that is already grid-quantized. We therefore make \emph{no} perceptual claim from this probe. A human listening study on the material the probe points at is underway; the instrument, the pre-registered analysis and the response template are in Appendix~\ref{app:humanstudy}, and full listener details in Appendix~\ref{app:perceptual}.
\subsection{Limitations}
Five limitations bound our claims; supporting numbers are in Appendix~\ref{app:limitations}. \emph{(1) Near-duplicate splits}: the folk source has multiple community ``settings'' of one tune and our split is setting-level, so near-duplicates can straddle train/test and inflate held-out likelihood; we exclude $>$4B points from NLL analyses, and generation-side metrics survive a title-level dedup control. \emph{(2) Audibility untested}: the human study is still collecting, and our automatic proxy on the representation contrast is inconclusive, so we assert no perceptual difference in either direction. \emph{(3) Seed coverage at scale}: 27B rows are two-seed, 35B-A3B single-seed. \emph{(4) Caption adherence}: PMT's key adherence is weak (CK $.10$ vs.\ MIDI-LLM's $.32$), but this is an absolute-key \emph{conditioning} gap rather than tonal incompetence, and a decode-time attribute constraint more than doubles \emph{both} instrument-F1 ($.28\!\to\!.60$) and Correct-Key ($.16\!\to\!.35$) at unchanged validity and no distributional cost, without retraining; steering a generator with language rather than retraining it is a recurring pattern~\citep{lu2023musecoco,wu2022exploring}. Our captions are also machine-generated by a single captioner and audited for prompt leakage but not factual accuracy; describing audiovisual content in language is an active task in its own right~\citep{gardner2023llark,li2024mert,chu2023qwen}, and the accuracy of that description bounds any corpus built this way. \emph{(5) Corpus scale in the controlled study}: the experiments use the 86.6k four-modality set, trading raw count for per-case alignment; as a first step to scale, a PMT-4B trained one epoch on the 6.25M corpus cuts held-out FMD $2.0\times$ ($291.7\!\to\!144.8$) and lifts MidiCaps key adherence (CK $.10\!\to\!.17$) at a $15$-point FMD cost there and nearly doubles key adherence. We report that one-epoch model, not a converged one.

\section{Conclusion}
We asked whether, for an LLM extended to symbolic music, the token representation or the model's size sets the ceiling on distributional fidelity, and answered with a controlled swap: fix the backbone, data, budget and decoding, and change only the representation. The representation wins. A performance-resolution stream we release (PMT) roughly halves the Fr\'echet Music Distance of every beat grid, is not overtaken by a $34\times$ parameter increase, and reproduces on a 26M from-scratch backbone and on a second performance-resolution tokenizer, so the effect belongs to the representation \emph{class}, not one vocabulary. We are careful about scope: the gap is distributional, robust to onset coarse-quantization but shown neither audible by our automatic probe nor better-conditioned, and PMT trails dedicated systems on caption adherence, a decode-time conditioning problem we already partly close at no distributional cost. The lesson outlasts music: how a new modality is serialized can dominate parameter count, and a controlled swap establishes this before any model is scaled, wherever a generative model is asked for a structured artifact rather than a dense signal~\citep{xu2020rignet,wu2023iconshop}. We release the datasets, harness and 25+ checkpoints so the next representation claim can be \emph{measured}, not asserted.

\bibliography{mrgb_refs}

\clearpage
\appendix
\setcounter{secnumdepth}{2}

\clearpage
\section*{Contents of the Appendix}
\vspace{1pt}
{\setlength{\parindent}{0pt}\setlength{\parskip}{0pt}
\newcommand{\tocrow}[4]{\noindent\hspace*{#1}%
  \parbox[t]{\dimexpr\columnwidth-#1-1.7em\relax}{\raggedright #2 #3\par}%
  \hfill\makebox[1.6em][r]{#2 \pageref{#4}}\par}
\newcommand{\tocA}[2]{\vspace{3.2pt}\tocrow{0pt}{\small\bfseries}{\ref{#1}\hspace{0.4em}#2}{#1}\nobreak}
\newcommand{\tocB}[2]{\tocrow{0.7em}{\footnotesize}{#2}{#1}}
\newcommand{\tocC}[2]{\tocrow{1.7em}{\scriptsize}{#2}{#1}}

\tocA{app:related}{Extended Related Work}
\tocC{apx:contributions-in-full}{Contributions in full}
\tocB{apx:symbolic-music-representations}{Symbolic Music Representations}
\tocB{apx:text-conditioned-symbolic-music-generation}{Text-Conditioned Symbolic Music Generation}
\tocB{apx:evaluation-of-symbolic-music-generation}{Evaluation of Symbolic Music Generation}
\tocA{app:tokenizer}{Tokenizer: Encoding Procedure and Design Rationale}
\tocC{apx:why-the-chord-time-ceiling-is}{Why the chord-time ceiling is the secondary signal}
\tocC{apx:the-integration-recipe-in-full}{The integration recipe in full}
\tocC{apx:exactness-of-the-decoder}{Exactness of the decoder}
\tocC{apx:encoding-details-behind-component-1}{Encoding details behind Component 1}
\tocC{apx:the-ceiling-anchored-protocol-in-full}{The ceiling-anchored protocol in full}
\tocB{apx:component-3-ceiling-anchored-evaluation-protocol}{Component 3: Ceiling-Anchored Evaluation Protocol}
\tocC{apx:decoder-and-formal-guarantees}{Decoder and formal guarantees}
\tocC{apx:encoding-procedure}{Encoding procedure}
\tocC{apx:worked-example}{Worked example}
\tocC{apx:design-rationale}{Design rationale}
\tocB{app:worked}{A Worked Example: One Caption, Its Notes, Its Tokens}
\tocB{app:paired}{Paired Data Samples: MIDI, Tokens, Score}
\tocC{apx:ex-A051}{Example A05-1}
\tocC{apx:ex-C101}{Example C10-1}
\tocC{apx:ex-A131}{Example A13-1}
\tocC{apx:ex-B018}{Example B018}
\tocA{app:staff}{Engraved-Score Comparison}
\tocB{apx:qualitative-results}{Qualitative Results}
\tocA{app:extmetrics}{Extended Distributional Metrics}
\tocC{apx:model-free-fmd-decomposition}{Model-free FMD decomposition}
\tocA{app:declen}{Decode Budget and the Length Bias of FMD}
\tocC{apx:the-budget-was-binding}{The budget was binding}
\tocC{apx:longer-is-worse-under-fmd-and}{Longer is worse under FMD, and only because it is longer}
\tocC{apx:consequences}{Consequences}
\tocA{app:ablate}{Design-Parameter Ablations}
\tocC{apx:interpretability-disentangled-editable-tokens}{Interpretability: disentangled, editable tokens}
\tocC{apx:from-scratch-validation-in-full}{From-scratch validation in full}
\tocB{apx:ablation-studies}{Ablation Studies}
\tocC{apx:from-scratch-backbone-full-numbers}{From-scratch backbone (full numbers)}
\tocA{app:persource}{Per-Source Texture Breakdown}
\tocC{apx:chord-time-as-a-proxy-the}{Chord-time as a proxy: the full caveat}
\tocC{apx:per-source-fmd-the-advantage-is}{Per-source FMD: the advantage is not a folk artifact}
\tocA{app:perceptual}{Perceptual Evaluation Details}
\tocC{apx:qualitative-results-and-the-automatic-listener}{Qualitative results and the automatic listener in full}
\tocB{apx:qualitative-results-2}{Qualitative Results}
\tocB{sec:perceptual}{Automatic MLLM Listener}
\tocB{app:humanstudy}{Human listening study: instrument and pre-registered protocol}
\tocC{apx:design}{Design}
\tocC{apx:pre-registered-analysis}{Pre-registered analysis}
\tocA{app:baselines}{Extended Baseline Comparisons}
\tocC{apx:midicaps-public-test-all-rows-and}{MidiCaps public test: all rows and columns}
\tocC{apx:newest-systems-and-cross-modal-baselines}{Newest systems and cross-modal baselines in full}
\tocC{apx:newest-dedicated-systems-2025-26}{Newest dedicated systems (2025--26)}
\tocC{apx:cross-modal-and-zero-shot-baselines}{Cross-modal and zero-shot baselines (this appendix)}
\tocC{apx:reading-table-row-by-row}{Reading Table~\ref{tab:midicaps} row by row}
\tocC{apx:cross-modal-audio-generators}{Cross-modal (audio generators)}
\tocC{apx:zero-shot-general-llms}{Zero-shot general LLMs}
\tocC{apx:audio-domain-view-of-over-texturing}{Audio-domain view of over-texturing}
\tocA{app:structural}{Additional Structural Metrics}
\tocB{apx:efficiency-and-ceilings}{Efficiency and Ceilings}
\tocB{apx:comprehensive-generative-metrics-structure-diversity-novelty}{Comprehensive generative metrics: structure, diversity, novelty, harmony}
\tocA{app:extgrid}{Extended Representation Grid}
\tocC{apx:the-controlled-grid-at-every-scale}{The controlled grid at every scale}
\tocC{apx:reading-the-controlled-grid-in-full}{Reading the controlled grid in full}
\tocC{apx:scaling-behaviour-in-full}{Scaling behaviour in full}
\tocA{app:training}{Training Details}
\tocC{apx:implementation-details-in-full}{Implementation details in full}
\tocC{apx:per-scale-sharding-and-cost}{Per-scale sharding and cost}
\tocC{apx:trainability-check}{Trainability check}
\tocC{apx:statistical-rigor}{Statistical rigor}
\tocA{app:limitations}{Limitations Details}
\tocC{apx:limitations-in-full}{Limitations in full}
\tocB{apx:limitations}{Limitations}
\tocC{apx:near-duplicate-split-control}{Near-duplicate split control}
\tocC{apx:key-adherence-breakdown}{Key-adherence breakdown}
\tocC{apx:embedder-robustness-the-same-gap-under}{Embedder robustness: the same gap under CLaMP-3}
\tocC{apx:reference-robustness-the-same-gap-against}{Reference robustness: the same gap against a source-stratified reference}
\tocC{apx:6-25m-corpus-scaling-trajectory}{6.25M-corpus scaling trajectory}
\tocA{app:curation}{Scalable Audio-to-Symbolic Data Curation}
\tocC{apx:corpus-construction-in-full}{Corpus construction in full}
\tocC{apx:aligned-86-6k-corpus}{Aligned 86.6k corpus}
\tocC{apx:acquisition-all-sources}{Acquisition, all sources}
\tocC{apx:streaming-transcription}{Streaming transcription}
\tocC{apx:quality-curation}{Quality curation}
\tocC{apx:deduplication-and-captioning}{Deduplication and captioning}
\tocC{apx:captioning-at-scale}{Captioning at scale}
\tocC{apx:caption-quality-control}{Caption quality control}
\tocC{apx:positioning-among-caption-datasets}{Positioning among caption datasets}
\tocC{apx:per-source-acquisition-curation-funnel}{Per-source acquisition--curation funnel}
\tocC{apx:corpus-statistics-duration-density-token-length}{Corpus statistics: duration, density, token length}
\tocA{app:gallery}{Generation Gallery: Caption, Piano Roll and Engraved Score}

\par}
\clearpage

\section{Extended Related Work}
\label{app:related}

\paragraph{Contributions in full.}\label{apx:contributions-in-full} The main text states the three contributions compactly; this is the full form, with the downstream direction we see for the measurement lens.

Our contributions are threefold. \textbf{(1) A new music tokenizer, PMT}, which we release: it streams multi-track music as a compact $609$-symbol vocabulary at ${\approx}4$ tokens/note while preserving $10$\,ms micro-timing and per-note velocity. PMT sits in the performance-event lineage~\citep{oore2020time}; its delta is per-note velocity and multi-track program structure, which the closest LLM-vocabulary system MIDI-LLM omits, at higher token efficiency than the comparable PerTok ($4$ vs.\ $6.8$ tokens/note). We do not claim the flat performance-event encoding as novel, but we are the first to subject it to a controlled swap against every alternative. \textbf{(2) Two new datasets}: a four-modality aligned corpus of $86{,}598$ real-music pieces, each with an expert-length English caption, MIDI, ABC, and rendered audio under per-source licenses (to our knowledge the first captioned symbolic corpus aligned per case across all four modalities), and a $6.25$M scaled captioned corpus, the largest captioned music dataset to date (of any domain; exceeded only by \emph{general}-audio corpora, \S\ref{app:curation}). \textbf{(3) A new controlled benchmark} (with released harness): the first \emph{ceiling-anchored, cross-family} representation comparison (MIDI-event, grouped, and ABC encodings) for text-conditioned symbolic generation, extending within-family swaps~\citep{fradet2023impact}. Its headline finding is that the \emph{representation Pareto-dominates model scale} for distributional fidelity, a 0.8B performance-resolution model beats a 27B beat grid, reproduced on a 26M from-scratch backbone (including a multi-instrument control) and robust to seeds; alongside an \emph{imprinting diagnostic}, published text-to-MIDI systems replay their training distribution near-invariant to the caption (MIDI-LLM emits $72\%$ vs.\ $71\%$ chord-time on two disjoint domains), which distance-to-reference metrics expose. The measurement lens is not specific to music: it applies to any LLM extended to a new tokenized modality, echoing the broader tokenization debate in language modeling (subword~\citep{sennrich2016neural} vs.\ byte-level and tokenizer-free models~\citep{xue2022byt5,yu2023megabyte}), and running parallel to the effort to give LLMs an editable, parametric output format for other structured artifacts~\citep{fradet2023miditok,wu2022exploring}.
This appendix carries the full related-work discussion; the main text keeps the citations and the
distinctions the argument turns on.

\subsection{Symbolic Music Representations}\label{apx:symbolic-music-representations}
Symbolic music is serialized in three families. MIDI-event tokenizations unroll performances into event streams: Oore et al.'s performance events with fine time shifts~\citep{oore2020time}, bar/position grids in REMI~\citep{huang2020pop}, and duration-based TSD, standardized by MidiTok~\citep{fradet2023miditok}. Transformer decoders over such streams scale structure from the Music Transformer~\citep{huang2019musictransformer} to long-sequence attention~\citep{yu2022museformer} and cascaded-diffusion~\citep{wang2024whole} successors. Grouped encodings pack each note into an attribute tuple (Compound Word~\citep{hsiao2021compound}, Octuple~\citep{zeng2021musicbert}) at the cost of multi-head architectures. Multi-track generation has its own line, MMM~\citep{ens2020mmm}, the Multitrack Music Transformer~\citep{dong2023multitrack}, and FIGARO~\citep{von2023figaro}; PMT instead flattens all tracks into one performance-timed stream consumable by an off-the-shelf single-softmax LLM. Text notations (TunesFormer, ChatMusician, MuPT, NotaGen~\citep{wu2023tunesformer,yuan2024chatmusician,qu2025mupt,wang2025notagen}) use ABC as plain text. Prior comparisons stay within one family (time encodings of MIDI tokens~\citep{fradet2023impact}; patching within ABC~\citep{wang2024exploring}; uniform temporal-step tokens~\citep{qian2026beat}) or benchmark representations for \emph{classification and understanding}~\citep{zhang2023symbolic,strepetov2025symurbench} rather than the controlled, cross-family, text-conditioned \emph{generation} we study. PMT descends from the performance-event lineage of \citet{oore2020time}, cast as a compact parametric vocabulary covering multi-track music in one stream and integrated with a pretrained LLM; unlike prior work, we do not merely adopt a representation but \emph{prove} the design choices against REMI, beat-grid TSD, MIDI-Like, and ABC under a fixed backbone and budget.

\subsection{Text-Conditioned Symbolic Music Generation}\label{apx:text-conditioned-symbolic-music-generation}
Text-to-MIDI systems pair a caption encoder with a music decoder: text2midi couples FLAN-T5 with a REMI decoder~\citep{bhandari2025text2midi}; MIDI-LLM extends Llama-3.2's vocabulary with MIDI events~\citep{wu2025midi}; MuseCoco routes free text through musical attributes~\citep{lu2023musecoco}. The ABC line reaches text conditioning via instruction tuning (ChatMusician) or metadata prompts (NotaGen), with the earliest pretrained-checkpoint study by \citet{wu2022exploring}. Each system entangles its representation with its data and recipe, so published numbers cannot isolate either. PMT is an LLM-vocabulary-extension method like MIDI-LLM, which likewise encodes fine (10\,ms) arrival-time via AMT-style tokens~\citep{thickstun2023anticipatory}, but \emph{also} preserves per-note velocity (which MIDI-LLM omits) and multi-track program structure and, more importantly, differs in \emph{validation}: we hold backbone, data, and budget fixed and swap the representation, running MIDI-LLM through our harness on identical captions, which reveals that it imprints its training distribution (72\% chord-time against a 36.7\% reference) where PMT matches within 1pp. \emph{Very recent systems (2025--26)} continue both lines: symbolic ones, Amadeus~\citep{su2026amadeus} (autoregressive notes plus discrete-diffusion attributes) and Text2Score~\citep{bhandari2026text2score} (plan-then-execute bar-level ABC); audio ones, MusicGen~\citep{copet2023simple}, AudioLDM\,2~\citep{liu2024audioldm}, Stable Audio Open~\citep{evans2025stable}, built on neural audio codecs~\citep{zeghidour2021soundstream,defossez2022high} and audio language models~\citep{dhariwal2020jukebox,kreuk2022audiogen}, and full-song YuE~\citep{yuan2026yue}, the concurrent Qwen-Music~\citep{xu2026qwen}, and real-time on-device models (Magenta RealTime~2~\citep{magentart2_2026}). A parallel line builds \emph{music-understanding} models rather than generators, LLark~\citep{gardner2023llark}, Music Understanding LLaMA~\citep{liu2024music}, and audio-language backbones~\citep{chu2023qwen,li2024mert}, one of which (Qwen3-Omni) supplies our captions. Qwen-Music also extends a Qwen backbone, but its ``Music-Tokenizer'' is a $25$\,Hz neural \emph{audio} codec producing sung waveforms, whereas PMT's is a \emph{symbolic} performance representation producing an editable score; they share a backbone but target opposite deliverables (a fixed recording vs.\ a re-renderable score) on disjoint metric families, so they are complementary rather than directly comparable. We compare against representative open systems from both sides under one protocol (\S Experiments): symbolic systems share our MIDI metrics directly, while audio generators are compared cross-modally by rendering our MIDI to audio (FAD/CLAP), though FAD structurally favors native-audio timbre, so we read caption-alignment as the fairer axis. Our contribution is orthogonal to any single competitor: a \emph{representation} result, isolated by controlled swaps, that any of these systems could adopt.

\subsection{Evaluation of Symbolic Music Generation}\label{apx:evaluation-of-symbolic-music-generation}
Evaluation here is fragmented: MidiCaps provides the de-facto text--MIDI dataset~\citep{melechovsky2024midicaps} without a protocol; papers mix perplexity, feature histograms~\citep{yang2020evaluation}, and rendered-audio scores such as Fr\'echet Audio Distance~\citep{kilgour2018fr}; FMD contributes a symbolic Fr\'echet distance over CLaMP-2 embeddings~\citep{retkowski2024frechet} and CLaMP-3 a symbolic-native text-alignment score~\citep{wu2025clamp}; surveys document the absence of standardized benchmarks~\citep{xiong2023comprehensive,ma2024foundation}. Two pitfalls matter for method papers: per-token loss is incomparable across vocabularies, and raw texture statistics ignore that encodings differ in what they can express at all. Our ceiling-anchored protocol addresses both, loss is used only within a representation and texture is normalized by each representation's round-trip ceiling, which is what makes our cross-representation validation trustworthy rather than self-serving.

\section{Tokenizer: Encoding Procedure and Design Rationale}
\label{app:tokenizer}

\paragraph{Why the chord-time ceiling is the secondary signal.}\label{apx:why-the-chord-time-ceiling-is} The main text states this in short form; the full argument, with the subset-dependence that motivates it, is:

We also report a chord-time ceiling (39.1\% vs.\ REMI/Beat-TSD 35.5\%, MIDI-Like 35.7\%, ABC 32.5\%) but read it with care: because the 10\,ms lattice snaps near-simultaneous onsets to \texttt{TSHIFT}$_0$, PMT's ceiling sits within a couple of points of the raw reference and its sign is subset-dependent (39.1\% vs.\ the 36.7\% reference on the 500-piece set of Table~\ref{tab:eff}, but 35.5\% vs.\ the 37.1\% reference on the 300-piece subset of Table~\ref{tab:rt-shift}). Because PMT round-trips chord-time near-losslessly, its ceiling $\approx$ the reference, so its attainment denominator is essentially the reference and attainment can drift slightly above $1$ under sampling noise, a further reason we treat onset error, not the chord-time ceiling, as the primary expressiveness signal.

\paragraph{The integration recipe in full.}\label{apx:the-integration-recipe-in-full} Expanded form of Component~2, with the conditioning rationale spelled out.

The recipe converts a pretrained text LLM into a text-to-music generator with minimal surgery. (i) \emph{Vocabulary extension}: the 609 music symbols are appended to the backbone tokenizer (Qwen3.5~\citep{yang2025qwen3}, 248k text tokens) as mean-initialized embeddings, no architecture change, one softmax. (ii) \emph{Caption-masked SFT}: with $\mathcal{V}{=}\mathcal{V}_{\text{text}}\cup\mathcal{V}_{\text{music}}$ ($|\mathcal{V}_{\text{music}}|{=}609$), each example concatenates caption tokens $c{=}(c_1,\dots,c_m)$ and the music stream $x{=}R_{\text{PMT}}(y){=}(x_1,\dots,x_T)$ of Eq.~\eqref{eq:pmt} into one $1024$-token block $s{=}(c_1,\dots,c_m,x_1,\dots,x_T)$. With a per-position label mask $\ell_i{=}\mathbb{1}[i{>}m]$ ($0$ on the caption prefix, $1$ on the music stream), the backbone $p_\theta$ minimizes the \emph{caption-masked} autoregressive negative log-likelihood over caption--score pairs $\mathcal{D}$,
\begin{equation}
\mathcal{L}(\theta)=-\,\mathbb{E}_{(c,y)\sim\mathcal{D}}\;\frac{1}{|\{i:\ell_i{=}1\}|}\sum_{i=1}^{m+T}\ell_i\,\log p_\theta\big(s_i \mid s_{<i}\big),
\end{equation}
so gradients flow only from the $T$ music positions (caption positions carry the ignore label $-100$). The backbone's language understanding is reused as a fixed conditioning prefix while the loss shapes only the music continuation $p_\theta(x\mid c)$; causal self-attention over the shared block means the music tokens still attend the full caption, so conditioning is preserved though the caption is not a prediction target. (iii) \emph{Music-constrained decoding}: at inference a caption is completed autoregressively with logits renormalized over the music range only,
\begin{equation}
\tilde p_\theta(v \mid c,x_{<t}) \;\propto\; p_\theta(v \mid c,x_{<t})\;\mathbb{1}\!\left[v\in\mathcal{V}_{\text{music}}\right],
\end{equation}
sampling $x_t\sim\tilde p_\theta$ until \texttt{EOS}; the deterministic inverse $y{=}R_{\text{PMT}}^{-1}(x)$ of Eq.~\eqref{eq:pmt} then returns a guaranteed-playable score at any temperature (validity $0.99$--$1.00$). (iv) \emph{Scale portability}: the identical recipe runs from 0.8B to 27B (bf16, gradient checkpointing, effective batch 16, DeepSpeed ZeRO-3 beyond 4B), and one encoded dataset serves every scale. The recipe is deliberately minimal (single stage, no auxiliary losses, no RL), so measured quality is attributable to the representation and the method is drop-in reproducible.

\paragraph{Exactness of the decoder.}\label{apx:exactness-of-the-decoder}
\begin{proposition}[Round-trip identity on the lattice]
\label{prop:exact}
Let $y=\{(p_j,o_j,d_j,v_j,c_j)\}_{j=1}^{N}$ be a piece and let $\hat y = R_{\mathrm{PMT}}^{-1}(R_{\mathrm{PMT}}(y))$. Write $\bar o_j,\bar d_j,\bar v_j$ for the lattice-quantized onset, duration and velocity implied by Eq.~\eqref{eq:quant}. Then $\hat y$ has exactly $N$ notes and, for every $j$, $\hat p_j=p_j$, $\hat c_j=c_j$, $\hat o_j=\bar o_j$, $\hat d_j=\bar d_j$ and $\hat v_j=\bar v_j$. In particular $R_{\mathrm{PMT}}^{-1}\circ R_{\mathrm{PMT}}$ is the identity on pieces whose attributes already lie on the lattice, and otherwise deviates only by quantization: with $\tau=10$\,ms, every inter-onset gap and every duration is within $\tau/2$ of its true value, and every velocity is within half a bin, $|\hat v_j-v_j|\le 2$ MIDI velocity units. Onsets are accumulated gaps, so their per-note deviation is bounded by $j\tau/2$ rather than $\tau/2$; measured on real music the median is $2.3$\,ms.
\end{proposition}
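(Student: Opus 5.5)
The plan is to argue by induction over the emitted stream, using the decoder $R^{-1}_{\mathrm{PMT}}$ as a small state machine. Its state is a running clock $t$ (initially $0$) and a current instrument $(c,g)$. A \texttt{TSHIFT}$_k$ token advances $t$ by $k\tau$. A $(\texttt{TRACK}_c\,\texttt{PROG}_g)$ pair updates the instrument. Each \texttt{PITCH}$_p$\,\texttt{DUR}$_d$\,\texttt{VEL}$_u$ triple emits one note $(p,\,t,\,d\tau,\,\text{bin-centre}(u),\,c)$.

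\textbf{Step 1: count and order.} Eq.~\eqref{eq:pmt} emits exactly one \texttt{PITCH}/\texttt{DUR}/\texttt{VEL} triple per note, in onset order. The \texttt{TSHIFT} and \texttt{TRACK}/\texttt{PROG} tokens are not notes, and the families in Eq.~\eqref{eq:vocab} are disjoint, so the decoder can parse the stream unambiguously. Hence $\hat y$ has exactly $N$ notes, and the $j$-th emitted note corresponds to the $j$-th source note.

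\textbf{Step 2: the invariant.} I prove by induction on $j$ that just before note $j$'s triple is read, the decoder's state satisfies two conditions:
\begin{itemize}
\item the clock equals $\bar o_j := \sum_{i\le j}\tau\,\tilde q_t(\Delta_i)$, where $\tilde q_t(\Delta)=\lfloor\Delta/\tau\rceil$ is the untruncated shift realised by the tiling of repeated maximal shifts;
\item the current instrument is $(c_j,g_j)$.
\end{itemize}
The clock part follows from the tiling rule: gaps are emitted as $\lfloor\cdot\rceil$-rounded multiples split into blocks of at most $99$, and their sum reproduces $\tilde q_t(\Delta_j)$. The \texttt{TSHIFT}$_0$ omission when $\Delta_j=0$ is harmless. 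The instrument part follows because $(\texttt{TRACK},\texttt{PROG})$ is emitted exactly when the instrument changes, so in every other case the inherited state is already correct. This defines $\bar o_j$ as the lattice onset "implied by Eq.~\eqref{eq:quant}". Pitch is copied verbatim, duration decodes to $\bar d_j=\tau q_d(d_j)$, and velocity decodes to $\bar v_j$. So the identity on the lattice is immediate, and if all attributes are already lattice points, every quantizer is the identity.

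\textbf{Step 3: error bounds.} For an unclipped gap or duration, rounding gives $|\tau\lfloor x/\tau\rceil-x|\le\tau/2$. The duration clip at $[1,200]$ is an exception that must be stated, e.g.\ durations beyond $2$\,s or below $5$\,ms. The accumulated onset is a sum of $j$ rounding errors, so the triangle inequality gives $|\hat o_j-o_j|\le j\tau/2$. For velocity, bins have width $4$, so decoding to the bin centre gives error at most $2$.

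\textbf{Main obstacle.} The hard step is the precise gap semantics. Eq.~\eqref{eq:quant} caps $q_t$ at $99$, yet the text says longer gaps are tiled. I must fix the reading in which the total emitted shift equals $\lfloor\Delta/\tau\rceil$, since with a literal single capped token the onset claim would fail for gaps over $1$\,s. A second wrinkle is that rounding each backward gap independently, rather than rounding the absolute onset, is what produces the $j\tau/2$ drift rather than $\tau/2$. I would define $\bar o_j$ as that cumulative sum so the identity statement holds exactly, and note that the velocity bound assumes centre decoding.
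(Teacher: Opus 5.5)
Your proposal is correct and follows the paper's proof essentially step for step. Both arguments use unambiguous parsing from the disjoint token families, copy pitch and channel (with the instrument carried forward by the run-length gating), and treat onsets as the running sum of shifts, where long-gap tiling changes the summands but not the sum. Both read duration and velocity back as bin values, and both accumulate the per-gap $\tau/2$ rounding bound to $j\tau/2$ alongside the half-bin velocity bound.

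Your decoder-state induction is a more explicit version of that argument. The caveats you raise are genuine imprecisions that the paper's proof passes over:
\begin{itemize}
\item The clip in $q_d$ breaks the $\tau/2$ duration bound for durations below $5$\,ms or beyond about $2$\,s.
\item The cap in $q_t$ must be read through the tiling rule.
\item $\bar o_j$ has to be defined as the cumulative sum of rounded gaps, not the rounded absolute onset, for $\hat o_j=\bar o_j$ to hold.
\end{itemize}
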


\begin{proof}
Encoding emits, per note in onset order, one \texttt{TSHIFT} carrying $q_t(\Delta_j)$, the pair $(\texttt{TRACK}_{c_j},\texttt{PROG}_{g_j})$ when the instrument changes, then $\texttt{PITCH}_{p_j}$, $\texttt{DUR}_{q_d(d_j)}$ and $\texttt{VEL}_{q_v(v_j)}$. The six families occupy disjoint index ranges (Eq.~\eqref{eq:vocab}), so each emitted token is attributable to exactly one attribute and the stream parses unambiguously into $N$ five-tuples: \texttt{PITCH} starts a note, and the two tokens following it are by construction its duration and velocity.

Pitch and channel are copied, hence recovered exactly. Onsets: decoding accumulates $\hat o_j=\sum_{i\le j}\tau\,q_t(\Delta_i)$, and since $q_t$ is applied to the backward gaps of a common origin $o_0=0$, this telescopes to the quantized onset $\bar o_j$; gaps exceeding what one shift spans are tiled by maximal shifts plus a remainder, which changes the summands but not the sum. Duration and velocity are read back as the bin values $\tau\,q_d(d_j)$ and $q_v(v_j)$, i.e.\ $\bar d_j$ and $\bar v_j$. Instrument identity persists between change points by the run-length gating, so $\hat c_j=c_j$ for every note, not only at change points.

The bounds follow from the quantizers alone: $q_t$ and $q_d$ round to the nearest multiple of $\tau$, giving at most $\tau/2$ per gap and per duration, which the onset sum accumulates over $j$ gaps; and $q_v(v)=\lfloor 32v/128\rfloor$ has bin width $4$ MIDI velocity units, i.e.\ at most $2$ units once decoded to the bin center. No step of the serialization introduces error beyond these, which is the property the controlled comparison needs: any deviation a round-trip shows is quantization, never the encoding.
\end{proof}

\paragraph{Encoding details behind Component 1.}\label{apx:encoding-details-behind-component-1} The run-length gating, the lattice ranges, and why the chord-time ceiling is subset-dependent.

so time-shift and duration share one 10\,ms lattice (durations $10$--$2000$\,ms, shifts up to $\sim\!1$\,s) and velocity is binned to 32 levels; simultaneous onsets give $\Delta_j{=}0\Rightarrow\texttt{TSHIFT}_0$, so chords are runs of pitches at zero shift, longer gaps are tiled by repeated maximal shift tokens plus a remainder, and \texttt{TRACK}/\texttt{PROG} are emitted only on instrument change (the run-length gating giving $\approx\!4$ tokens/note; full procedure and a worked C-major-triad example in this appendix). Two design choices define the representation. \emph{Performance resolution}: 10\,ms time-shifts and 32-level velocities preserve the micro-timing and dynamics that beat grids quantize away, with clean model-free evidence in the round-trip \emph{onset error} (2\,ms median on real music versus 53\,ms for beat grids; Fig.~\ref{fig:tokenizer_compare}). The mechanism is precise: beat-grid tokenizers (REMI, TSD) place each onset on a tempo-relative grid of $R$ steps per beat, encoding it as a bar index plus an in-bar position,

\paragraph{The ceiling-anchored protocol in full.}\label{apx:the-ceiling-anchored-protocol-in-full} Expanded form of Component~3, including the full rationale for attainment as the comparison basis.

\subsection{Component 3: Ceiling-Anchored Evaluation Protocol}\label{apx:component-3-ceiling-anchored-evaluation-protocol}
The protocol answers ``is each design choice right?'' with controls. \emph{Representation swap}: we compare seven tokenizations grouped by family, \emph{performance-resolution} (PMT and PerTok~\citep{lenz2024pertok}, both keeping sub-grid onset timing), \emph{beat-grid} (Beat-TSD, identical to PMT except timing; REMI; MIDI-Like, all via MidiTok, multi-track single-stream, identical base configuration \emph{including 32-level velocity}, so the swap isolates timing quantization, not velocity granularity), \emph{grouped} (a Structured attribute-tuple control), and \emph{text} (ABC-as-text, the MuPT setting, whose web-pretraining prior \emph{favors} it). All arms train under the same frozen splits, steps, effective batch, learning rate, and decoding, with multiple seeds. \emph{Ceiling anchoring}: every texture metric $m$ (chord-time fraction, max polyphony) is reported three ways, raw value $m(\hat{Y})$ on generations $\hat{Y}$, distance to the real-data reference, and \emph{attainment}, where for held-out real music $Y_{\text{test}}$ and encoder/decoder $R,R^{-1}$,
\begin{equation}
A_R(m)=\frac{m(\hat{Y})}{\mathrm{ceil}_R(m)},\qquad \mathrm{ceil}_R(m)=m\!\big(R^{-1}(R(Y_{\text{test}}))\big),
\end{equation}
where the ceiling $\mathrm{ceil}_R(m)$ is the metric surviving a \emph{model-free} encode--decode round-trip, the most of attribute $m$ that $R$ can physically carry (e.g.\ a beat grid cannot round-trip sub-beat chords). Attainment separates expressiveness (the ceiling, a property of the \emph{format}) from learnability (the fraction a \emph{model} realizes), the only fair basis for comparing formats of different capacity, and it lets a smaller-vocabulary format that fully uses its ceiling outscore a richer one that under-uses its own. \emph{Guarded metrics}: distributional fidelity by FMD over CLaMP-2 embeddings; caption alignment by CLaMP-3 with all systems routed through the same MIDI$\to$MTF path; per-token loss used strictly within a representation (across vocabularies it is meaningless, PMT's NLL is the \emph{highest} while its generations are the best, because its prediction target is the richest).
\paragraph{Decoder and formal guarantees.}\label{apx:decoder-and-formal-guarantees} The decoder $R_{\text{PMT}}^{-1}$ inverts Eq.~\eqref{eq:pmt} in closed form. Reading the stream as per-note tuples with the running \texttt{TRACK}/\texttt{PROG} carried forward, and writing $t_j,\delta_j,u_j$ for the \texttt{TSHIFT}/\texttt{DUR}/\texttt{VEL} bin indices emitted for note $j$, it reconstructs
\begin{equation}
\begin{aligned}
&\hat{o}_j=\textstyle\sum_{i\le j}\tau\,t_i,\quad \hat{d}_j=\tau\,\delta_j,\quad \hat{v}_j=4u_j{+}2,\\
&\hat{p}_j=p_j,\quad \hat{c}_j=c_j,
\end{aligned}
\label{eq:decode_inv}
\end{equation}
i.e.\ absolute onset is the running sum of time-shifts, duration and velocity are bin centers, and pitch/track pass through unchanged. Two properties follow. \textbf{(P1) Bounded distortion}: each quantizer rounds to the nearest lattice point, so every note's round-trip error is bounded a priori ($|\hat{d}_j-d_j|\le\tau/2$, per-shift timing error $\le\tau/2=5$\,ms, $|\hat{v}_j-v_j|\le 2$ velocity units); accumulated onset error stays small in practice ($2.3$\,ms median on real music vs.\ $53$\,ms for beat grids, Fig.~\ref{fig:tokenizer_compare}), and every music-range string is decodable, so decodability is $1.00$ by construction (empirical validity $0.99$--$1.00$). \textbf{(P2) Attribute locality}: since Eq.~\eqref{eq:decode_inv} expresses each output attribute through a \emph{single} token family, the Jacobian $\partial R_{\text{PMT}}^{-1}/\partial(\text{token family})$ is block-diagonal across $(p,o,d,v,c)$, so replacing one family's tokens moves that attribute and provably leaves the others exactly fixed. This is the formal basis for the single-family edits in \S Interpretability (shifting \texttt{VEL} by $+8$ bins changes velocity by $+26.1$ with $0$ pitch change), which entangled position/duration encodings (REMI) cannot guarantee.

This appendix gives the full encoding procedure summarized by Eq.~\eqref{eq:pmt} and the design choices behind it.

\paragraph{Encoding procedure.}\label{apx:encoding-procedure} Given a MIDI file whose instruments $\{I_k\}$ each carry a program $g_k$ (or a drum flag) and notes $(p,o,d,v)$ (pitch, onset, duration, velocity), the encoder $R_{\text{PMT}}$ produces the flat stream of Eq.~\eqref{eq:pmt} in five deterministic steps.
\begin{enumerate}
\item \emph{Flatten and order.} Collect all notes across instruments into one list, each tagged with its track index and program, and sort by onset $o$, breaking ties by ascending pitch so that chord tones follow a canonical low-to-high order.
\item \emph{Inter-onset differencing.} For note $j$ compute the gap $\Delta_j{=}o_j{-}o_{j-1}$ to its predecessor (with $o_0{=}0$, so $\Delta_1{=}o_1$). Timing is thus stored \emph{relatively}, making the stream shift-invariant and encoding a chord ($\Delta{=}0$) natively as a run of pitches at zero shift.
\item \emph{Quantize.} Map $(\Delta_j,d_j,v_j)$ to bin indices through the fixed quantizers of Eq.~\eqref{eq:quant}, giving $\texttt{TSHIFT}_{q_t(\Delta_j)}$, $\texttt{DUR}_{q_d(d_j)}$, $\texttt{VEL}_{q_v(v_j)}$, with time-shift and duration sharing one $\tau{=}10$\,ms lattice and velocity on a $32$-level lattice.
\item \emph{Long-gap tiling.} A single shift spans at most $100\tau{=}1000$\,ms, so a longer gap is emitted as $\lfloor\Delta_j/(100\tau)\rfloor$ maximal $\texttt{TSHIFT}_{99}$ tokens followed by one remainder shift, and arbitrarily long rests stay exactly representable.
\item \emph{Run-length instrument gating.} Emit a $\texttt{TRACK}_{c}\,\texttt{PROG}_{g}$ pair only when the active instrument changes (drum tracks map to $\texttt{PROG}_{128}$); otherwise emit just the per-note tuple $[\texttt{TSHIFT},\texttt{PITCH},\texttt{DUR},\texttt{VEL}]$. Gating the program tokens by change, rather than repeating them per note, is what yields the ${\approx}4$ tokens/note in Table~\ref{tab:eff}.
\end{enumerate}
Decoding (Eq.~\eqref{eq:decode_inv}) inverts each step in closed form: the running sum of \texttt{TSHIFT} recovers absolute onsets, \texttt{DUR}/\texttt{VEL} are read as bin centers, and the carried \texttt{TRACK}/\texttt{PROG} routes each note to its instrument.

\paragraph{Worked example.}\label{apx:worked-example} A real caption, its notes and the exact token stream the shipped tokenizer emits for them are given in Appendix~\ref{app:worked}, together with the round trip.

\begin{table}[h]
\centering\small\setlength{\tabcolsep}{4pt}
\begin{tabular}{lrl}
\toprule
Family & \#Sym. & Encodes \\
\midrule
\texttt{TRACK}  & 16  & track index (multi-instrument routing) \\
\texttt{PROG}   & 129 & GM program; $128{=}$drums \\
\texttt{PITCH}  & 128 & MIDI pitch $0$--$127$ \\
\texttt{DUR}    & 200 & note duration, $10$--$2000$\,ms @ $10$\,ms \\
\texttt{VEL}    & 32  & velocity level (dynamics) \\
\texttt{TSHIFT} & 100 & inter-onset gap, $10$--$1000$\,ms @ $10$\,ms \\
special         & 4   & \textsc{bos}/\textsc{eos}/\textsc{pad}/\textsc{bar} \\
\midrule
Total & \textbf{609} & flat stream, ${\approx}4$ tokens/note \\
\bottomrule
\end{tabular}
\caption{PMT vocabulary layout ($\mathcal{V}_{\text{music}}$, Eq.~\eqref{eq:vocab}).}
\label{tab:vocab}
\end{table}

\paragraph{Design rationale.}\label{apx:design-rationale} Four choices define the representation. \emph{(a) Flat single stream.} Unlike grouped encodings (Compound Word, Octuple) that pack a note's attributes into one super-token and therefore need a multi-head output, PMT emits each parameter as its own token, so any off-the-shelf LLM consumes it with a single softmax and no architecture change. \emph{(b) Onset-order interleaving.} All instruments share one time-ordered stream rather than being serialized track-by-track, so self-attention sees the full vertical texture at each moment, which is what lets the model coordinate multi-instrument arrangement (the capacity the single-track ablation removes, \S Ablation Studies). \emph{(c) $10$\,ms performance lattice.} Storing the raw inter-onset interval on a fixed absolute lattice, rather than a tempo-relative beat grid (Eq.~\eqref{eq:beatgrid}), preserves swing, rubato, and micro-timing; this is the single variable the controlled swap isolates and the source of PMT's onset-timing advantage. \emph{(d) Explicit velocity.} A dedicated $32$-level \texttt{VEL} token keeps per-note dynamics that beat grids discard; ablating it costs $40$ FMD points (\S Ablations). The vocabulary is deliberately small ($609$ symbols): it adds under $0.3\%$ to the $248$k-token backbone embedding, so the pretrained text distribution is barely perturbed and the music tokens are learned as a compact ``foreign vocabulary'' on top of it.

\subsection{A Worked Example: One Caption, Its Notes, Its Tokens}
\label{app:worked}
The vocabulary layout is in Table~\ref{tab:vocab} and the encoding rules in this appendix; this is one
generation carried through all three, so both can be checked rather than taken on trust. Every number
below is emitted by the shipped tokenizer on the shipped file.

\paragraph{The caption.} The text conditioning the generation, verbatim:

\begin{quote}
\emph{A bright and yearning film-score cue for piano, strings and soft synth pad, with a rising chord progression and a lyrical melody that climbs into a warm climax, uplifting and cinematic.}
\end{quote}

\paragraph{The MIDI.} The first 6 notes of the generated file, in onset order, exactly as
\texttt{pretty\_midi} reads them. Onset and duration are seconds, velocity is the MIDI 0--127 scale,
and the program is the General MIDI instrument the model chose for that track.

\par\medskip\noindent\begin{minipage}{\columnwidth}
\centering\small\setlength{\tabcolsep}{3pt}
\begin{tabular}{rrrrrl}
\toprule
onset (s) & dur (s) & pitch & vel & trk & GM program \\
\midrule
0.00 & 0.50 & 51 & 78 & 0 & 0 (Acoustic Grand Piano) \\
0.00 & 0.50 & 58 & 78 & 0 & 0 (Acoustic Grand Piano) \\
0.00 & 0.50 & 63 & 78 & 0 & 0 (Acoustic Grand Piano) \\
0.00 & 0.50 & 67 & 78 & 0 & 0 (Acoustic Grand Piano) \\
0.00 & 0.50 & 55 & 78 & 1 & 41 (Viola) \\
0.50 & 0.50 & 53 & 78 & 0 & 0 (Acoustic Grand Piano) \\
\bottomrule
\end{tabular}
\captionof{table}{\textbf{The notes.} The first 6 notes of the generated file as
\texttt{pretty\_midi} reads them; \texttt{trk} is the index its \texttt{TRACK} token carries.}
\label{tab:wenotes}
\end{minipage}\par\medskip

\paragraph{The MIDI, as the file stores it.} The table above is already an abstraction: a MIDI file
holds \emph{delta-time events}, and note duration is not stored at all but implied by a later
\texttt{note\_on} of velocity zero. This is the same six notes at that level, which is what the
tokenizer reads:

\par\medskip\noindent\begin{minipage}{\columnwidth}
\centering\scriptsize
\begin{tabular}{@{}l@{}}
\texttt{ticks\_per\_beat=220, format 1, 3 tracks} \\
\texttt{track 0:} \\
\texttt{  dt=  0  set\_tempo  500000 us/beat} \\
\texttt{  dt=  0  time\_signature  4/4} \\
\texttt{  dt=  1  end\_of\_track} \\
\texttt{track 1:   (= trk 0 in the table above)} \\
\texttt{  dt=  0  program\_change  ch=0 program=0} \\
\texttt{  dt=  0  note\_on  ch=0 note=51 vel=78} \\
\texttt{  dt=  0  note\_on  ch=0 note=58 vel=78} \\
\texttt{  dt=  0  note\_on  ch=0 note=63 vel=78} \\
\texttt{  dt=  0  note\_on  ch=0 note=67 vel=78} \\
\texttt{  dt=220  note\_on  ch=0 note=51 vel=0} \\
\texttt{  dt=  0  note\_on  ch=0 note=53 vel=78} \\
\texttt{  dt=  0  note\_on  ch=0 note=58 vel=0} \\
\texttt{  dt=  0  note\_on  ch=0 note=63 vel=0} \\
\texttt{  dt=  0  note\_on  ch=0 note=67 vel=0} \\
\texttt{  dt=220  note\_on  ch=0 note=53 vel=0} \\
\texttt{  dt=  1  end\_of\_track} \\
\texttt{track 2:   (= trk 1 in the table above)} \\
\texttt{  dt=  0  program\_change  ch=1 program=41} \\
\texttt{  dt=  0  note\_on  ch=1 note=55 vel=78} \\
\texttt{  dt=220  note\_on  ch=1 note=55 vel=0} \\
\texttt{  dt=  1  end\_of\_track} \\
\end{tabular}
\captionof{table}{\textbf{The same notes as MIDI events.} Every event the file holds, meta track
included. Note-off is a \texttt{note\_on} of velocity zero, and \texttt{dt} is a delta time in
ticks, so seconds follow only from \texttt{ticks\_per\_beat} together with \texttt{set\_tempo}.}
\label{tab:weevents}
\end{minipage}\par\medskip

\noindent Two properties of this format drive the design. Timing is in \emph{ticks} against a
\texttt{ticks\_per\_beat} grid plus a tempo map, so an absolute onset in seconds is only recoverable by
integrating tempo changes; and note-off is a separate event, so duration must be reconstructed by pairing.
PMT stores an absolute $10$\,ms shift and an explicit duration instead, which is what makes the stream
tempo-free and invertible without pairing.

\paragraph{The tokens.} Encoding those 6 notes gives 27 tokens
(4.50 per note including the two sentinels):

\begin{center}\small
\parbox{0.95\columnwidth}{\raggedright \texttt{<BOS>} \texttt{TRACK\_0} \texttt{PROG\_0} \texttt{PITCH\_51} \texttt{DURP\_49} \texttt{VEL\_19} \texttt{PITCH\_58} \texttt{DURP\_49} \texttt{VEL\_19} \texttt{PITCH\_63} \texttt{DURP\_49} \texttt{VEL\_19} \texttt{PITCH\_67} \texttt{DURP\_49} \texttt{VEL\_19} \texttt{TRACK\_1} \texttt{PROG\_41} \texttt{PITCH\_55} \texttt{DURP\_49} \texttt{VEL\_19} \texttt{TSHIFT\_49} \texttt{TRACK\_0} \texttt{PROG\_0} \texttt{PITCH\_53} \texttt{DURP\_49} \texttt{VEL\_19} \texttt{<EOS>}}
\end{center}

\noindent Reading it. \texttt{<BOS>} opens and \texttt{<EOS>} closes the stream. Notes are emitted in
(onset, track, pitch) order, and each contributes
\texttt{PITCH}\,$\to$\,\texttt{DURP}\,$\to$\,\texttt{VEL}, preceded by a \texttt{TSHIFT} when the
onset advances and by a \texttt{(TRACK, PROG)} pair when the instrument changes. Both are \emph{omitted
otherwise}, which is the economy of the scheme: the four notes of the opening chord share one onset, so
three of them cost three tokens each and no timing token at all, and a passage that stays on one instrument
pays for its program once.

\emph{Indices count units, starting at one.} \texttt{TSHIFT}$_k$ is a gap of $(k{+}1)\tau$ and
\texttt{DURP}$_k$ a duration of $(k{+}1)\tau$ with $\tau{=}10$\,ms, since a zero-length shift or note
is never emitted; the $0.50$\,s gap and the $0.50$\,s durations above therefore appear as
\texttt{TSHIFT\_49} and \texttt{DURP\_49}. A gap longer than one token spans ($100\tau{=}1$\,s) is
tiled by repeated maximal shifts. \texttt{VEL}$_k$ is one of 32 bins over the MIDI $0$--$127$
range, so velocity $78$ lands in bin $19$.

The four special symbols are \texttt{<PAD>}, \texttt{<BOS>}, \texttt{<EOS>} and \texttt{<BAR>}.
\texttt{<BAR>} belongs to the beat-grid mode and is unused in the performance-timed mode reported
throughout the paper, which is why no bar marker appears here. Total music vocabulary: 609
symbols.

\paragraph{Four encoder rules, located in real generations.} The example above shows a chord and one
onset advance. These four cases were found by scanning generated files for a window that actually
exhibits each behaviour and then encoding that window, so the strings below are emitted by the encoder:

\par\medskip\noindent\begin{minipage}{\columnwidth}
\centering\scriptsize
\begin{tabular}{@{}p{0.30\columnwidth}p{0.64\columnwidth}@{}}
\toprule
\textbf{Long gap, tiled} \newline \emph{the 1.52\,s gap needs two tokens: one maximal \texttt{TSHIFT\_99} ($1$\,s) plus the remainder} & \texttt{TRACK\_0} \texttt{PROG\_0} \texttt{PITCH\_67} \texttt{DURP\_41} \texttt{VEL\_17} \texttt{TSHIFT\_99} \texttt{TSHIFT\_51} \texttt{PITCH\_55} \texttt{DURP\_199} \texttt{VEL\_11} \texttt{PITCH\_67} \texttt{DURP\_199} \texttt{VEL\_13} \texttt{TSHIFT\_0} \texttt{PITCH\_60} \texttt{DURP\_199} \texttt{VEL\_11} \\[4pt]
\textbf{Drum track} \newline \emph{a drum track is carried by \texttt{PROG\_128}, outside the 0--127 GM range} & \texttt{TRACK\_0} \texttt{PROG\_48} \texttt{PITCH\_83} \texttt{DURP\_10} \texttt{VEL\_21} \texttt{TSHIFT\_0} \texttt{TRACK\_1} \texttt{PROG\_128} \texttt{PITCH\_39} \texttt{DURP\_1} \texttt{VEL\_26} \texttt{TSHIFT\_28} \texttt{TRACK\_2} \texttt{PROG\_0} \texttt{PITCH\_31} \texttt{DURP\_49} \texttt{VEL\_25} \texttt{PITCH\_57} \texttt{DURP\_17} \texttt{VEL\_20} \\[4pt]
\textbf{Instrument change} \newline \emph{each change of instrument re-emits \texttt{(TRACK, PROG)}; notes in between do not} & \texttt{TRACK\_0} \texttt{PROG\_0} \texttt{PITCH\_64} \texttt{DURP\_13} \texttt{VEL\_12} \texttt{TRACK\_1} \texttt{PROG\_40} \texttt{PITCH\_69} \texttt{DURP\_13} \texttt{VEL\_12} \texttt{PITCH\_73} \texttt{DURP\_13} \texttt{VEL\_12} \texttt{PITCH\_76} \texttt{DURP\_13} \texttt{VEL\_12} \\[4pt]
\textbf{Velocity range} \newline \emph{velocities 62, 66, 70 land in distinct \texttt{VEL} bins, so dynamics survive the encoding} & \texttt{TRACK\_0} \texttt{PROG\_0} \texttt{PITCH\_67} \texttt{DURP\_50} \texttt{VEL\_17} \texttt{TSHIFT\_0} \texttt{PITCH\_63} \texttt{DURP\_49} \texttt{VEL\_16} \texttt{PITCH\_72} \texttt{DURP\_49} \texttt{VEL\_17} \texttt{TSHIFT\_0} \texttt{PITCH\_48} \texttt{DURP\_48} \texttt{VEL\_15} \\[4pt]
\bottomrule
\end{tabular}
\captionof{table}{\textbf{Four encoder rules, each located in a real generation.} The token
strings are emitted by the shipped encoder on the window named in the left column.}
\label{tab:werules}
\end{minipage}\par\medskip

\paragraph{The round trip.} Decoding those tokens returns 6 notes with pitch and velocity
exact, maximum onset error 0.0\,ms and maximum duration error 0.0\,ms, both within the
$\tau/2{=}5$\,ms bound of Proposition~\ref{prop:exact}.

\subsection{Paired Data Samples: MIDI, Tokens, Score}
\label{app:paired}
Four generations shown as the three artefacts that must agree: the events a MIDI file actually stores, the token stream the shipped tokenizer emits for them, and the same notes engraved. Excerpts are short so the stream fits on the page and can be checked symbol by symbol; the score is engraved from the identical excerpt, so the three views cannot drift apart. Section conventions (index bases, omission rules, special symbols) are stated once in Appendix~\ref{app:worked} and not repeated per example.

\paragraph{Example A05-1.}\label{apx:ex-A051} Chosen for a chord and one onset advance: three of the four chord tones cost no timing token. Instruments: track 0: Acoustic Grand Piano (\texttt{PROG\_0}), track 1: Viola (\texttt{PROG\_41}).

\emph{Caption.}
\begin{quote}
\emph{A bright and yearning film-score cue for piano, strings and soft synth pad, with a rising chord progression and a lyrical melody that climbs into a warm climax, uplifting and cinematic.}
\end{quote}

\par\medskip\noindent\begin{minipage}{\columnwidth}
\centering\small\setlength{\tabcolsep}{3pt}
\begin{tabular}{rrrrr}
\toprule
onset (s) & dur (s) & pitch & vel & trk \\
\midrule
0.00 & 0.50 & 51 & 78 & 0 \\
0.00 & 0.50 & 58 & 78 & 0 \\
0.00 & 0.50 & 63 & 78 & 0 \\
0.00 & 0.50 & 67 & 78 & 0 \\
0.00 & 0.50 & 55 & 78 & 1 \\
0.50 & 0.50 & 53 & 78 & 0 \\
\bottomrule
\end{tabular}
\captionof{table}{\textbf{Example A05-1, the notes.} In the tokenizer's emission order.
Onset and duration in seconds, velocity on the MIDI 0--127 scale, \texttt{trk} the source track,
which is the index its \texttt{TRACK} token carries.}
\label{tab:notesA051}
\end{minipage}\par\medskip

\par\medskip\noindent\begin{minipage}{\columnwidth}
\centering\scriptsize
\begin{tabular}{@{}l@{}}
\texttt{ticks\_per\_beat=220, format 1, 3 tracks} \\
\texttt{track 0:} \\
\texttt{  dt=   0  set\_tempo  500000 us/beat} \\
\texttt{  dt=   0  time\_signature  4/4} \\
\texttt{track 1:   (= trk 0 in the table above)} \\
\texttt{  dt=   0  program\_change  ch=0 program=0} \\
\texttt{  dt=   0  note\_on  ch=0 note=51 vel=78} \\
\texttt{  dt=   0  note\_on  ch=0 note=58 vel=78} \\
\texttt{  dt=   0  note\_on  ch=0 note=63 vel=78} \\
\texttt{  dt=   0  note\_on  ch=0 note=67 vel=78} \\
\texttt{  dt= 220  note\_on  ch=0 note=51 vel=0} \\
\texttt{  dt=   0  note\_on  ch=0 note=53 vel=78} \\
\texttt{  dt=   0  note\_on  ch=0 note=58 vel=0} \\
\texttt{  dt=   0  note\_on  ch=0 note=63 vel=0} \\
\texttt{  dt=   0  note\_on  ch=0 note=67 vel=0} \\
\texttt{  dt= 220  note\_on  ch=0 note=53 vel=0} \\
\texttt{track 2:   (= trk 1 in the table above)} \\
\texttt{  dt=   0  program\_change  ch=1 program=41} \\
\texttt{  dt=   0  note\_on  ch=1 note=55 vel=78} \\
\texttt{  dt= 220  note\_on  ch=1 note=55 vel=0} \\
\end{tabular}
\captionof{table}{\textbf{Example A05-1, the same notes as MIDI events.} Every event the
file holds, meta track included. Delta times are in ticks: at \texttt{ticks\_per\_beat}${=}$220 and \texttt{set\_tempo}${=}$500000\,$\mu$s/beat, a $dt$ of $220$ is $220/220\times0.5=0.5000$\,s, the 0.50\,s of the table above.}
\label{tab:eventsA051}
\end{minipage}\par\medskip

\emph{PMT tokens} (27 for 6 notes,
4.50 per note including the two sentinels):
\begin{center}\small
\parbox{0.95\columnwidth}{\raggedright \texttt{<BOS>} \texttt{TRACK\_0} \texttt{PROG\_0} \texttt{PITCH\_51} \texttt{DURP\_49} \texttt{VEL\_19} \texttt{PITCH\_58} \texttt{DURP\_49} \texttt{VEL\_19} \texttt{PITCH\_63} \texttt{DURP\_49} \texttt{VEL\_19} \texttt{PITCH\_67} \texttt{DURP\_49} \texttt{VEL\_19} \texttt{TRACK\_1} \texttt{PROG\_41} \texttt{PITCH\_55} \texttt{DURP\_49} \texttt{VEL\_19} \texttt{TSHIFT\_49} \texttt{TRACK\_0} \texttt{PROG\_0} \texttt{PITCH\_53} \texttt{DURP\_49} \texttt{VEL\_19} \texttt{<EOS>}}
\end{center}

\emph{Engraved from the same excerpt.}
\begin{center}
\includegraphics[width=0.95\columnwidth]{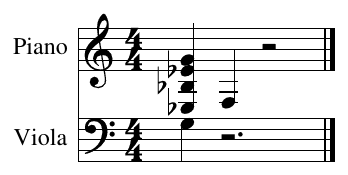}
\end{center}

\noindent\emph{Round trip:} 6 notes recovered, pitch and velocity
exact, maximum onset error 0.0\,ms and duration error
0.0\,ms, inside the $\tau/2{=}5$\,ms bound of Proposition~\ref{prop:exact}.

\paragraph{Example C10-1.}\label{apx:ex-C101} Chosen for a single-instrument passage, where the \texttt{(TRACK, PROG)} pair is paid once. Instruments: track 0: Acoustic Grand Piano (\texttt{PROG\_0}).

\emph{Caption.}
\begin{quote}
\emph{A hymn arrangement for piano, organ and string ensemble in four-part harmony, all voices sounding together throughout, dignified and warm.}
\end{quote}

\par\medskip\noindent\begin{minipage}{\columnwidth}
\centering\small\setlength{\tabcolsep}{3pt}
\begin{tabular}{rrrrr}
\toprule
onset (s) & dur (s) & pitch & vel & trk \\
\midrule
0.00 & 0.75 & 53 & 78 & 0 \\
0.00 & 0.01 & 55 & 127 & 0 \\
0.00 & 0.75 & 57 & 78 & 0 \\
0.00 & 0.75 & 60 & 78 & 0 \\
0.00 & 0.75 & 65 & 78 & 0 \\
0.75 & 0.75 & 46 & 78 & 0 \\
0.75 & 0.75 & 58 & 78 & 0 \\
0.75 & 0.38 & 62 & 78 & 0 \\
\bottomrule
\end{tabular}
\captionof{table}{\textbf{Example C10-1, the notes.} In the tokenizer's emission order.
Onset and duration in seconds, velocity on the MIDI 0--127 scale, \texttt{trk} the source track,
which is the index its \texttt{TRACK} token carries.}
\label{tab:notesC101}
\end{minipage}\par\medskip

\par\medskip\noindent\begin{minipage}{\columnwidth}
\centering\scriptsize
\begin{tabular}{@{}l@{}}
\texttt{ticks\_per\_beat=220, format 1, 2 tracks} \\
\texttt{track 0:} \\
\texttt{  dt=   0  set\_tempo  500000 us/beat} \\
\texttt{  dt=   0  time\_signature  4/4} \\
\texttt{track 1:   (= trk 0 in the table above)} \\
\texttt{  dt=   0  program\_change  ch=0 program=0} \\
\texttt{  dt=   0  note\_on  ch=0 note=53 vel=78} \\
\texttt{  dt=   0  note\_on  ch=0 note=55 vel=127} \\
\texttt{  dt=   0  note\_on  ch=0 note=57 vel=78} \\
\texttt{  dt=   0  note\_on  ch=0 note=60 vel=78} \\
\texttt{  dt=   0  note\_on  ch=0 note=65 vel=78} \\
\texttt{  dt=   4  note\_on  ch=0 note=55 vel=0} \\
\texttt{  dt= 326  note\_on  ch=0 note=46 vel=78} \\
\texttt{  dt=   0  note\_on  ch=0 note=53 vel=0} \\
\texttt{  dt=   0  note\_on  ch=0 note=57 vel=0} \\
\texttt{  dt=   0  note\_on  ch=0 note=58 vel=78} \\
\texttt{  dt=   0  note\_on  ch=0 note=60 vel=0} \\
\texttt{  dt=   0  note\_on  ch=0 note=62 vel=78} \\
\texttt{  dt=   0  note\_on  ch=0 note=65 vel=0} \\
\texttt{  dt= 167  note\_on  ch=0 note=62 vel=0} \\
\texttt{  dt= 163  note\_on  ch=0 note=46 vel=0} \\
\texttt{  dt=   0  note\_on  ch=0 note=58 vel=0} \\
\end{tabular}
\captionof{table}{\textbf{Example C10-1, the same notes as MIDI events.} Every event the
file holds, meta track included. Delta times are in ticks: at \texttt{ticks\_per\_beat}${=}$220 and \texttt{set\_tempo}${=}$500000\,$\mu$s/beat, a $dt$ of $4$ is $4/220\times0.5=0.0091$\,s, the 0.01\,s of the table above.}
\label{tab:eventsC101}
\end{minipage}\par\medskip

\emph{PMT tokens} (29 for 8 notes,
3.62 per note including the two sentinels):
\begin{center}\small
\parbox{0.95\columnwidth}{\raggedright \texttt{<BOS>} \texttt{TRACK\_0} \texttt{PROG\_0} \texttt{PITCH\_53} \texttt{DURP\_74} \texttt{VEL\_19} \texttt{PITCH\_55} \texttt{DURP\_0} \texttt{VEL\_31} \texttt{PITCH\_57} \texttt{DURP\_74} \texttt{VEL\_19} \texttt{PITCH\_60} \texttt{DURP\_74} \texttt{VEL\_19} \texttt{PITCH\_65} \texttt{DURP\_74} \texttt{VEL\_19} \texttt{TSHIFT\_74} \texttt{PITCH\_46} \texttt{DURP\_74} \texttt{VEL\_19} \texttt{PITCH\_58} \texttt{DURP\_74} \texttt{VEL\_19} \texttt{PITCH\_62} \texttt{DURP\_37} \texttt{VEL\_19} \texttt{<EOS>}}
\end{center}

\emph{Engraved from the same excerpt.}
\begin{center}
\includegraphics[width=0.95\columnwidth]{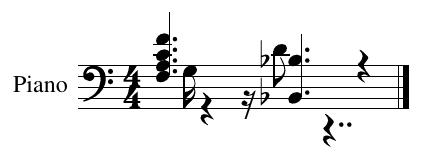}
\end{center}

\noindent\emph{Round trip:} 8 notes recovered, pitch and velocity
exact, maximum onset error 0.0\,ms and duration error
0.9\,ms, inside the $\tau/2{=}5$\,ms bound of Proposition~\ref{prop:exact}.

\paragraph{Example A13-1.}\label{apx:ex-A131} Chosen for a dense multi-instrument opening, where the pair is re-emitted at every change. Instruments: track 0: Acoustic Bass (\texttt{PROG\_32}), track 1: Pizzicato Strings (\texttt{PROG\_45}).

\emph{Caption.}
\begin{quote}
\emph{A chamber arrangement for piano, cello and violin with an interlocking arpeggio accompaniment and a long-breathed melody, intimate and cinematic.}
\end{quote}

\par\medskip\noindent\begin{minipage}{\columnwidth}
\centering\small\setlength{\tabcolsep}{3pt}
\begin{tabular}{rrrrr}
\toprule
onset (s) & dur (s) & pitch & vel & trk \\
\midrule
0.00 & 0.29 & 37 & 50 & 0 \\
0.00 & 0.29 & 25 & 50 & 1 \\
0.00 & 0.29 & 37 & 50 & 1 \\
0.00 & 0.29 & 49 & 50 & 1 \\
0.29 & 0.29 & 36 & 50 & 0 \\
0.29 & 0.29 & 24 & 50 & 1 \\
0.29 & 0.29 & 36 & 50 & 1 \\
0.29 & 0.29 & 48 & 50 & 1 \\
\bottomrule
\end{tabular}
\captionof{table}{\textbf{Example A13-1, the notes.} In the tokenizer's emission order.
Onset and duration in seconds, velocity on the MIDI 0--127 scale, \texttt{trk} the source track,
which is the index its \texttt{TRACK} token carries.}
\label{tab:notesA131}
\end{minipage}\par\medskip

\par\medskip\noindent\begin{minipage}{\columnwidth}
\centering\scriptsize
\begin{tabular}{@{}l@{}}
\texttt{ticks\_per\_beat=220, format 1, 3 tracks} \\
\texttt{track 0:} \\
\texttt{  dt=   0  set\_tempo  500000 us/beat} \\
\texttt{  dt=   0  time\_signature  4/4} \\
\texttt{track 1:   (= trk 0 in the table above)} \\
\texttt{  dt=   0  program\_change  ch=0 program=32} \\
\texttt{  dt=   0  note\_on  ch=0 note=37 vel=50} \\
\texttt{  dt= 128  note\_on  ch=0 note=36 vel=50} \\
\texttt{  dt=   0  note\_on  ch=0 note=37 vel=0} \\
\texttt{  dt= 127  note\_on  ch=0 note=36 vel=0} \\
\texttt{track 2:   (= trk 1 in the table above)} \\
\texttt{  dt=   0  program\_change  ch=1 program=45} \\
\texttt{  dt=   0  note\_on  ch=1 note=25 vel=50} \\
\texttt{  dt=   0  note\_on  ch=1 note=37 vel=50} \\
\texttt{  dt=   0  note\_on  ch=1 note=49 vel=50} \\
\texttt{  dt= 128  note\_on  ch=1 note=24 vel=50} \\
\texttt{  dt=   0  note\_on  ch=1 note=25 vel=0} \\
\texttt{  dt=   0  note\_on  ch=1 note=36 vel=50} \\
\texttt{  dt=   0  note\_on  ch=1 note=37 vel=0} \\
\texttt{  dt=   0  note\_on  ch=1 note=48 vel=50} \\
\texttt{  dt=   0  note\_on  ch=1 note=49 vel=0} \\
\texttt{  dt= 127  note\_on  ch=1 note=24 vel=0} \\
\texttt{  dt=   0  note\_on  ch=1 note=36 vel=0} \\
\texttt{  dt=   0  note\_on  ch=1 note=48 vel=0} \\
\end{tabular}
\captionof{table}{\textbf{Example A13-1, the same notes as MIDI events.} Every event the
file holds, meta track included. Delta times are in ticks: at \texttt{ticks\_per\_beat}${=}$220 and \texttt{set\_tempo}${=}$500000\,$\mu$s/beat, a $dt$ of $128$ is $128/220\times0.5=0.2909$\,s, the 0.29\,s of the table above.}
\label{tab:eventsA131}
\end{minipage}\par\medskip

\emph{PMT tokens} (35 for 8 notes,
4.38 per note including the two sentinels):
\begin{center}\small
\parbox{0.95\columnwidth}{\raggedright \texttt{<BOS>} \texttt{TRACK\_0} \texttt{PROG\_32} \texttt{PITCH\_37} \texttt{DURP\_28} \texttt{VEL\_12} \texttt{TRACK\_1} \texttt{PROG\_45} \texttt{PITCH\_25} \texttt{DURP\_28} \texttt{VEL\_12} \texttt{PITCH\_37} \texttt{DURP\_28} \texttt{VEL\_12} \texttt{PITCH\_49} \texttt{DURP\_28} \texttt{VEL\_12} \texttt{TSHIFT\_28} \texttt{TRACK\_0} \texttt{PROG\_32} \texttt{PITCH\_36} \texttt{DURP\_28} \texttt{VEL\_12} \texttt{TRACK\_1} \texttt{PROG\_45} \texttt{PITCH\_24} \texttt{DURP\_28} \texttt{VEL\_12} \texttt{PITCH\_36} \texttt{DURP\_28} \texttt{VEL\_12} \texttt{PITCH\_48} \texttt{DURP\_28} \texttt{VEL\_12} \texttt{<EOS>}}
\end{center}

\emph{Engraved from the same excerpt.}
\begin{center}
\includegraphics[width=0.95\columnwidth]{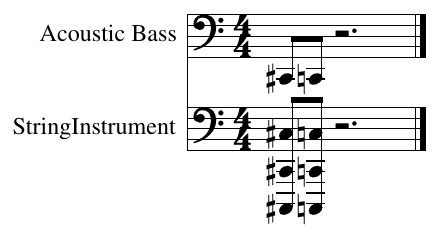}
\end{center}

\noindent\emph{Round trip:} 8 notes recovered, pitch and velocity
exact, maximum onset error 0.9\,ms and duration error
1.4\,ms, inside the $\tau/2{=}5$\,ms bound of Proposition~\ref{prop:exact}.

\paragraph{Example B018.}\label{apx:ex-B018} Chosen for a zero-shot benchmark caption, with the widest velocity spread of the four. Instruments: track 0: Acoustic Grand Piano (\texttt{PROG\_0}).

\emph{Caption.}
\begin{quote}
\emph{A slow-paced pop song with a touch of rock, this Christmas-themed composition exudes happiness and love through its melodic lines. Performed by a combination of piano, violin, string ensemble, drums, and clean electric guitar, it's set in the key of A minor with a 4/4 time signature. The chord progression of E7, Am, and Dm repeats frequently throughout the song, creating a sense of familiarity and comfort.}
\end{quote}

\par\medskip\noindent\begin{minipage}{\columnwidth}
\centering\small\setlength{\tabcolsep}{3pt}
\begin{tabular}{rrrrr}
\toprule
onset (s) & dur (s) & pitch & vel & trk \\
\midrule
2.40 & 0.90 & 53 & 127 & 0 \\
2.40 & 0.90 & 65 & 127 & 0 \\
2.40 & 0.90 & 69 & 127 & 0 \\
2.40 & 0.90 & 72 & 127 & 0 \\
3.30 & 0.90 & 55 & 127 & 0 \\
3.30 & 0.90 & 67 & 127 & 0 \\
3.30 & 0.90 & 71 & 127 & 0 \\
3.30 & 0.90 & 74 & 127 & 0 \\
\bottomrule
\end{tabular}
\captionof{table}{\textbf{Example B018, the notes.} In the tokenizer's emission order.
Onset and duration in seconds, velocity on the MIDI 0--127 scale, \texttt{trk} the source track,
which is the index its \texttt{TRACK} token carries.}
\label{tab:notesB018}
\end{minipage}\par\medskip

\par\medskip\noindent\begin{minipage}{\columnwidth}
\centering\scriptsize
\begin{tabular}{@{}l@{}}
\texttt{ticks\_per\_beat=220, format 1, 2 tracks} \\
\texttt{track 0:} \\
\texttt{  dt=   0  set\_tempo  500000 us/beat} \\
\texttt{  dt=   0  time\_signature  4/4} \\
\texttt{track 1:   (= trk 0 in the table above)} \\
\texttt{  dt=   0  program\_change  ch=0 program=0} \\
\texttt{  dt=1056  note\_on  ch=0 note=53 vel=127} \\
\texttt{  dt=   0  note\_on  ch=0 note=65 vel=127} \\
\texttt{  dt=   0  note\_on  ch=0 note=69 vel=127} \\
\texttt{  dt=   0  note\_on  ch=0 note=72 vel=127} \\
\texttt{  dt= 396  note\_on  ch=0 note=53 vel=0} \\
\texttt{  dt=   0  note\_on  ch=0 note=55 vel=127} \\
\texttt{  dt=   0  note\_on  ch=0 note=65 vel=0} \\
\texttt{  dt=   0  note\_on  ch=0 note=67 vel=127} \\
\texttt{  dt=   0  note\_on  ch=0 note=69 vel=0} \\
\texttt{  dt=   0  note\_on  ch=0 note=71 vel=127} \\
\texttt{  dt=   0  note\_on  ch=0 note=72 vel=0} \\
\texttt{  dt=   0  note\_on  ch=0 note=74 vel=127} \\
\texttt{  dt= 396  note\_on  ch=0 note=55 vel=0} \\
\texttt{  dt=   0  note\_on  ch=0 note=67 vel=0} \\
\texttt{  dt=   0  note\_on  ch=0 note=71 vel=0} \\
\texttt{  dt=   0  note\_on  ch=0 note=74 vel=0} \\
\end{tabular}
\captionof{table}{\textbf{Example B018, the same notes as MIDI events.} Every event the
file holds, meta track included. Delta times are in ticks: at \texttt{ticks\_per\_beat}${=}$220 and \texttt{set\_tempo}${=}$500000\,$\mu$s/beat, a $dt$ of $1056$ is $1056/220\times0.5=2.4000$\,s, the 2.40\,s of the table above.}
\label{tab:eventsB018}
\end{minipage}\par\medskip

\emph{PMT tokens} (32 for 8 notes,
4.00 per note including the two sentinels):
\begin{center}\small
\parbox{0.95\columnwidth}{\raggedright \texttt{<BOS>} \texttt{TSHIFT\_99} \texttt{TSHIFT\_99} \texttt{TSHIFT\_39} \texttt{TRACK\_0} \texttt{PROG\_0} \texttt{PITCH\_53} \texttt{DURP\_89} \texttt{VEL\_31} \texttt{PITCH\_65} \texttt{DURP\_89} \texttt{VEL\_31} \texttt{PITCH\_69} \texttt{DURP\_89} \texttt{VEL\_31} \texttt{PITCH\_72} \texttt{DURP\_89} \texttt{VEL\_31} \texttt{TSHIFT\_89} \texttt{PITCH\_55} \texttt{DURP\_89} \texttt{VEL\_31} \texttt{PITCH\_67} \texttt{DURP\_89} \texttt{VEL\_31} \texttt{PITCH\_71} \texttt{DURP\_89} \texttt{VEL\_31} \texttt{PITCH\_74} \texttt{DURP\_89} \texttt{VEL\_31} \texttt{<EOS>}}
\end{center}

\emph{Engraved from the same excerpt.}
\begin{center}
\includegraphics[width=0.95\columnwidth]{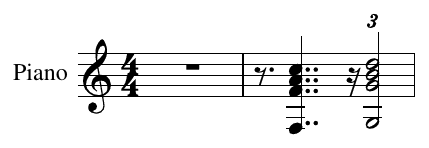}
\end{center}

\noindent\emph{Round trip:} 8 notes recovered, pitch and velocity
exact, maximum onset error 0.0\,ms and duration error
0.0\,ms, inside the $\tau/2{=}5$\,ms bound of Proposition~\ref{prop:exact}.

\section{Engraved-Score Comparison}
\label{app:staff}
\subsection{Qualitative Results}\label{apx:qualitative-results}
PMT generations exhibit caption-conditioned texture: folk captions yield single-line dance tunes, while classical, jazz, and pop captions yield sustained chordal writing with up to 13--17 simultaneous notes. On the multi-instrument sources, generations qualify as multi-instrument-with-chords ($\geq$2 instruments, $\geq$30\% chord-time) with coherent 5--9-instrument arrangements (Figure~\ref{fig:qual}), reaching 98--100\% chord-time on chord-requesting captions and 40+-second forms. Figure~\ref{fig:qual} makes the representation effect concrete: on one orchestral-soundtrack caption PMT renders an 8-instrument arrangement approaching the 11-instrument reference, while the beat-grid tokenizers collapse to 5 sparser tracks under identical backbone and budget. Every trained cell scores 6/6 valid on the sanity suite, and all examples ship with audio. The same contrast is legible as engraved notation in this appendix (Figure~\ref{fig:staff_compare}): MIDI-native systems notate full multi-instrument scores whereas the ABC baseline collapses to a two-staff keyboard reduction, so representation, not backbone, bounds the notated texture.

\begin{figure*}[tp]
\centering
\includegraphics[width=0.86\textwidth]{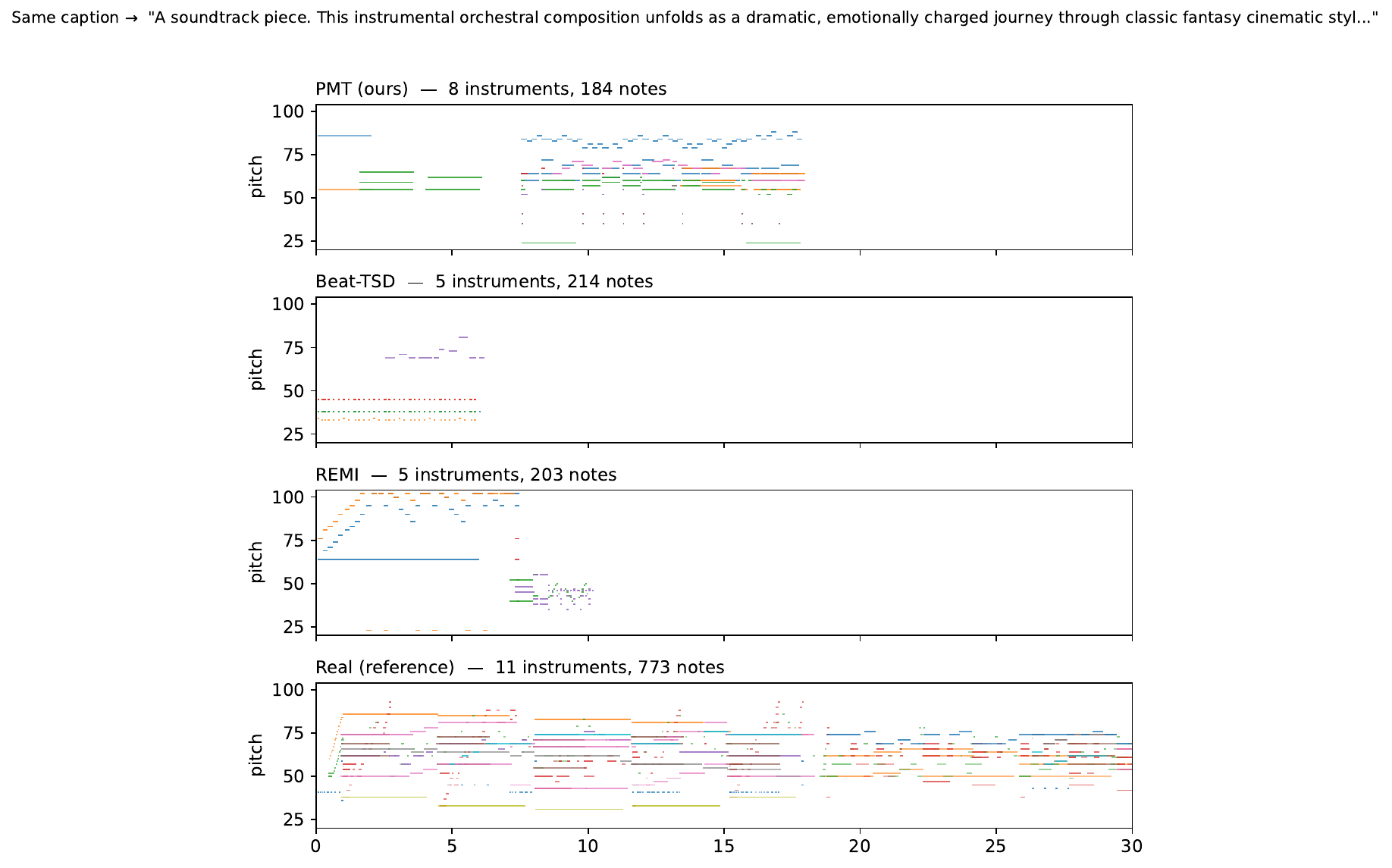}
\caption{\textbf{Same caption, different representation} (0.8B, identical backbone/data/budget). Piano rolls (colour $=$ instrument, opacity $=$ velocity) for one orchestral-soundtrack caption. PMT (top) generates an 8-instrument arrangement approaching the 11-instrument real reference (bottom), while beat-grid Beat-TSD and REMI collapse to 5 sparser tracks filling only the first seconds. Texture is set by the representation, not the backbone.}
\label{fig:qual}
\end{figure*}

Figures~\ref{fig:staff_compare}--\ref{fig:staff_rock} complement the instrument-colored piano rolls of Figures~\ref{fig:pipeline_compare} and~\ref{fig:qual} with a Western common-practice-notation view of the same-caption comparison, engraved with \textsc{verovio}, across three genres (pop, classical, rock). Each figure stacks, for one caption, the human reference, PMT (ours), the three \emph{newest} dedicated MIDI systems, MIDILM, Amadeus, MIDI-LLM (2025), and the ABC-based ChatMusician. They make the representation-level texture gap legible bar by bar and confirm that PMT's generations are well-formed enough to typeset as conventional multi-part scores.

\begin{figure*}[tp]
\centering
\includegraphics[width=\textwidth,height=0.84\textheight,keepaspectratio]{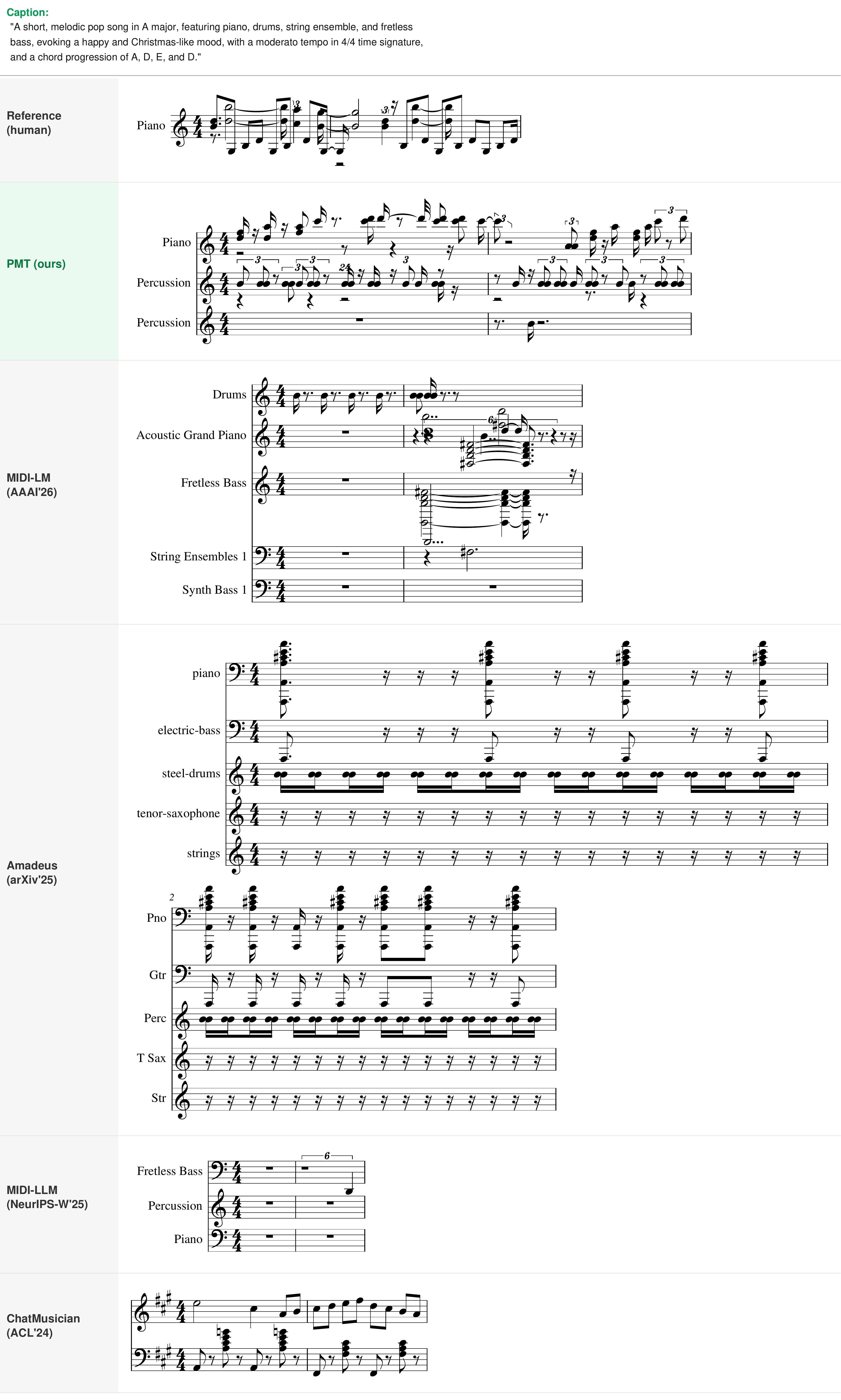}
\caption{\textbf{Same caption, engraved as sheet music (pop).} One MidiCaps caption drives six systems; each generation is notated from its own MIDI under one engraver.}
\label{fig:staff_compare}
\end{figure*}

\begin{figure*}[tp]
\centering
\includegraphics[width=\textwidth,height=0.84\textheight,keepaspectratio]{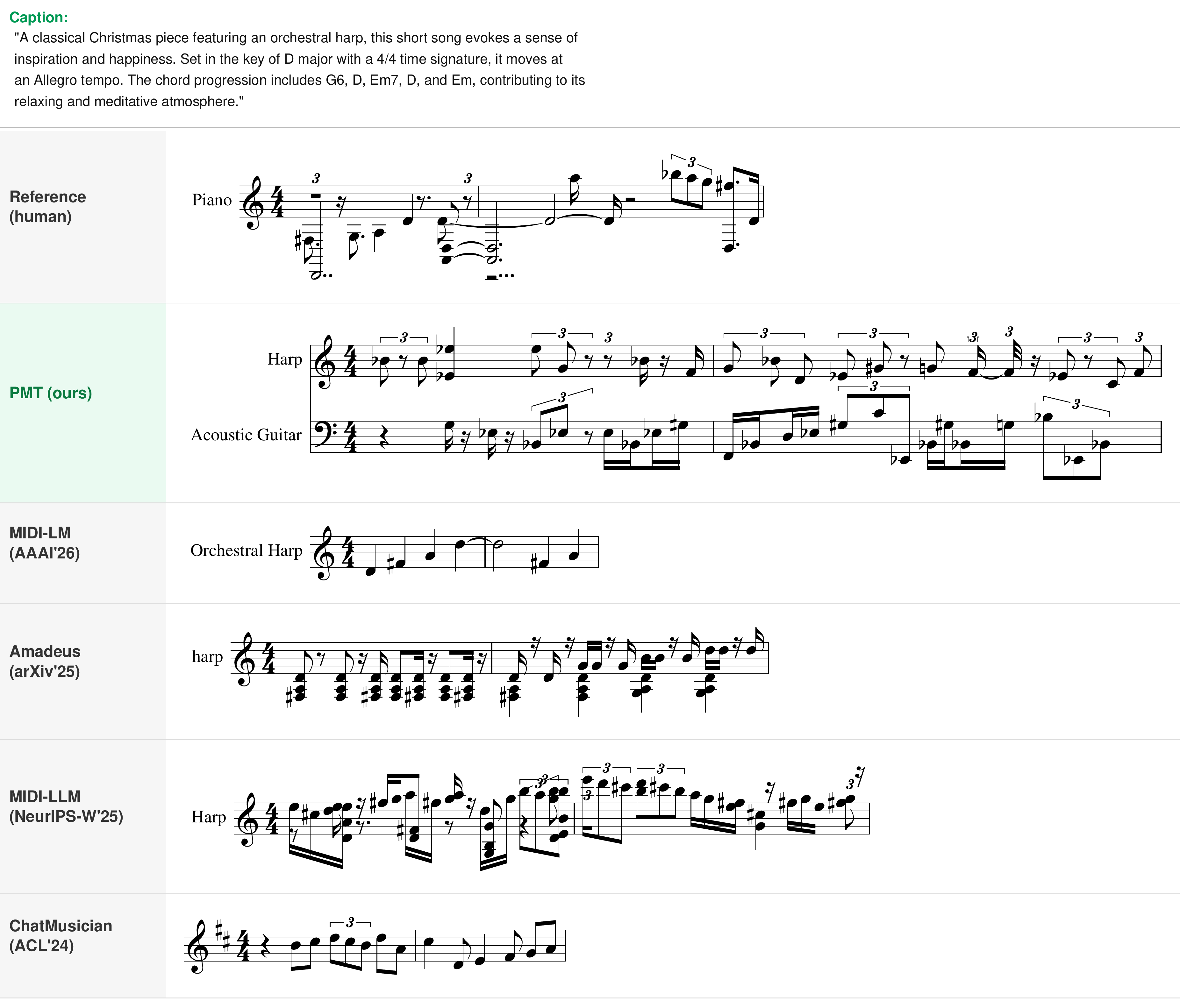}
\caption{\textbf{Instrument-specific caption (classical, ``orchestral harp'').} When the caption names a specific instrument, every MIDI-native system instantiates a harp part, MIDILM labels its staff \emph{Orchestral Harp}, Amadeus \emph{harp}, MIDI-LLM \emph{Harp}, and PMT writes a harp line with an accompanying guitar, whereas ChatMusician emits a single ABC melody line. The MIDI token vocabularies carry explicit instrument-program tokens; the ABC representation does not.}
\label{fig:staff_harp}
\end{figure*}

\begin{figure*}[tp]
\centering
\includegraphics[width=\textwidth,height=0.84\textheight,keepaspectratio]{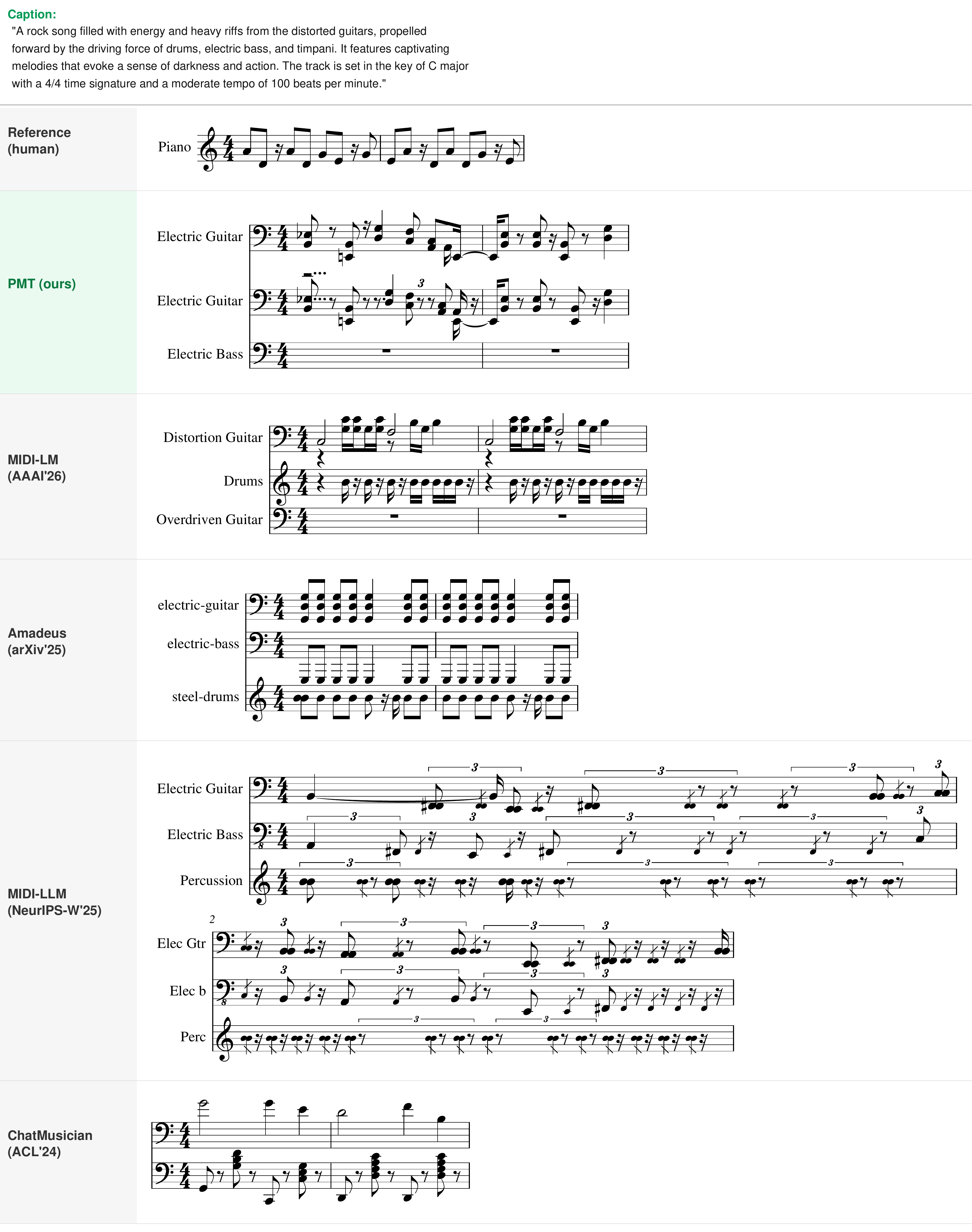}
\caption{\textbf{Rock caption (``distorted guitars, electric bass, drums'').} The MIDI-native systems realize the requested rock instrumentation, MIDILM's \emph{Distortion Guitar}/\emph{Overdriven Guitar}/\emph{Drums}, Amadeus's electric-guitar/bass/steel-drums, and PMT's dual electric guitars with bass, while ChatMusician again reduces to a two-staff keyboard part. Across all three genres the pattern is identical: the token representation, not the backbone, sets the achievable notated texture.}
\label{fig:staff_rock}
\end{figure*}

\section{Extended Distributional Metrics}
\label{app:extmetrics}
Tables~\ref{tab:fid} and~\ref{tab:muspy} give the full distributional-fidelity results summarized in the main text.

\begin{table}[t]
\centering
\small
\setlength{\tabcolsep}{4.5pt}
\begin{tabular}{llccc}
\toprule
Repr. & Size & FMD$\downarrow$ & CLaMP-3$\uparrow$ & Groove \\
\midrule
Real data & -- & 0 & -- & 0.984 \\
\midrule
\textbf{PMT} & 0.8B & \textbf{159$\pm$8} & .146 & \textbf{0.98} \\
\textbf{PMT} & 2B & \textbf{152$\pm$9} & .146 & \textbf{0.98} \\
\textbf{PMT} & 4B & \textbf{158$\pm$4} & .144 & \textbf{0.98} \\
\textbf{PMT} & 27B & \textbf{156} & .144 & -- \\
PerTok & 0.8B & 188$\pm$7 & .137 & -- \\
PerTok & 2B & 185 & .137 & -- \\
Beat-TSD & 0.8B & 286$\pm$1 & .151 & 0.60$^{*}$ \\
Beat-TSD & 2B & 280$\pm$3 & .157 & 0.61$^{*}$ \\
REMI & 0.8B & 285$\pm$1 & .151 & 0.64$^{*}$ \\
REMI & 2B--4B & 281$\pm$3 & .151--.154 & 0.66$^{*}$ \\
REMI & 27B & 272 & .149 & -- \\
MIDI-Like & 0.8B--4B & 280--285 & .152--.154 & 0.57$^{*}$ \\
MIDI-Like & 27B & 285 & .144 & -- \\
ABC & 0.8B--4B & 394--407 & .121--.124 & 1.00$^{*}$ \\
ABC & 9B & 388 & .117 & -- \\
\bottomrule
\end{tabular}
\caption{Distributional fidelity and alignment by representation and model size (mean$\pm$std over seeds). \textbf{FMD} $=$ Fr\'echet Music Distance over CLaMP-2 embeddings against a fixed $500$-piece held-out reference, corrected multi-track pipeline (lower is better); \textbf{CLaMP-3} $=$ caption--music alignment (higher is better); \textbf{Groove} $=$ groove consistency (closer to Real data is better). Values marked $^{*}$ traverse different decode paths
(miditok/abc2midi tick grids) and are reported for completeness, not ranked.}
\label{tab:fid}
\end{table}

Two things about this reference deserve stating. It is drawn from the folk source, the largest single source in the frozen split and a monophonic one (max-polyphony $2.0$, chord-time $9.0\%$), so it is narrower than the corpus, while the \emph{captions} driving every cell are stratified across all eight sources. Swapping in a source-stratified $500$-piece reference (chord-time $35.0\%$, max-polyphony $6.1$) leaves the ordering unchanged and \emph{widens} the PMT-vs-beat-grid gap from $1.8\times$ to $2.4\times$ (Appendix~\ref{app:limitations}), so the values reported here are the conservative ones. Read down the table, PMT's FMD is stable at 152--159 from 0.8B to 27B while beat-grid FMD stays at 271--286 even at 27B, with PerTok intermediate (185--188), completing the performance-resolution gradient. ABC's Groove of 1.00 reflects quantization to exact beat positions rather than musical quality.

\begin{table}[t]
\centering
\small
\setlength{\tabcolsep}{3pt}
\begin{tabular}{lcccccc}
\toprule
 & \multicolumn{3}{c}{\emph{structural} ($\to$ref)} & \multicolumn{3}{c}{\emph{timing/dynamics} $\downarrow$}\\
\cmidrule(lr){2-4}\cmidrule(lr){5-7}
Repr. & PCE & Scale & JSD$_{pc}$ & JSD$_{dur}$ & JSD$_{ioi}$ & Groove$_{\to}$ \\
\midrule
Real data & 2.611 & 0.992 & 0 & 0 & 0 & 0.984 \\
\textbf{PMT} & 2.593 & 0.973 & .012 & \textbf{.065} & \textbf{.036} & \textbf{0.980} \\
REMI & 2.597 & 0.975 & \textbf{.010} & .135 & .153 & 0.643$^{*}$ \\
Beat-TSD & 2.606 & 0.977 & .013 & .136 & .153 & 0.604$^{*}$ \\
MIDI-Like & 2.579 & 0.977 & .017 & .158 & .153 & 0.560$^{*}$ \\
PerTok & 2.570 & 0.974 & .015 & .070 & \textbf{.015} & 0.994$^{*}$ \\
ABC & 2.545 & 0.944 & .027 & .188 & .080 & 0.998$^{*}$ \\
\bottomrule
\end{tabular}
\caption{Public MusPy/MGEval suite, aggregated over all seeds and scales (0.8B--4B). Arrows give the direction of better; $\rightarrow$ means closer to the real reference.}
\label{tab:muspy}
\end{table}

\paragraph{Model-free FMD decomposition.}\label{apx:model-free-fmd-decomposition} Encoding then decoding real music with \emph{no model} yields a round-trip FMD of only $6.2$ for PMT and $14.4$ for Beat-TSD ($n{=}200$, whole untruncated pieces), both far below the \emph{trained} gap ($159$ vs.\ $\sim\!280$). The format-ceiling difference is thus ${\approx}8$ FMD points, under $10\%$ of the ${\approx}120$-point trained gap, so PMT's distributional advantage is a \emph{learnability} result the model realizes under supervision rather than a re-measured format ceiling; because the round-trip runs on whole pieces it also rules out truncation as the source.

\section{Decode Budget and the Length Bias of FMD}
\label{app:declen}
Every generation reported in this paper uses a decode budget of $900$ new tokens, which at PMT's ${\approx}4.0$ tokens per note is ${\approx}225$ notes. That budget is a property of the evaluation protocol, not of the representation, and it interacts with the reference set in a way worth stating explicitly.

\paragraph{The budget was binding.}\label{apx:the-budget-was-binding} Regenerating the same $100$ public-test captions with the budget raised to $7000$ tokens changes the output substantially: the median piece grows from $196$ to $704$ notes and from $18.2$ to $44.3$ seconds, chord-time fraction rises from $61.6\%$ to $67.8\%$, and median maximum polyphony from $8$ to $11$. Note count increases for $90/100$ captions and duration for $81/100$. The comparison is paired within caption, so the caption distribution cannot account for it. The $900$-token budget was therefore truncating pieces the model had learned to write, and the effect is largest for models trained with a long context window: at an $8192$-token training block, $900$ tokens is roughly $11\%$ of the sequence length the model was optimized over.

\paragraph{Longer is worse under FMD, and only because it is longer.}\label{apx:longer-is-worse-under-fmd-and} On those same $100$ captions, FMD degrades from $231.2$ at $900$ tokens to $299.6$ at $7000$ tokens, a $+68.4$ regression. This is a metric artifact rather than a musical one: $188$ of the $200$ reference pieces contain exactly $256$ notes, because the reference set is itself clipped, so distance to that reference structurally rewards short generations. Truncating each $7000$-token generation to its own reference's note count, which removes the length difference and leaves only the question of whether the music is closer to the reference, gives FMD $229.3$, i.e.\ $-1.9$ against the $900$-token arm and far inside the real-versus-real noise floor of ${\approx}43$ at this sample size. The longer generations are thus not distributionally worse music; they are penalized for being complete.

\paragraph{Consequences.}\label{apx:consequences} Two things follow. First, the comparisons in this paper remain internally valid: every arm decodes under the identical budget against the identical reference, so the representation ranking is unaffected, and the model-free round trip in Appendix~\ref{app:extmetrics} already rules out truncation as the source of the representation gap. Second, and more importantly for anyone using this benchmark, FMD against a clipped reference cannot be read as a proxy for musical completeness, and the two objectives pull in opposite directions: the decode budget that minimizes FMD is not the one that produces whole pieces. We report this rather than tuning the budget per arm, and we recommend that future work on this benchmark fix the budget in advance and report generation length alongside FMD.

\section{Design-Parameter Ablations}
\label{app:ablate}

\paragraph{Interpretability: disentangled, editable tokens.}\label{apx:interpretability-disentangled-editable-tokens} Model-free edit precision and the paired-caption check at the model level.

\textbf{Interpretability: disentangled, editable tokens.} Because PMT tokens are named parameters, single-family edits move only their own attribute with zero leakage (over 120 real pieces, velocity $+8$ bins $\to$ $+26.1$ realized, pitch $+5$ $\to$ exactly $+5.00$ semitones; full model-free matrix in this appendix, Table~\ref{tab:disentangle}), precision REMI-style position/duration entanglement cannot offer. At the model level, paired captions differing in one attribute (tempo, dynamics, or mode; 60 samples each) shift the realized attribute in the intended direction.

\paragraph{From-scratch validation in full.}\label{apx:from-scratch-validation-in-full} The MAESTRO reproduction, the POP909 transfer, and the multi-instrument control.

\textbf{From-scratch validation.} A web-pretrained backbone could favor some representations through prior exposure (it demonstrably favors ABC). Repeating the swap with a 26M decoder-only Transformer trained \emph{from scratch} on MAESTRO~\citep{hawthorne2018enabling} (changing only the tokenization) reproduces the ordering, PMT FMD 125.1$\pm$11.7 over four seeds vs.\ REMI's 344.4$\pm$9.8, and the win transfers to POP909~\citep{wang2020pop909} (full numbers in this appendix); with the Qwen3.5 grid, two independent backbone families agree, closing the pretraining-prior confound. The ordering also holds on \emph{multi-instrument} material trained from scratch, the case a solo-piano reproduction leaves open: on a 1{,}208-piece genre-balanced subset (median 7 instruments/piece, held-out 202-piece reference), PMT reaches FMD 287.9$\pm$1.0 (three seeds) versus 395.3$\pm$1.1 for Beat-TSD and 386.5$\pm$3.0 for REMI, a narrower but fully non-overlapping margin, so the from-scratch advantage is not specific to single-line piano.
\subsection{Ablation Studies}\label{apx:ablation-studies}
\begin{table}[t]
\centering\small
\setlength{\tabcolsep}{3.5pt}
\begin{tabular}{lccccc}
\toprule
\multicolumn{6}{c}{\emph{(a) JSD$_{ioi}$ with all onsets on a common grid}}\\
grid & PMT & Beat-TSD & REMI & PerTok & real floor\\
\midrule
none   & .030 & .160 & .145 & .012 & .008\\
10\,ms & .013 & .098 & .094 & .013 & .006\\
30\,ms & .006 & .059 & .062 & .008 & .004\\
60\,ms & \textbf{.003} & \textbf{.006} & .012 & .007 & .002\\
125\,ms& .003 & .002 & .007 & .005 & .001\\
\midrule
\multicolumn{6}{c}{\emph{(b) per-marginal JSD to real (0.8B); 95\% CI}}\\
        & IOI & dens. & poly & pitch & IOI 95\% CI\\
\midrule
PMT      & \textbf{.030} & .089 & .045 & \textbf{.002} & [.027,\,.051]\\
Beat-TSD & .159 & \textbf{.040} & \textbf{.021} & .004 & [.144,\,.184]\\
\midrule
\multicolumn{6}{c}{\emph{(c) FMD after snapping onsets to a common grid}}\\
grid & PMT & Beat-TSD & REMI & PMT lead & \\
\midrule
none    & \textbf{145} & 274 & 272 & \textbf{+127} & \\
60\,ms  & \textbf{205} & 274 & 272 & \textbf{+67} & \\
125\,ms & \textbf{147} & 280 & 276 & \textbf{+129} & \\
\bottomrule
\end{tabular}
\caption{\textbf{Controls delimiting the timing/FMD advantage} (headline reference and binning of Table~\ref{tab:main}; $n{=}400$ unless noted, $n{=}500$ in \emph{(c)}).}
\label{tab:controls}
\end{table}

\begin{figure*}[tp]
\centering
\includegraphics[width=\textwidth]{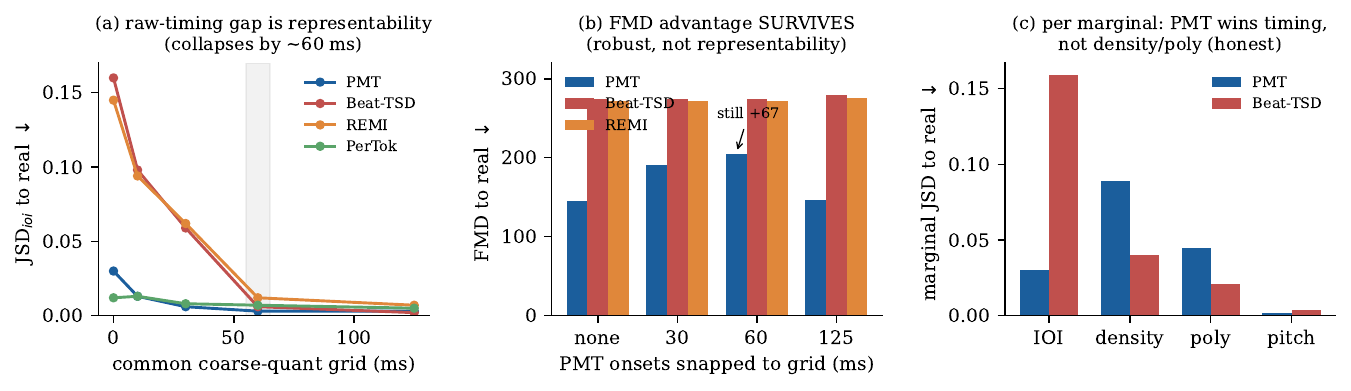}
\caption{\textbf{Controls on the timing/FMD advantage} (visualizing Table~\ref{tab:controls}).}
\label{fig:controls}
\end{figure*}

\textbf{What the timing advantage is, and is not.} Two controls (Table~\ref{tab:controls}, visualized in Figure~\ref{fig:controls}) delimit the effect, and we report them because the naive reading over-claims. \emph{(i) The raw onset-timing gap is representability, not learned placement.} Coarse-quantizing every arm's onsets \emph{and} the reference to a common grid shrinks PMT's JSD$_{ioi}$ lead over the beat grids monotonically until it \emph{closes near 60\,ms} (PMT $.003$ vs.\ Beat-TSD $.006$), exactly the beat grids' own resolution ($\rho\!\approx\!63$--$125$\,ms); PerTok, a second performance-microtiming encoding, already matches PMT natively. So the ``${\approx}4\times$ lower timing divergence'' reflects beat grids being structurally unable to place sub-grid onsets, and we do not read it as a quality difference. \emph{(ii) The FMD advantage is robust, not a fine-timing representability artifact.} The skeptic's hypothesis is that PMT wins only because it can place onsets on a finer grid than a beat grid physically allows. We test it directly: snap a trained PMT model's generated onsets onto a common \emph{fixed} (absolute-time) grid at the beat grids' resolution scale, removing the sub-grid timing while keeping pitch, duration, and velocity, and recompute FMD.\footnote{We snap to a fixed absolute grid rather than re-encoding through the beat-grid tokenizers themselves: generated output carries no canonical tempo, so a tempo-relative grid is ill-defined for it, and forcing it through the tokenizers' tempo re-estimation injects ${\approx}200$\,ms of onset distortion (FMD then reflects that re-gridding artifact, $386$ at $n{=}500$, not the removed sub-grid timing). The fixed grid isolates exactly the sub-grid timing at the beat grids' resolution.} The advantage \emph{survives at every grid} ($n{=}500$): at a $60$\,ms grid PMT scores $205$ versus $274$/$272$ for Beat-TSD/REMI (${+}67$), and at a $125$\,ms grid $147$ versus $280$/$276$ (${+}129$). The curve is non-monotone in the grid step because a fixed absolute grid interacts with each piece's natural subdivisions ($125$\,ms is beat-aligned at common tempi and $60$\,ms is not, so the $60$\,ms snap distorts more), but PMT leads at \emph{every} step, so the FMD gap is a genuine distributional difference, not the finer lattice. Two further controls agree: the model-free round-trip floor is only $8$ points (PMT $6.2$ vs.\ Beat-TSD $14.4$), and beat-quantizing \emph{real} music raises its FMD to just $14.4$, so CLaMP-2 does not merely penalize a beat-quantized surface. \emph{(iii) The gap is not an artifact of the embedder.} Because a single embedding space carries the headline number, we recompute the Fr\'echet distance over \textbf{CLaMP-3} embeddings, a different model family (text/audio/symbolic contrastive) that reads MIDI through a lossless tick-level serialization, on the \emph{same} generations against the \emph{same} reference, so only the embedder changes. Both the ordering and the magnitude replicate: PMT $110.4$ and PerTok $141.0$ against $206.3$--$216.6$ for the three beat grids, i.e.\ the beat grids sit $1.9\times$ above PMT under CLaMP-3 versus $1.8\times$ under CLaMP-2 (Appendix~\ref{app:limitations}). The class-level structure, both performance-resolution encodings ahead of every beat grid, therefore does not depend on CLaMP-2. We are equally careful about what we do \emph{not} claim: the raw onset-interval-histogram statistic (JSD$_{ioi}$) \emph{is} representability-dominated, collapsing to parity under the same common coarse-quantization (Table~\ref{tab:controls}); and the FMD gap does not reduce cleanly to a single crude marginal (beat grids are in fact closer on note density and polyphony), so we characterize it as a holistic distributional difference rather than over-attributing it to one feature. It is robust across scale, seeds, references, and a from-scratch backbone, but we make no perceptual claim (\S\ref{sec:perceptual}). PerTok~\citep{lenz2024pertok} corroborates that the binding property is performance-microtiming in general, not PMT's specific lattice: it matches PMT on JSD$_{ioi}$ at a $1.7\times$ token cost, and PMT vs.\ PerTok is within noise (overlapping CIs).

\textbf{Performance parameters are load-bearing.} Under the from-scratch MAESTRO setup, removing PMT's performance fields one at a time hurts: dropping velocity degrades FMD from 106.1 (the single-run from-scratch base, vs.\ the four-seed 125.1$\pm$11.7 for the same setup) to 145.7 (MGEval feature-overlap area, OA, $0.697 \to 0.450$) and coarsening micro-timing degrades FMD to 140.7, so the parameters PMT preserves, exactly those beat grids quantize away, carry measurable generative signal.

\textbf{Recipe: constrained decoding.} Ablating music-constrained decoding (Eq.~\eqref{eq:decode}), sampling over the full extended vocabulary with \emph{no} mask, still yields $100\%$ valid, decodable MIDI on 100 held-out captions: the vocabulary extension and caption-masked SFT teach the backbone to stay within the music range on its own, so the constraint is a \emph{robustness guarantee} (validity holds at any temperature and for undertrained checkpoints), not a crutch.

\textbf{Recipe: caption-masking is a convention, not the fidelity driver.} Training an otherwise-identical PMT-0.8B on the \emph{un}-masked caption$+$music loss (matched budget, two seeds, MidiCaps-200) adds only an auxiliary caption language-modeling term, yet reaches \emph{lower} FMD than the masked recipe ($181.6{\pm}3.5$ vs.\ $231.0{\pm}0.4$; both $0.99$ valid): the auxiliary objective mildly regularizes the small model, so our masked numbers are if anything \emph{conservative}. We keep masking for inference-alignment and a clean conditioning/generation split; whether the gain survives at scale is open. As with the constrained-decoding null, the advantage lives in the representation, not the recipe.

\textbf{Representation family (token vs.\ text).} Swapping any MIDI-token encoding for ABC halves attainment (.69--.92 $\to$ .46--.51) and cuts polyphony (4.1--5.8 $\to$ 3.0) under identical everything, a conservative estimate since ABC received every advantage (prior, compactness, more music per step). The gap follows from what ABC \emph{is}: a \emph{notation} for human sight-reading, recording metrical note values on a written grid with \emph{no} native fields for the sub-beat micro-timing and per-note velocity PMT preserves, and polyphony only through the cumbersome multi-voice (\texttt{V:}) syntax LLMs rarely emit, so generations collapse toward single-line tunes. This is why ABC's model-free round-trip ceiling is the lowest we measure (32.5\% chord-time vs.\ PMT's 39.1\%; Table~\ref{tab:eff}), before any training. ABC's celebrated compactness is a symptom of the same limitation: it stores fewer tokens because it stores less music.
\textbf{Multi-track coverage (natural experiment).} An earlier revision encoded only the first track of multi-instrument pieces for the baselines; their ceilings collapsed (35.5\% $\to$ 22.3\%) and raw scores followed while attainment held at $\sim$80--90\%, so the ceiling, the representation, is the binding constraint, and ceiling-anchored reporting caught in hours what raw scores would have misattributed to model quality.

\textbf{Multi-track encoding (direct ablation).} Training a PMT-0.8B on a cache with all instruments merged into one track (dropping \texttt{TRACK}/\texttt{PROG}) stays $1.00$ valid but collapses generated arrangements to a single instrument (mean $1.0$ vs.\ $1.8$); on multi-instrument-reference captions ($\geq$3 instruments, $n{=}16$) the multi-track model generates $5.25$ instruments (near the $6.4$ reference) versus $1.0$ for the ablation, while on single-instrument ($n{=}82$) both produce ${\sim}1$. Aggregate FMD is marginally \emph{lower} for the single-track variant ($149$ vs.\ $162$) precisely because $82\%$ of this test is single-instrument folk, so a folk-dominated FMD cannot credit the multi-instrument capacity \texttt{TRACK}/\texttt{PROG} provides, which is why we anchor texture claims to per-representation ceilings rather than raw FMD.
\textbf{Seeds and sampling noise.} Training-seed std is $\leq$2pp for baselines and ${\approx}6$pp for PMT at 0.8B, and repeated evaluation bounds sampling noise at $\approx$2.5pp ($n{=}100$); the PMT-vs-ABC gap ($>$14pp) is far outside noise. Re-evaluating the three 0.8B arms at $n{=}500$ leaves the timing separation intact (JSD$_{ioi}$ $.023$ for PMT vs.\ $.154$/$.164$ for Beat-TSD/REMI), so the $n{=}100$ protocol is not sample-limited on the metric that carries the thesis.

We isolate the two performance parameters PMT preserves, timing resolution and velocity granularity, both model-free (round-trip encode/decode of held-out real music) and with trained 0.8B models under the identical recipe. All trained single-checkpoint ablations here and in \S Ablation Studies compare against the \emph{seed-0} PMT-0.8B baseline (FMD 162, JSD$_{ioi}$ .030, attainment .96, chord-time 37.7\%), the same-seed control for each single-checkpoint variant; the corresponding 3-seed report values (159, .032, .92, 36.2\%) are those of Table~\ref{tab:main}.

\textbf{Timing resolution (model-free round-trip).} Encoding and decoding 300 held-out real pieces at coarsening \texttt{TSHIFT} resolutions preserves chord-time but destroys onset-timing fidelity (Table~\ref{tab:rt-shift}): inter-onset-interval divergence from the raw reference rises $8\times$ from 10\,ms to 20\,ms and $40\times$ by 100\,ms. The 10\,ms grid is the coarsest that keeps the real inter-onset distribution intact.

\begin{table}[h]
\centering
\small
\begin{tabular}{lccc}
\toprule
\texttt{TSHIFT} & Chord\% & JSD$_{ioi}\downarrow$ & vs.\ 10\,ms \\
\midrule
10\,ms (PMT) & 35.5 & \textbf{.015} & 1$\times$ \\
20\,ms & 36.8 & .126 & 8$\times$ \\
50\,ms & 37.1 & .114 & 7$\times$ \\
100\,ms & 37.3 & .620 & 40$\times$ \\
\bottomrule
\end{tabular}
\caption{Round-trip expressiveness vs.\ timing resolution (raw reference chord-time 37.1\% on this 300-piece subset; 36.7\% on the main 500-piece stratified set). Chord-time (simultaneity) is resolution-insensitive, but onset-timing realism (JSD$_{ioi}$, lower is better) collapses as the grid coarsens; the small 20/50\,ms non-monotonicity is quantization aliasing and it recurs in the trained-model ablation (Table~\ref{tab:trained-ablate}). This is the mechanism behind PMT's timing-axis advantage.}
\label{tab:rt-shift}
\end{table}

\textbf{Trained-model ablations.} We also train PMT-0.8B at coarser \texttt{TSHIFT} and evaluate under matched decoding (Table~\ref{tab:trained-ablate}): the round-trip prediction holds on \emph{generated} output, halving the timing resolution to 20\,ms worsens onset-timing divergence $5\times$ (JSD$_{ioi}$ .030$\to$.152), inflates FMD (162$\to$221), and drops attainment (.96$\to$.75). The 10\,ms grid is not an arbitrary choice: it is the operating point at which a trained model still reproduces the real inter-onset distribution, and any coarsening moves the model into the beat-grid regime.

\begin{table}[h]
\centering
\small
\begin{tabular}{lccc}
\toprule
PMT \texttt{TSHIFT} & Chord\% & Att.$\uparrow$ & JSD$_{ioi}\downarrow$ \\
\midrule
\textbf{10\,ms (default)} & \textbf{37.7} & \textbf{0.96} & \textbf{0.030} \\
20\,ms & 29.3 & 0.75 & 0.152 \\
50\,ms & 23.2 & 0.59 & 0.127 \\
100\,ms & 28.7 & 0.73 & 0.613 \\
\bottomrule
\end{tabular}
\caption{Trained-model timing-resolution ablation (PMT-0.8B, matched decoding).}
\label{tab:trained-ablate}
\end{table}

\textbf{Velocity granularity.} Dynamics is the second performance parameter PMT preserves. Coarsening velocity from 32 to 8 bins (Table~\ref{tab:vel-ablate}) collapses velocity-distribution realism ($8\times$: JSD$_{vel}$ .023$\to$.179) and chord fidelity, while leaving onset-timing (JSD$_{ioi}$) untouched; holistic FMD is insensitive to velocity granularity and barely moves (153 vs.\ 162), so the effect is specific to the velocity distribution it governs. Velocity and timing are thus \emph{independently} controllable, and \emph{having} velocity at all is load-bearing (dropping it entirely costs 40 FMD points in the from-scratch ablation, \S Ablations).

\begin{table}[h]
\centering
\small
\begin{tabular}{lcccc}
\toprule
PMT velocity & Chord\% & Att.$\uparrow$ & JSD$_{vel}\downarrow$ & JSD$_{ioi}\downarrow$ \\
\midrule
32 bins (default) & \textbf{37.7} & \textbf{.96} & \textbf{.023} & .030 \\
8 bins & 24.3 & .62 & .179 & .040 \\
\bottomrule
\end{tabular}
\caption{Trained-model velocity-granularity ablation (PMT-0.8B). Fewer velocity bins collapse dynamics realism (JSD$_{vel}$, $8\times$ worse) without affecting timing (JSD$_{ioi}$), the two performance parameters contribute independently.}
\label{tab:vel-ablate}
\end{table}

\textbf{Disentanglement matrix.} Table~\ref{tab:disentangle} makes the interpretability claim of \S Experiments quantitative: a model-free encode--edit--decode over 120 real multi-instrument pieces, editing one token family and measuring the change in \emph{every} attribute. Each edit moves only its own attribute; leakage to all others is \emph{exactly} zero.

\begin{table}[h]
\centering
\small
\setlength{\tabcolsep}{4pt}
\begin{tabular}{lcccc}
\toprule
Single-family edit & $\Delta$vel & $\Delta$pitch & $\Delta$dur & $\Delta$\#notes \\
\midrule
VEL$_k$\,$+8$\,bins & \textbf{+26.1}\,{\scriptsize$\pm$8.4} & 0.00 & 0.00 & 0.00 \\
PITCH$_n$\,$+5$\,st & 0.00 & \textbf{+5.00}\,{\scriptsize$\pm$.00} & 0.00 & 0.00 \\
\bottomrule
\end{tabular}
\caption{\textbf{Disentanglement matrix} (120 real multi-instrument pieces; model-free encode--edit--decode).}
\label{tab:disentangle}
\end{table}

\paragraph{From-scratch backbone (full numbers).}\label{apx:from-scratch-backbone-full-numbers} Repeating the representation swap with a 26M decoder-only Transformer trained from scratch (random init) on MAESTRO~\citep{hawthorne2018enabling}, PMT reaches FMD $125.1\pm11.7$ over four seeds versus REMI's $344.4\pm9.8$ (a $2.75\times$ gap, an order of magnitude beyond seed noise; TSD and MIDI-Like fall in between; MAESTRO held-out reference), and the win transfers unchanged to POP909~\citep{wang2020pop909} (PMT $175.8$ vs.\ REMI $497.1$, $200/200$ vs.\ $187/200$ valid). A second, non-pretrained backbone family thus agrees with the Qwen3.5 grid on the ordering, closing the pretraining-prior confound for the token-family comparison.

\section{Per-Source Texture Breakdown}
\label{app:persource}

\paragraph{Chord-time as a proxy: the full caveat.}\label{apx:chord-time-as-a-proxy-the} Per-source breakdown behind the main text's warning, including the per-genre error comparison.

\textbf{Chord-time is a coarse, mixture-confounded proxy.} Stress-testing our own headline statistic, a per-source breakdown (this appendix) shows the aggregate flatters no arm uniformly: PMT \emph{over}-harmonizes the folk majority (25\% vs.\ a 13\% reference) and \emph{under}-fills chordal classical (50\% vs.\ 82\%), so its blended match (36\% vs.\ 36.7\%) is partly cancellation across a 69\%-folk mixture, and REMI tracks each genre's chord count more closely (per-genre unweighted per-genre mean absolute error 4.8 vs.\ PMT's 14.0). We therefore rest the quality claim on the full-distribution metrics, FMD and the timing/duration divergences (Tables~\ref{tab:fid},~\ref{tab:muspy}), which are not chord-count cancellations and locate PMT's advantage on the timing axis. Unlike the chord-time cancellation, this advantage does \emph{not} wash out per source: split into folk and non-folk, PMT leads Beat-TSD on FMD in \emph{both} ($67$ vs.\ $220$ and $337$ vs.\ $492$; this appendix), so the headline lead is not a folk-mixture artifact.
Table~\ref{tab:persource} decomposes 0.8B chord-time by data source. No arm matches every genre's target: PMT regresses toward a middle density (over-harmonizing folk, under-filling classical), while REMI/Beat-TSD track chord \emph{count} per genre more closely. This is why we do not rest the quality claim on chord-time alone, the distributional (FMD) and timing (JSD$_{ioi}$/JSD$_{dur}$) metrics, which the main text uses as primary evidence, are computed over the full note distribution and are not subject to this mixture cancellation.

\begin{table}[h]
\centering
\small
\begin{tabular}{lcccc}
\toprule
Arm (0.8B) & Folk & Classical & Balanced & Jazz \\
 & \emph{(13)} & \emph{(82)} & \emph{(84)} & \emph{(66)} \\
\midrule
\textbf{PMT} & 25 & 50 & 73 & 65 \\
Beat-TSD & 13 & 73 & 66 & 60 \\
REMI & 11 & 83 & 75 & 73 \\
PerTok & 16 & 82 & 67 & 45 \\
ABC & 2 & 53 & 45 & 50 \\
\bottomrule
\end{tabular}
\caption{Per-source chord-time \% (0.8B); per-genre reference target in \emph{italics}. PMT's aggregate match masks per-genre over/undershoot; REMI is more faithful on chord count per genre. The paper's primary quality evidence is therefore FMD and timing divergence, not chord-time.}
\label{tab:persource}
\end{table}

\paragraph{Per-source FMD: the advantage is not a folk artifact.}\label{apx:per-source-fmd-the-advantage-is} Because the frozen test is folk-heavy (69\% single-line), a natural worry is that PMT's distributional lead reflects only that single-line folk melodies suit performance-event tokenization. We test this directly by recomputing FMD \emph{within} the folk split ($n{=}69$, single-line) and the non-folk split ($n{=}31$, multi-instrument classical/balanced/jazz), each against its same-source real reference. PMT leads Beat-TSD in \emph{both} splits: FMD $67.3$ vs.\ $219.7$ on folk and $336.8$ vs.\ $491.9$ on non-folk (absolute FMD is higher for both arms on the non-folk split because its $31$-piece multi-instrument reference distribution is wider and smaller-sample). The advantage therefore holds on multi-instrument material, not just folk, and corroborates the cross-domain MidiCaps result, where the 86.6k-trained PMT-4B scores FMD $196$ against $314$--$421$ for the published systems on an entirely non-folk (pop/classical) benchmark (\S Public-Benchmark Transfer). The single-track ablation's \emph{lower} aggregate FMD (\S Ablation Studies) is thus a folk-mixture effect on the blended test, not a sign that the multi-track encoding hurts distributional fidelity: split by source, the full multi-track PMT is closest to real music in each genre.

\section{Perceptual Evaluation Details}
\label{app:perceptual}

\paragraph{Qualitative results and the automatic listener in full.}\label{apx:qualitative-results-and-the-automatic-listener} The two subsections the main text merges, each at full length.

\subsection{Qualitative Results}\label{apx:qualitative-results-2}
PMT generations are caption-conditioned in texture: folk captions yield single-line dance tunes, while classical, jazz and pop captions yield sustained chordal writing with up to $13$--$17$ simultaneous notes and coherent $5$--$9$-instrument arrangements, reaching $98$--$100\%$ chord-time on chord-requesting captions. Appendix~\ref{app:staff} shows the same caption realized by each representation, in piano roll and as engraved notation: the beat-grid arms collapse to sparser tracks and the ABC baseline to a two-staff keyboard reduction under identical backbone and budget, so the representation, not the backbone, bounds the notated texture.
\subsection{Automatic MLLM Listener}
\label{sec:perceptual}
We probe audibility with an automatic listener (Qwen3-Omni-30B) that rates and forced-choice-compares rendered audio, and read it as a \emph{weak} proxy, short clips judged by a model rather than a human ear, that scopes where a human study should look rather than delivering a verdict. The signal is \emph{material-dependent}. On a model-free control that isolates micro-timing on expressive solo piano (MAESTRO performances round-tripped two ways, micro-timing preserved vs.\ beat-quantized, identical soundfont), the judge prefers the micro-timing-preserving version $60\%$ of the time ($54/90$ on that control, three personas each $60\%$; directional, $p{=}0.07$): the timing PMT keeps is audible where it is musically active. Where it is not, the signal vanishes as expected, on the generated PMT-vs-Beat-TSD contrast the judge does \emph{not} prefer PMT ($42\%$ overall, $n{=}420$; that set is folk-heavy and the effect is entirely folk-driven, $41\%$ on folk against chance on expressive multi-instrument material, $46\%$, $n{=}126$, n.s.), consistent with a source that is already grid-quantized. Absolute naturalness places PMT-0.8B on par with real performances ($3.43$ vs.\ $3.42$) but compresses to the mean. We therefore make \emph{no} perceptual claim from this proxy; it says only that the definitive test is a properly-powered \emph{human} study on \emph{expressive} material, which is ongoing (full protocols, persona breakdown, and real-clip anchor in this appendix).

The automatic MLLM listener study of \S Perceptual in full. \emph{Absolute rating}: 1--5 naturalness from three listener personas, 360 ratings with the real ground truth as a hidden anchor; PMT-0.8B $3.43$ vs.\ real $3.42$, REMI-27B $3.47$, density-overshooting MIDI-LLM lowest at $3.31$. \emph{Forced choice}: six listener-persona agents $\times$ 70 caption-matched items $\times$ three contrasts ($1{,}260$ head-to-heads; both clips in one prompt, ``which sounds more human-performed?''; two-sided binomial). On PMT vs.\ Beat-TSD ($n{=}420$, $42\%$ overall), PMT is chosen $41\%$ on single-line folk ($n{=}294$, where micro-timing is least active) and $46\%$ on expressive multi-instrument material ($n{=}126$, n.s.). Against the \emph{real} performances, expressive material exposes a ceiling neither reaches (real preferred over PMT $68\%$, over Beat-TSD $79\%$), whereas on flat folk the real renders are themselves dispreferred ($64\%$, a render-quality confound: bare folk melodies synthesize with little dynamic life). At scale the study \emph{corrects} a smaller underpowered pilot on this same generated contrast ($n{=}90$, $58\%$) that had suggested a mild PMT preference, underscoring that short-clip MLLM judgments are weak fine-timing perceivers.

\subsection*{Human listening study: instrument and pre-registered protocol}
\label{app:humanstudy}
The automatic probe above is a proxy, and it told us two useful things: any human study has to run on
\emph{expressive} material, because the effect is at chance on the folk half of our test set, and it has to be
forced-choice, because absolute ratings compress to the mean. We therefore built the study rather than describing
it, and we fix the protocol here \emph{before} collecting data so that the analysis cannot drift toward whatever
the first responses happen to show. \textbf{At submission time the instrument is complete and data collection is
underway; Table~\ref{tab:humanstudy} is the pre-registered reporting template and carries no results yet.}

\begin{figure}[tb]
\centering
\includegraphics[width=\linewidth]{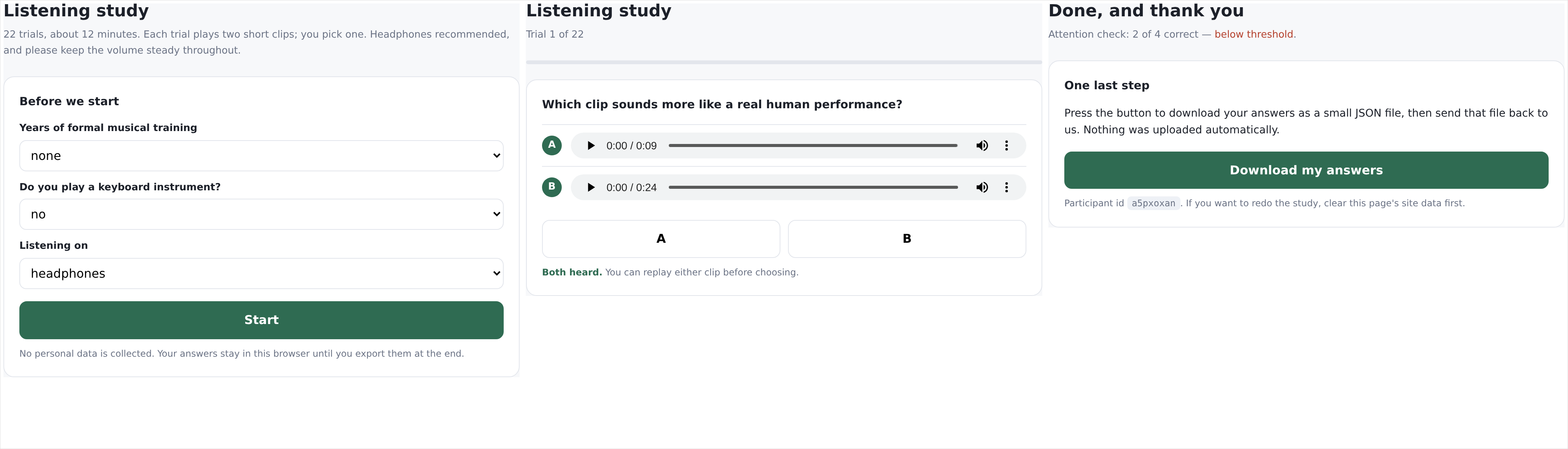}
\caption{\textbf{The listening-study interface} (captures of the deployed page). \emph{Left:} intake. \emph{Middle:} a trial. \emph{Right:} the closing screen.}
\label{fig:studyui}
\end{figure}

\paragraph{Design.}\label{apx:design} $16$ forced-choice trials per session, about $10$ minutes, sampled per session from a pool of $89$
pairs built on the $30$ chord-richest of the shared $100$ controlled-grid captions (chord-time
$96$--$100\%$ on the real reference). Four blocks: \emph{(A)} PMT-0.8B against Beat-TSD-0.8B on the same
caption, identical backbone, data and budget, $8$ trials, which is the representation question the paper
leaves open and therefore carries the most weight; \emph{(B1)} PMT-4B trained on the 6.25M corpus
against MIDI-LLM, $3$ trials; \emph{(B2)} the same model against text2midi, $3$ trials, both
system-level rather than representation-level; \emph{(C)} a real performance against the same
performance with $45\%$ of its pitches randomly displaced, $2$ trials, which has a known answer and
serves as the attention check. Both clips of a pair are cut to a common duration (the shorter of the
two, capped at $20$\,s) and normalised to $-18$\,LUFS, so neither length nor level can identify a
system; pairs whose common duration falls below $8$\,s are dropped rather than padded. Which pairs
appear, which side each system takes, and the order asked are all sampled per session. The question is identical in all three blocks: \emph{which clip sounds
more like a real human performance?}

\paragraph{Pre-registered analysis.}\label{apx:pre-registered-analysis} Block C has a known answer and gates the analysis: a session that does not answer every C trial
correctly is excluded from A, B1 and B2, a rule fixed before collection rather than applied after
inspection. The primary test is a two-sided binomial test on block A against $50\%$, with sessions as
the clustering unit; B1, B2 and C are secondary. Powering for the effect size our proxy suggests
($60/40$) at $\alpha{=}0.05$ and $80\%$ power needs $194$ A-trials, so $25$ sessions at $8$ each, and
allowing a design effect of $1.5$--$2$ for within-session correlation we target $40$--$50$ sessions. We will report
the preference rate with a bootstrap confidence interval, the exclusion count, and the by-covariate split
(trained vs.\ untrained listeners), whichever direction the result takes. A null result on block A would bound
the representation claim to the distributional evidence, which is how \S\ref{sec:perceptual} already frames it.
 \emph{Ethics.} Participation is voluntary and anonymous: a session carries a random code and three covariates (musical training, keyboard experience, playback device), and no name, e-mail or IP is recorded. Raters are told before starting what is recorded and how it is used, and may stop at any point; answers already given are retained and cannot be traced to a person. No compensation is offered and no personal data leaves the study.

\begin{table}[t]
\centering\small
\setlength{\tabcolsep}{4pt}
\begin{tabular}{llcc}
\toprule
Block & Contrast & /sess. & Result \\
\midrule
A & PMT vs.\ Beat-TSD & 8 & collecting \\
  & \emph{matched backbone/data/budget} & & \\
\midrule
B1 & ours vs.\ MIDI-LLM & 3 & collecting \\
B2 & ours vs.\ text2midi & 3 & collecting \\
  & \emph{same caption, blind} & & \\
\midrule
C & real vs.\ pitch-scrambled & 2 & collecting \\
  & \emph{attention check} & & \\
\bottomrule
\end{tabular}
\caption{\textbf{Human listening study: the pre-registered reporting template.} Blind forced choice, side randomized per trial, clips cut to a common length and loudness-normalized so neither duration nor level identifies a system, every answer written when submitted so partial sessions still count. Block A is the representation question the paper leaves open; B1 and B2 are system-level and cannot settle it; C has a known answer and gates the analysis. \emph{The table carries no results:} every block is reported in this form once it reaches its target sample, whichever direction it takes, including a null. The instrument is in Figure~\ref{fig:studyui}.}
\label{tab:humanstudy}
\end{table}

\section{Extended Baseline Comparisons}
\label{app:baselines}

\paragraph{MidiCaps public test: all rows and columns.}\label{apx:midicaps-public-test-all-rows-and} The compact table in the main text drops the tempo-bin and compression columns and the PMT-27B and zero-shot-LLM rows; they are here.

\begin{table*}[t]
\centering
\footnotesize
\setlength{\tabcolsep}{1pt}
\begin{tabular}{l@{\hspace{3pt}}ccccccc}
\toprule
System & Valid. & TB & TBT & CK & IF1 & CR$_z\downarrow$ & Poly \\
\midrule
\textbf{PMT-4B} (0-shot) & \textbf{1.00} & .24 & .44 & .10 & .46 & \textbf{4.8} & 10.6 \\
\textbf{PMT-27B} (0-shot) & \textbf{1.00} & .24 & .44 & .10 & .47 & 5.8 & \phantom{0}9.6 \\
\textbf{PMT-4B} 6.25M (0-shot) & \textbf{1.00} & .24 & .44 & .17 & .47 & 5.4 & 10.2 \\
MIDI-LLM (0-shot) & \textbf{1.00} & .24 & .44 & .32 & \textbf{.63} & 5.6 & 11.7 \\
text2midi (in-dom.) & 0.99 & .26 & .44 & --$^{\dagger}$ & .34 & 5.7 & 18.5 \\
ChatMusician (ABC) & \textbf{1.00} & -- & .45 & .36 & .29 & -- & \phantom{0}3.8 \\
MIDILM & \textbf{1.00} & .46 & .70 & .39 & .54 & 5.6 & 16.8 \\
Amadeus & \textbf{1.00} & .72 & \textbf{.90} & \textbf{.57} & .62 & 10.2 & 72.1 \\
\midrule
\multicolumn{8}{@{}l}{\emph{Zero-shot general LLMs (no music training; prompt$\rightarrow$ABC$\rightarrow$MIDI)}}\\
GPT-5.5 & 0.58 & -- & -- & .66 & .32 & -- & 11.4 \\
Qwen3.5-4B & \textbf{1.00} & -- & -- & .29 & .29 & -- & \phantom{0}2.5 \\
\bottomrule
\end{tabular}
\caption{MidiCaps public test (200 captions disjoint from our training data; identical extractors; adherence metrics are caption-relative, so no reference row applies). \textbf{Val.}\ $=$ validity$\uparrow$; \textbf{TB}/\textbf{TBT} $=$ tempo within $\pm$5\,bpm / $\pm$10\% of the caption tempo$\uparrow$; \textbf{CK} $=$ correct-key rate$\uparrow$; \textbf{IF1} $=$ instrument-F1 vs.\ the caption's named instruments$\uparrow$; \textbf{CR$_z$} $=$ zlib compression ratio$\downarrow$ (lower $=$ less repetitive); \textbf{Poly} $=$ max-polyphony. $^{\dagger}$text2midi's decoded MIDI is unparseable by the music21 key-finder. ``--'' marks metrics not extracted under an identical protocol.}
\label{tab:midicapsfull}
\end{table*}

The first two PMT rows are the 86.6k controlled-grid models; the third is the 6.25M one-epoch model, whose extra data nearly doubles key adherence (CK $.17$ vs.\ $.10$) at the same validity and tempo adherence. \emph{MIDILM} and \emph{Amadeus} lead caption-attribute adherence but over-texture heavily (max-poly $16.8$ and $72.1$ against a $6.1$ reference). \emph{ChatMusician} leads key adherence ($.36$) with near-single-line ABC output (max-poly $3.8$), the polyphony limit our controlled study predicts.

\paragraph{Newest systems and cross-modal baselines in full.}\label{apx:newest-systems-and-cross-modal-baselines} Per-system discussion behind the main text's summary, plus the CLAP comparison.

\paragraph{Newest dedicated systems (2025--26).}\label{apx:newest-dedicated-systems-2025-26} MIDILM~\citep{li2026midilm} (distinct from the MIDI-LLM baseline and the MIDI-Like tokenizer) and Amadeus~\citep{su2026amadeus} beat PMT on caption-attribute adherence (Amadeus TBT .90/CK .57 vs.\ PMT .44/.10) but by \emph{over-texturing} (max-polyphony 16.8 and 72.1 vs.\ a 6.1 reference). On the axis our thesis targets, distance to the real note distribution, PMT is decisively closest: same-protocol FMD \textbf{196} for the 86.6k-trained PMT-4B and \textbf{211} for the 6.25M one-epoch model, vs.\ MIDI-LLM 314, MIDILM 351, Amadeus 380, text2midi 421 ($1.5$--$2.1\times$ farther). Two things are worth separating here. First, no competitor is both adherent \emph{and} close-to-real. Second, the two PMT rows show the corpus buying adherence at a small fidelity cost: scaling $86.6$k$\to$$6.25$M nearly doubles caption-key adherence (CK $.10\!\to\!.17$, CKD $.17\!\to\!.32$) while FMD moves from $196$ to $211$, so the corpus helps where captions are the bottleneck and the representation result stands on the controlled grid either way.

\paragraph{Cross-modal and zero-shot baselines (this appendix).}\label{apx:cross-modal-and-zero-shot-baselines} A CLAP comparison against audio generators (MusicGen, AudioLDM\,2) does not cleanly rank quality (caption-matched real music scores \emph{lowest}, $31.5$), so we report but do not rank by it (why we omit Fr\'echet Audio Distance is in the appendix). Zero-shot general LLMs confirm prompting is not enough: GPT-5.5 leads key adherence ($.66$) by copying the stated key yet is playable for only $58\%$ of captions, and Qwen3.5-4B compiles $100\%$ only via trivial single-line tunes (poly $2.5$), so neither is reliable \emph{and} rich, separating our contribution from ``just prompt a bigger model.'' The reliable distributional axis remains FMD, where PMT leads.

\paragraph{Reading Table~\ref{tab:midicaps} row by row.}\label{apx:reading-table-row-by-row} MidiCaps public test (200 captions disjoint from our training data; identical extractors; adherence metrics are caption-relative, so no reference row applies; MidiCaps source MIDI not redistributed). \textbf{Val.}\ $=$ validity$\uparrow$; \textbf{TB}/\textbf{TBT} $=$ tempo within $\pm$5\,bpm / $\pm$10\% of the caption tempo$\uparrow$; \textbf{CK} $=$ correct-key rate$\uparrow$ (music21 Krumhansl); \textbf{IF1} $=$ instrument-F1 vs.\ the caption's named instruments$\uparrow$; \textbf{CR$_z$} $=$ zlib compression ratio of the note stream$\downarrow$ (lower $=$ less repetitive); \textbf{Poly} $=$ max-polyphony. PMT, never trained on MidiCaps-style captions, matches text2midi's tempo adherence (TBT .44 vs.\ .44) at perfect validity while text2midi's 18.5 polyphony reproduces its at-home imprint (Table~\ref{tab:baselines}). The first two PMT rows are the 86.6k controlled-grid models; the third is the 6.25M one-epoch model, whose extra data nearly doubles key adherence (CK $.17$ vs.\ $.10$) at the same validity and tempo adherence. \emph{ChatMusician} (a leading ABC music LLM) leads key adherence ($.36$) but its ABC output is near-single-line (max-poly $3.8$, $2.2$ instruments), the ABC polyphony limit our controlled study predicts. \emph{MIDILM} and \emph{Amadeus} (2025) lead caption-attribute adherence (Amadeus TBT .90/CK .57/IF1 .62 vs.\ PMT .44/.10/.46) but over-texture heavily (max-poly 16.8 and 72.1 vs.\ a 6.1 reference); on identical-protocol FMD they sit $1.8$--$1.9\times$ farther from real than PMT (see text). \emph{Zero-shot general LLMs} (GPT-5.5, Qwen3.5-4B prompted to write ABC): GPT-5.5 copies the stated key (CK .66) yet is playable for only $58\%$ of captions, and Qwen3.5-4B compiles $100\%$ only via trivial single-line tunes (poly $2.5$). $^{\dagger}$text2midi's decoded MIDI is unparseable by the music21 key-finder. ``--'' marks metrics not extracted under an identical protocol to the trained rows.

\begin{table*}[t]
\centering
\small
\setlength{\tabcolsep}{2.2pt}
\begin{tabular}{lccccccc}
\toprule
System & Val.$\uparrow$ & Chord\%$\rightarrow$ & Poly$\rightarrow$ & Notes & Dur & FMD$\downarrow$ & C3$\uparrow$ \\
\midrule
\textit{Reference (real)} & -- & \textit{36.7} & \textit{6.1} & -- & -- & \textit{0} & -- \\
\textbf{PMT-0.8B} & \textbf{1.00} & \textbf{36.2$\pm$5.6} & 5.5 & 181 & 31 & \textbf{159$\pm$8} & \textbf{.146} \\
MIDI-LLM 1B & 0.96 & 72 & 11.9 & 313 & 18.1 & 507 & .085 \\
text2midi & 0.97 & 66 & 17.3 & 224 & 14.3 & 727 & .076 \\
\midrule
\textit{AMT} (uncond.) & 1.00 & 79 & 36.8 & 159 & 21.9 & 720 & -- \\
\bottomrule
\end{tabular}
\caption{Published text-to-MIDI systems under our harness: identical 100 captions and metrics. \textbf{Val.}\ $=$ validity$\uparrow$; \textbf{Chord\%}/\textbf{Poly} $=$ chord-time / max-polyphony ($\to$ref); \textbf{Notes/Dur} $=$ mean note count / duration (s); \textbf{FMD}$\downarrow$ (protocol of Table~\ref{tab:fid}); \textbf{C3} $=$ CLaMP-3 caption--music alignment$\uparrow$ (same MIDI$\to$MTF path). The last row adds the Anticipatory Music Transformer \citep{thickstun2023anticipatory} through the same harness ($240$ samples); it is \emph{unconditional}, so it takes no caption and no alignment score applies. NotaGen/MuPT are not free-text-conditioned and serve as as-reported context.}
\label{tab:baselines}
\end{table*}

MIDI-LLM imprints its Lakh-heavy training distribution regardless of the caption: double the reference chord density, at $17$ notes/s. PMT instead matches the reference within 1pp in expectation (per-seed std $\pm5.6$, Table~\ref{tab:main}), is $3$--$4\times$ closer in FMD, and scores $1.7$--$1.9\times$ higher caption alignment. The Anticipatory Music Transformer row makes the imprinting effect most visible, and it is the reason we include an unconditional model at all: it is the strongest published exemplar of the fine-arrival-time token family PMT also belongs to, yet at perfect validity it emits $14.8$-instrument arrangements with $79\%$ chord-time and max-polyphony $36.8$, roughly $6\times$ the reference's $6.1$, which places it farthest of all systems from the reference distribution (FMD $720$). Fine arrival-time tokens alone therefore do not buy distributional fidelity; the surrounding recipe, what the model is trained to reproduce, does.

Table~\ref{tab:baselines}: on identical captions, \emph{both} published systems imprint their training distributions, MIDI-LLM emits 72\% chord-time and text2midi 66\% with 17.3 max-polyphony against a 36.7\%/6.1 reference, which a more-is-better reading would reward and our distance-to-reference protocol exposes; both are short (14--18\,s vs.\ our $\sim$31\,s) and far from the reference (FMD 507 and 727 vs.\ PMT's 159$\pm$8 in the same protocol). PMT matches within 1pp in aggregate (per genre this is partly a mixture cancellation, Appendix~\ref{app:persource}; the timing-axis advantage does \emph{not} cancel), and the advantage is not specific to conditioning: the closest prior system, the Anticipatory Music Transformer~\citep{thickstun2023anticipatory}, is an LLM over fine-timing performance events like PMT but is \emph{unconditional} and piano-only, so it cannot enter our text-conditioned controlled swap; in a matched unconditional protocol a 26M from-scratch PMT beats it (128M) and MIDI-LLM (1B) on FMD (222 vs.\ 518 and 274), a contextual comparison since, as with MIDI-LLM, AMT is evaluated away from its training distribution. The asymmetry is real: the published systems are evaluated \emph{away} from their Lakh/MidiCaps training distribution against a folk-heavy reference, so the decisive, distribution-invariant evidence is the \emph{bidirectional} test (\S Public-Benchmark Transfer), where MIDI-LLM emits $71\%$ chord-time on MidiCaps versus $72\%$ on our captions, tracking its training data rather than the caption (Figure~\ref{fig:pipeline_compare} shows this qualitatively).

\begin{figure*}[tp]
\centering
\includegraphics[width=\textwidth]{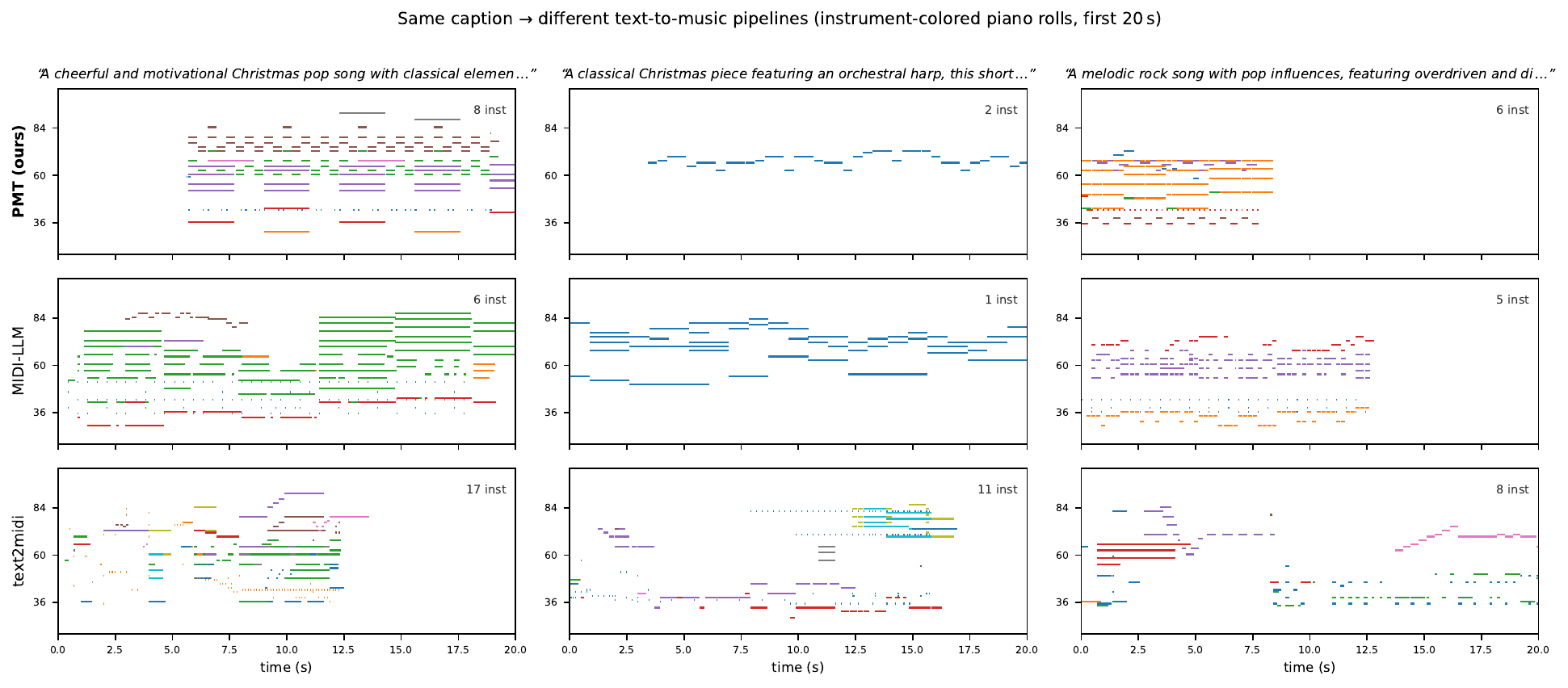}
\caption{\textbf{Full-pipeline qualitative comparison on MidiCaps.} The same caption (column title) drives three pipelines; instrument-colored piano rolls, first $20$\,s. PMT's instrument count \emph{tracks the caption}, rich for pop/rock prompts, sparse for the solo-harp prompt (center); MIDI-LLM collapses to a single track on the harp prompt yet thickens elsewhere, and text2midi consistently over-instruments ($11$--$17$ tracks), the imprinting Table~\ref{tab:baselines} quantifies as $66$--$72\%$ chord-time and FMD $507$--$727$ (vs.\ PMT's $36\%$/$159$).}
\label{fig:pipeline_compare}
\end{figure*}

\paragraph{Cross-modal (audio generators).}\label{apx:cross-modal-audio-generators} We render every system's output under one soundfont (audio models generate natively) and measure CLAP~\citep{wu2023large} caption--audio cosine on 200 captions against MusicGen~\citep{copet2023simple} and AudioLDM\,2~\citep{liu2024audioldm}. CLAP does not cleanly rank quality: caption-matched \emph{real} music scores lowest ($31.5$ vs.\ $35$--$50$ for generations), so CLAP rewards caption-keyword echoing over realism. Ranking: AudioLDM\,2 $50.2$, MIDILM $48.4$, MusicGen $47.5$, Amadeus $46.1$, PMT $42$ (mid-pack, consistent with weaker adherence); tellingly a \emph{symbolic} model (MIDILM) sits between the two audio generators despite their timbre advantage. We omit Fr\'echet Audio Distance: it would score FluidSynth-rendered symbolic output against natively-generated audio, conflating soundfont timbre with composition. PMT and audio generators serve different deliverables (an editable score vs.\ a fixed waveform), so the cross-modal number contextualizes rather than supersedes the symbolic result.

\paragraph{Zero-shot general LLMs.}\label{apx:zero-shot-general-llms} Prompting GPT-5.5 and Qwen3.5-4B to write ABC (Table~\ref{tab:midicaps}, bottom): GPT-5.5 leads key adherence ($.66$) by copying the stated key into the ABC header yet is playable for only $58\%$ of captions (the rest non-compiling, empty, or degenerate); Qwen3.5-4B compiles $100\%$ but only by writing trivial single-line tunes (poly $2.5$, one instrument). Neither delivers reliable \emph{and} rich generation, separating our contribution from ``just prompt a bigger model''; the vocabulary-extension and caption-masked SFT of \S Method are what convert an LLM into a dependable generator ($1.00$ validity, $10$+ polyphony).

\begin{figure*}[tp]
\centering
\includegraphics[width=\textwidth]{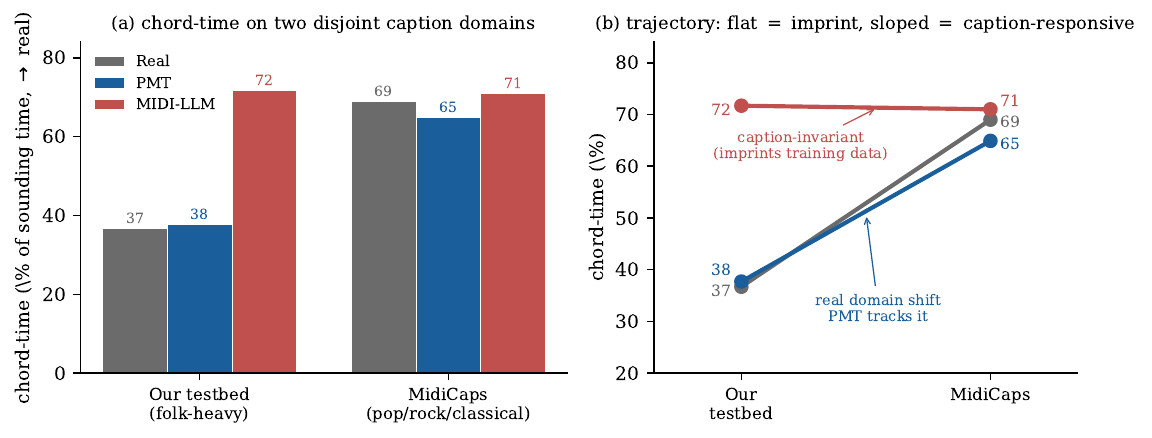}
\caption{\textbf{Imprinting diagnostic} (chord-time via \texttt{poly\_stats}, $n{=}200$ per domain). \emph{(a)}~Per-domain rates. \emph{(b)}~The same as trajectories: flat $=$ caption-invariant.}
\label{fig:imprint}
\end{figure*}

\paragraph{Audio-domain view of over-texturing.}\label{apx:audio-domain-view-of-over-texturing} Figure~\ref{fig:spectro} renders the same held-out caption (a solo-piano jazz ballad) through all three pipelines under one soundfont and shows the mel-spectrograms on a shared time axis. This is the audio-domain counterpart of the chord-time statistic, and it is a \emph{textural}, not a pitch-correctness, comparison: the real reference and PMT share an articulated character (discrete attacks, decay, inter-phrase silence, chord-time $56$ vs.\ $61\%$), whereas MIDI-LLM fills the low--mid register with a near-continuous energy wall ($80\%$ chord-time), the same over-texturing its Lakh-heavy training imprints (Figure~\ref{fig:imprint}). We show it as illustrative context, not evidence, the quantitative claims rest on the distance-to-reference tables.

\begin{figure}[tb]
\centering
\includegraphics[width=\columnwidth]{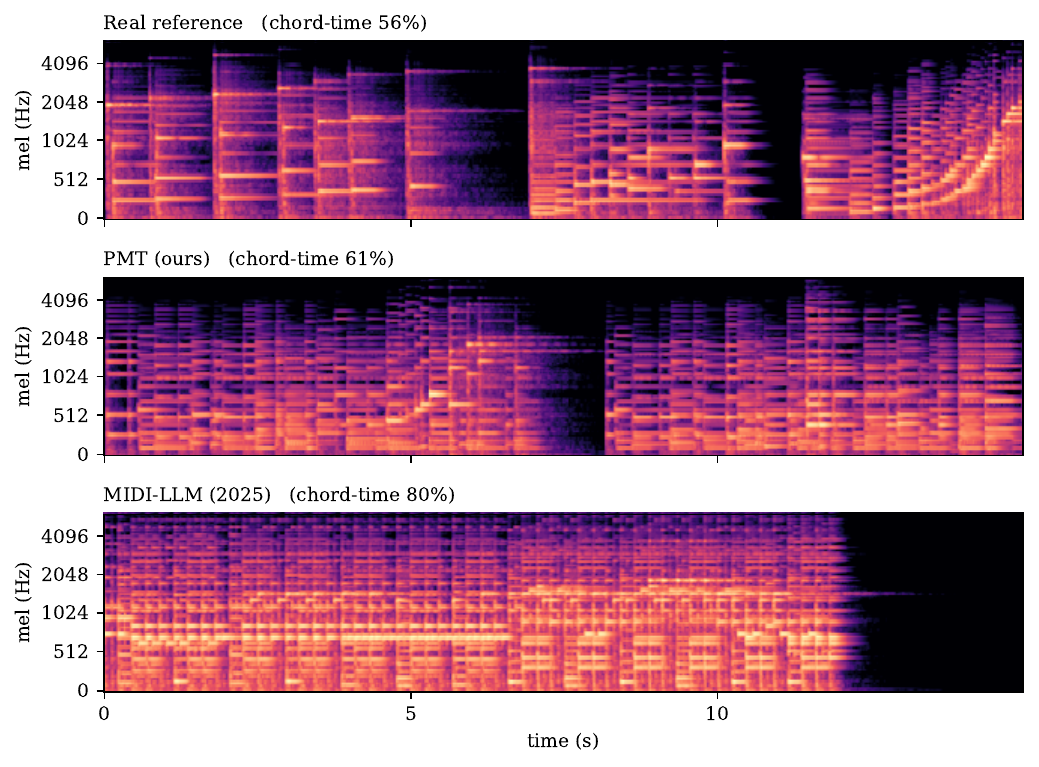}
\caption{\textbf{Audio-domain view (mel-spectrograms, one soundfont, shared time axis).} The same caption drives three pipelines; chord-time (paper metric) annotated per panel. Real and PMT share an articulated, silence-punctuated texture; MIDI-LLM emits a denser sustained energy wall, the audio-domain signature of the over-texturing quantified in Table~\ref{tab:baselines} and Figure~\ref{fig:imprint}. Illustrative context for a single caption, not a quantitative claim.}
\label{fig:spectro}
\end{figure}

\section{Additional Structural Metrics}
\label{app:structural}
\subsection{Efficiency and Ceilings}\label{apx:efficiency-and-ceilings}
\begin{table}[t]
\centering
\small
\begin{tabular}{lcccc}
\toprule
Repr. & Vocab & Tok/note$\downarrow$ & RT ceiling$\uparrow$ & Validity$\uparrow$ \\
\midrule
\textbf{PMT} & 609 & 4.03 & \textbf{39.1\%} & 1.00 \\
Beat-TSD & 475 & 4.79 & 35.5\% & 1.00 \\
REMI & 444 & 4.92 & 35.5\% & 1.00 \\
MIDI-Like & 562 & $\sim$4.0 & 35.7\% & 1.00 \\
PerTok & 473 & 6.8 & 35.0\% & 1.00 \\
ABC & (text) & \textbf{2.84} & 32.5\% & 1.00* \\
\bottomrule
\end{tabular}
\caption{Representation efficiency and expressiveness. \textbf{Vocab} $=$ music-vocabulary size; \textbf{Tok/note} $=$ tokens/note on generations (lower $=$ more compact); \textbf{RT ceiling} $=$ chord-time surviving a model-free encode--decode of real music (higher $=$ more expressive; raw reference 36.7\%); \textbf{Validity} $=$ fraction decoding to playable MIDI (*ABC: abc2midi compile rate).}
\label{tab:eff}
\end{table}

\begin{figure*}[tp]
\centering
\includegraphics[width=0.92\textwidth]{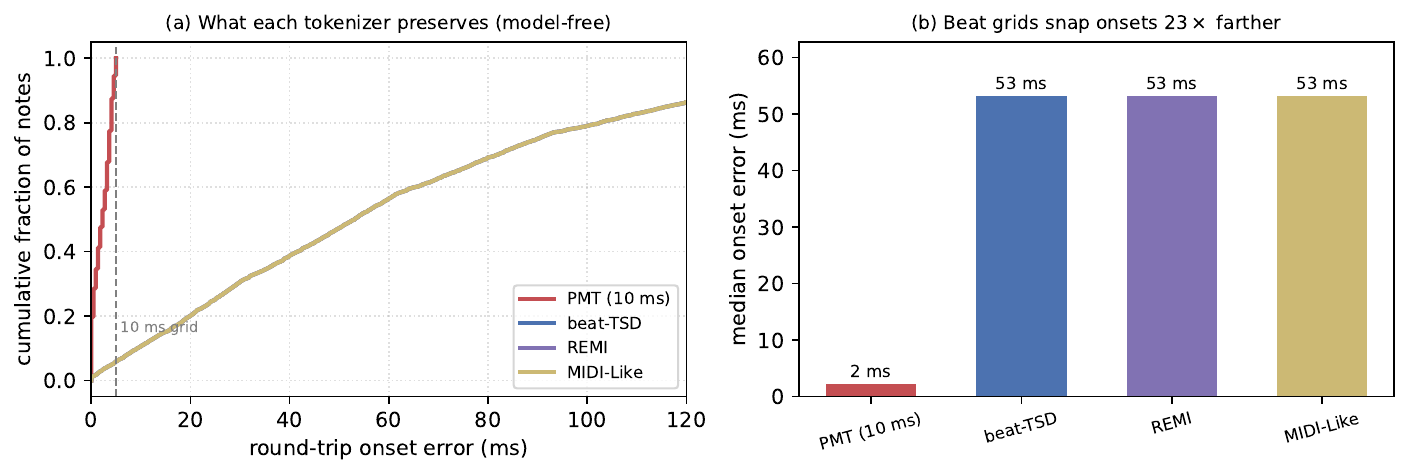}
\caption{\textbf{What each \emph{tokenizer} preserves, model-free.} Encode$\rightarrow$decode of $80$ real pieces per representation; no model is involved.}
\label{fig:tokenizer_compare}
\end{figure*}

Table~\ref{tab:eff} quantifies the trade-off. ABC is most compact ($2.84$ tokens/note) but compactness does not convert into quality; PMT pays a moderate token cost yet has the highest round-trip ceiling ($39.1\%$ vs.\ $35.5\%$ for beat grids) and, at $0.8$B, the highest attainment of it ($.92$). Inference wall-clock scales with tokens/note on a shared backbone, so PMT's compact $4.03$-token/note stream generates as fast as the leanest MIDI-event arm (MIDI-Like, ${\sim}4.0$) and faster than Beat-TSD ($1.19\times$), REMI ($1.22\times$), and PerTok ($1.69\times$) for the same music content (measured $23.8$ tok/s at 4B); PerTok's better onset timing thus also costs $1.7\times$ PMT's generation time. The mechanism is model-free: a round-trip preserves onsets to a $2.3$\,ms median under PMT but $53$\,ms under every beat grid (Figure~\ref{fig:tokenizer_compare}), timing lost \emph{before} any training. The lead holds at every scale (Table~\ref{tab:fid}: FMD $152$--$159$ for PMT vs.\ $272$--$286$ beat grids and $388$--$407$ ABC, beat-grid FMD flat from $0.8$B to $27$B); PerTok sits between ($185$--$188$), and a grouped-encoding control (Structured, FMD $287$) sits \emph{with} the beat grids, so the advantage is the $10$\,ms grid, not the flat layout. The metric decomposition (Table~\ref{tab:muspy}) localizes PMT's edge to timing (JSD$_{ioi}$ ${\approx}4\times$ lower than any beat grid) and duration while harmony ties. We surface the one axis that runs the other way: on symbolic \emph{caption alignment} (CLaMP-3, Table~\ref{tab:fid}) the beat grids slightly edge PMT ($.151$ vs.\ $.146$), consistent with PMT's weaker caption adherence. The advantage we claim is the marginal note and timing \emph{distribution}, not caption-following, and we hold both to the same standard: the CLaMP-3 gap ($.146$ vs.\ $.151$) is within noise, exactly as we treat similarly small differences elsewhere. A training-free round-trip on whole pieces further shows the advantage is overwhelmingly \emph{learnability}, not format ceiling: the model-free FMD gap is only ${\approx}8$ points, under $10\%$ of the ${\approx}120$-point trained gap, also ruling out truncation (full numbers in Appendix~\ref{app:extmetrics}).

\textbf{Robustness to piece complexity.} Public text-to-MIDI benchmarks skew short, so they cannot reveal whether a system degrades on longer, denser material. We build a \emph{tiered} benchmark: 240 real pieces stratified by PMT-token length into L1 ($<$1k), L2 (1--2k), L3 (2--4k), L4 (4--8k tokens), each caption paired with its source MIDI as a per-tier reference; PMT-4B (6.25M-corpus checkpoint) generates from all 240 (Table~\ref{tab:tiered}). Both PMT and MIDILM stay $100\%$ valid at every tier, but PMT is closer to the per-tier real distribution at \emph{every} level ($298/348/357/396$ vs.\ MIDILM's $422/378/377/436$), while MIDILM over-textures throughout (max-polyphony $15$--$30$, chord $64$--$73\%$ vs.\ a real ${\approx}6$/$37\%$; PMT holds $6$--$15$ and $26$--$38\%$). PMT's distributional advantage thus \emph{persists across complexity} and its FMD degrades gracefully with length rather than collapsing. (Per-tier FMD uses each tier's real subset as reference, a within-tier trend, not the MidiCaps-200 scale of Table~\ref{tab:midicaps}.)

\begin{table}[t]
\centering\small\setlength{\tabcolsep}{2.2pt}
\begin{tabular}{lcccccc}
\toprule
 & \multicolumn{2}{c}{FMD$\downarrow$} & \multicolumn{2}{c}{Max-poly ($\to$6)} & \multicolumn{2}{c}{Chord\% ($\to$37)} \\
\cmidrule(lr){2-3}\cmidrule(lr){4-5}\cmidrule(lr){6-7}
Tier & \textbf{PMT} & MIDILM & \textbf{PMT} & MIDILM & \textbf{PMT} & MIDILM \\
\midrule
L1 ($<$1k) & \textbf{298} & 422 & \phantom{0}6.2 & 28.4 & 25.6 & 64.3 \\
L2 (1--2k) & \textbf{348} & 378 & 14.6 & 15.2 & 37.6 & 67.3 \\
L3 (2--4k) & \textbf{357} & 377 & \phantom{0}6.8 & 16.6 & 36.1 & 68.1 \\
L4 (4--8k) & \textbf{396} & 436 & \phantom{0}6.7 & 30.5 & 37.8 & 73.0 \\
\bottomrule
\end{tabular}
\caption{\textbf{Complexity-tiered robustness} (240 real pieces, 60/tier; PMT-4B on the 6.25M corpus vs.\ MIDILM).}
\label{tab:tiered}
\end{table}

Beyond the timing/dynamics axes, we report a standard MGEval-style structural battery (Table~\ref{tab:struct}) on 0.8B generations. It reinforces the paper's central decomposition: PMT and the beat-grid arms are \emph{statistically indistinguishable} on every structural feature, note density, pitch range, used pitch-classes, mean note length, pitch-class entropy, because they encode the same note content; the representations diverge only on \emph{timing} (JSD$_{ioi}$, Table~\ref{tab:muspy}) and holistic distribution (FMD). One shared weakness is visible and honestly noted: all token models over-produce note density ($\sim$7.4/s vs.\ a 3.8/s reference), with ABC alone closer (4.4) but worst elsewhere.

\begin{table}[h]
\centering
\small
\begin{tabular}{lccccc}
\toprule
Arm (0.8B) & N/s & PRange & \#PC & NoteLen & PC-Ent \\
\midrule
Reference & 3.8 & 20 & 7.2 & .43 & 2.61 \\
\midrule
\textbf{PMT} & 7.5 & 29 & 7.9 & .44 & 2.64 \\
Beat-TSD & 7.4 & 29 & 7.9 & .44 & 2.63 \\
REMI & 6.7 & 29 & 8.0 & .46 & 2.68 \\
MIDI-Like & 7.1 & 28 & 7.9 & .47 & 2.65 \\
ABC & 4.4 & 26 & 7.8 & .39 & 2.63 \\
\bottomrule
\end{tabular}
\caption{Additional structural features (N/s note density, PRange pitch range, \#PC used pitch-classes, NoteLen mean note length in s, PC-Ent pitch-class entropy). PMT $\approx$ beat grids on \emph{every} structural axis, the representation advantage is specifically timing, not structure. All token models share a note-density overshoot.}
\label{tab:struct}
\end{table}

\subsection*{Comprehensive generative metrics: structure, diversity, novelty, harmony}\label{apx:comprehensive-generative-metrics-structure-diversity-novelty}
To evaluate as broadly as recent surveys recommend, we report four axes rarely included in text-to-MIDI papers (Table~\ref{tab:comprehensive}, 0.8B, all seeds): \textbf{CR} (compression ratio of the quantized note stream, a flat proxy for the COSIATEC compression used by text2midi, capturing long-term structure/repetition; closer to the real value is better), \textbf{Div} (intra-set diversity: mean pairwise pitch-class-histogram cosine distance among a system's generations; higher $=$ more varied output), \textbf{Nov} (novelty: mean nearest-neighbour distance from generations to the real set; higher $=$ less memorization), and \textbf{PCTM} (JSD of the $12{\times}12$ pitch-class transition matrix vs.\ real, a harmony/key-structure feature; lower is better). The picture is \emph{differentiated}: PMT is closest to real on structure (CR $3.25$ vs.\ $3.01$--$3.09$ for beat grids; real $3.52$) and leads on diversity ($.40$) and novelty ($.075$), whereas on harmonic transitions (PCTM) all representations are close and ABC/Beat-TSD are marginally nearer. This is exactly what the thesis predicts: PMT's advantage concentrates in timing, structure, and expressive variety, not harmony, which every representation encodes alike. (These agree with the note-density caveat of Table~\ref{tab:struct}: all token arms over-produce density; NPB is reported in the release but omitted here as high-variance.)

\begin{table}[t]
\centering\small
\begin{tabular}{lrrrr}
\toprule
Repr.\ (0.8B) & CR$_{\to3.52}$ & Div$\uparrow$ & Nov$\uparrow$ & PCTM$\downarrow$ \\
\midrule
\textbf{PMT}      & \textbf{3.25} & \textbf{.403} & \textbf{.075} & .085 \\
PerTok            & 3.27          & .383          & .067          & .074 \\
REMI              & 3.08          & .394          & .066          & .085 \\
Beat-TSD          & 3.09          & .396          & .070          & .068 \\
MIDI-Like         & 3.01          & .398 & -- & -- \\
ABC               & 3.36          & .355          & .067          & \textbf{.055} \\
\bottomrule
\end{tabular}
\caption{Comprehensive generative metrics at 0.8B. \textbf{CR}$=$compression ratio (structure; real $3.52$), \textbf{Div}$=$intra-set diversity$\uparrow$, \textbf{Nov}$=$novelty vs.\ real$\uparrow$, \textbf{PCTM}$=$pitch-class-transition JSD vs.\ real$\downarrow$. ``--'' marks axes not measured for that arm under this protocol. PMT leads structure, diversity and novelty; harmonic transitions are representation-agnostic.}
\label{tab:comprehensive}
\end{table}

\section{Extended Representation Grid}
\label{app:extgrid}

\paragraph{The controlled grid at every scale.}\label{apx:the-controlled-grid-at-every-scale} The main text shows the two extremes; the 2B, 4B and 9B blocks are here, on the same four metrics.

\begin{table}[t]
\centering
\small
\setlength{\tabcolsep}{3.5pt}
\begin{tabular}{llcccc}
\toprule
Size & Repr. & FMD$\downarrow$ & JSD$_{ioi}\downarrow$ & Chord\%$\rightarrow$ & Att.$\uparrow$ \\
\midrule
\multicolumn{6}{l}{\emph{Reference: FMD 0, JSD$_{ioi}$ 0, chord-time 36.7\%}}\\
\midrule
0.8B & \textbf{PMT} & \textbf{159$\pm$8} & \textbf{0.032$\pm$0.002} & \textbf{36.2$\pm$5.6} & \textbf{0.92} \\
0.8B & Beat-TSD & 286$\pm$1 & 0.152$\pm$0.008 & 30.8$\pm$0.9 & 0.87 \\
0.8B & REMI & 285$\pm$1 & 0.152$\pm$0.007 & 30.2$\pm$1.7 & 0.85 \\
0.8B & MIDI-Like & 285$\pm$2 & 0.160$\pm$0.006 & 26.6$\pm$2.0 & 0.75 \\
0.8B & PerTok & 188$\pm$7 & \textbf{0.011$\pm$0.001} & 30.2$\pm$2.6 & 0.86 \\
0.8B & Structured & 287 & 0.121 & 30.3 & 0.85 \\
0.8B & ABC & 406 & 0.079 & 16.2 & 0.50 \\
\midrule
2B & \textbf{PMT} & \textbf{152$\pm$9} & \textbf{0.037$\pm$0.003} & \textbf{32.6$\pm$2.7} & \textbf{0.83} \\
2B & Beat-TSD & 280$\pm$3 & 0.153$\pm$0.006 & 26.7$\pm$1.6 & 0.75 \\
2B & REMI & 281$\pm$2 & 0.161$\pm$0.001 & 30.5$\pm$0.4 & 0.86 \\
2B & MIDI-Like & 281$\pm$2 & 0.152$\pm$0.010 & 26.5$\pm$1.7 & 0.74 \\
2B & PerTok & 185 & \textbf{0.022} & 27.1$\pm$0.5 & 0.77 \\
2B & Structured & 287 & 0.112 & 27.4 & 0.77 \\
2B & ABC & 407 & 0.076 & 15.4 & 0.47 \\
\midrule
4B & \textbf{PMT} & \textbf{158$\pm$4} & \textbf{0.035$\pm$0.009} & \textbf{32.2$\pm$2.3} & \textbf{0.82} \\
4B & Beat-TSD & 271 & 0.149 & 24.5 & 0.69 \\
4B & REMI & 281$\pm$3 & 0.154$\pm$0.001 & 26.1$\pm$1.3 & 0.74 \\
4B & MIDI-Like & 280$\pm$2 & 0.150$\pm$0.006 & 24.9$\pm$0.9 & 0.70 \\
4B & Structured & 302 & 0.116 & 32.7 & 0.92 \\
4B & ABC & 394 & 0.085 & 16.5 & 0.51 \\
\midrule
9B & ABC & 388 & 0.078 & 15.1 & 0.46 \\
\midrule
27B & \textbf{PMT} & \textbf{156} & \textbf{0.048} & \textbf{35.8$\pm$0.5} & \textbf{0.92} \\
27B & REMI & 272 & 0.144$\pm$0.005 & 28.8$\pm$0.4 & 0.81 \\
27B & MIDI-Like & 285 & 0.148 & 27.9 & 0.78 \\
\bottomrule
\end{tabular}
\caption{Controlled comparison, 100 frozen-test captions per cell, mean$\pm$std over seeds (single value $=$ one seed). Identical backbone, data, budget and decoding; validity $0.99$--$1.00$.}
\label{tab:maingrid}
\end{table}

\begin{figure}[tb]
\centering
\includegraphics[width=\columnwidth]{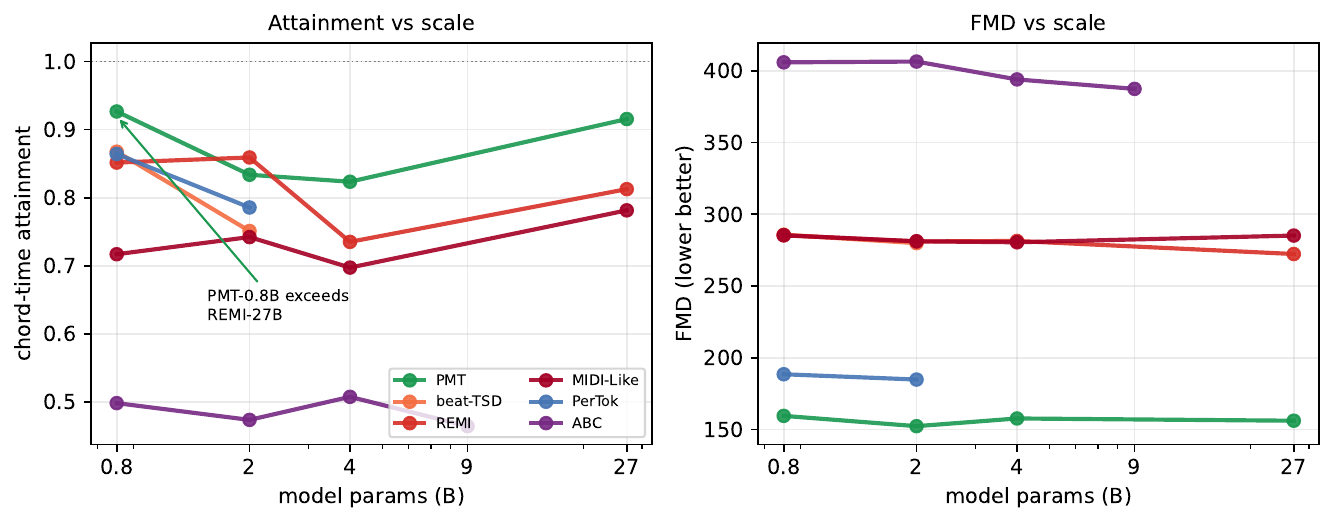}
\caption{Representation Pareto-dominates scale. Attainment (left, higher better) and FMD (right, lower better) vs.\ backbone size per representation.}
\label{fig:scaling}
\end{figure}

\paragraph{Reading the controlled grid in full.}\label{apx:reading-the-controlled-grid-in-full} The ordering, the PerTok comparison, and what separates the two performance-resolution arms.

Table~\ref{tab:main} orders the representations PMT $>$ Beat-TSD $\approx$ PerTok $\approx$ REMI $>$ MIDI-Like $\gg$ ABC. Performance resolution is not unique to PMT: PerTok, another performance-resolution encoding, joins PMT in the low-FMD/low-JSD$_{ioi}$ corner (Figure~\ref{fig:landscape}) and even edges it on onset divergence ($.011$ vs.\ $.032$), corroborating that performance resolution ($10$\,ms timing \emph{and} per-note velocity), not the beat grid, is the binding distinction. PMT's edge over PerTok is \emph{completeness}, not timing: PerTok spends $6.8$ tokens/note against PMT's $4.0$ (so it truncates more at the shared $1024$-token block) and carries no native velocity, which is why it reaches the best onset divergence yet a \emph{higher} FMD ($188$ vs.\ $159$); we thus credit the finding to performance-resolution encodings as a class. Crucially the gap does not close with scale: REMI's FMD plateaus at $272$ and JSD$_{ioi}$ at $.144$ even at 27B, far worse than PMT's \emph{0.8B} ($159$, $.032$), and MIDI-Like behaves identically, so \textbf{PMT-0.8B beats REMI-27B on both FMD ($159$ vs.\ $272$, $1.7\times$) and onset timing ($.032$ vs.\ $.144$, $4.5\times$)}, gaps a $34\times$ parameter increase does not close. ABC is the outlier, flat at $46$--$51\%$ attainment from 0.8B to 9B: to our knowledge the first \emph{controlled} evidence that, whatever ABC's merits for understanding and folk-tune modeling, its lack of native performance fields caps high-fidelity multi-instrument generation (model-free ceiling only $32.5\%$, Table~\ref{tab:eff}, before any training).

\paragraph{Scaling behaviour in full.}\label{apx:scaling-behaviour-in-full} The Pareto reading, the flat-slope fits, and what scale does buy per representation.

The clean reading of scale is cross-representation (Figure~\ref{fig:scaling}): PMT's FMD band ($152$--$159$ across $0.8$B--$27$B) lies entirely below every beat-grid arm's ($272$--$286$), so the efficiency claim is a Pareto statement, not a fragile single point. Fitting FMD against $\log_{10}$-parameters, every arm's slope is within noise of flat, leaving a near-constant ${\approx}125$-point PMT-vs-beat-grid offset that scale does not erode. We do not extrapolate a crossover from a fit this flat over $1.5$ decades; the defensible statement is that there is \emph{no} closing trend within range, so representation, not parameter count, is the binding variable at every scale we measure. Two conclusions follow: on distributional fidelity the representation outweighs the full $34\times$ parameter range we test, and how much scale buys \emph{depends on} the representation (ABC's chord-time stays flat at $15$--$16\%$ from $0.8$B to $9$B while PMT reaches reference parity by $27$B). Data scales where parameters do not: growing the corpus $1\text{k}\!\to\!12.7\text{k}$ improves FMD monotonically ($196\!\to\!167$), while a $35$B-A3B MoE point preserves timing (JSD$_{ioi}$ $.034$) but underfits the minority chordal mode ($.52$ attainment). Held-out NLL improves with scale within each arm but is incommensurable across vocabularies, so we rest scaling on the generation surface.
\begin{figure}[tb]
\centering
\includegraphics[width=0.98\columnwidth]{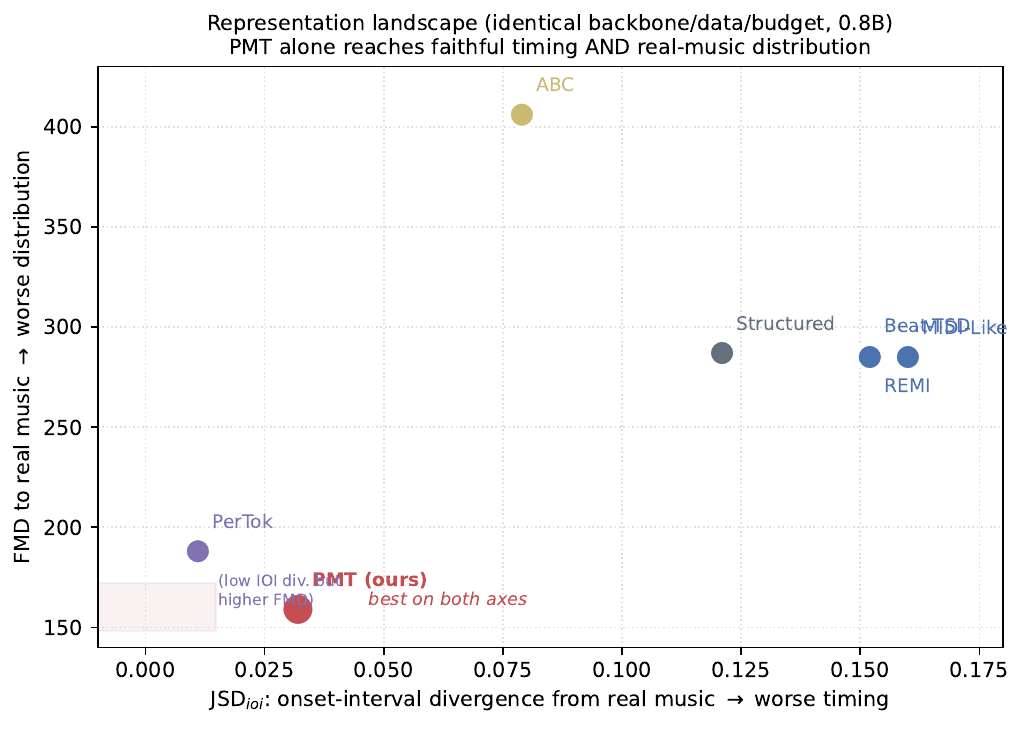}
\caption{\textbf{Representation landscape.} The seven controlled-grid representations at 0.8B under an identical backbone, data and budget.}
\label{fig:landscape}
\end{figure}

The main Table~\ref{tab:main} is drawn from a denser per-scale grid (all seeds, plus a grouped-encoding Structured arm and per-scale metric breakdowns) released with the harness; every cell uses the identical backbone, data, budget, and decoding.

\section{Training Details}
\label{app:training}

\paragraph{Implementation details in full.}\label{apx:implementation-details-in-full} Exact model coverage per representation, per-scale device splits, and the FMD-protocol caveats.

\textbf{Implementation details.} All backbones are from the \textbf{Qwen3.5} family (\texttt{model\_type: qwen3\_5}; 0.8B/2B/4B/27B and a 35B-A3B MoE, exact HuggingFace identifiers in the released harness, \emph{not} the same-sized Qwen3.6 variants), fine-tuned in a single SFT stage (0.8B/2B/4B across all representations; 27B for PMT, REMI, and MIDI-Like; a 9B point for ABC and a 35B-A3B MoE point for PMT). Music vocabularies extend the tokenizer with mean-initialized embeddings; captions are truncated to 640 tokens and masked with $-100$. Training: 10k steps ($\approx$1.9 epochs), effective batch 16, lr $8{\times}10^{-5}$ cosine, 3\% warmup, bf16, gradient checkpointing, block 1024 (per-scale device/accumulation splits, DeepSpeed ZeRO-3 beyond 4B, vocabulary sizes, and compute costs in this appendix). Evaluation: temperature 0.95, top-k 60, up to 900 music tokens, music-range-constrained logits (abc2midi compilation for ABC), $n{=}100$ frozen-test captions per cell, up to 3 seeds. FMD values are comparable only within one protocol: the \emph{same} PMT-0.8B model scores $159$ against the 500-piece stratified reference and $114$ against the 475-piece deduplicated reference, and the PMT-vs-beat-grid \emph{ratio} ranges $1.6$--$2.8\times$ across protocols, so the ``${\approx}2\times$'' headline is a representative midpoint and we report the ratio per protocol throughout. A real-vs-real FMD floor (two disjoint real subsets) is $43$ at $n{=}100$ and $23$ at $n{=}250$, so at $n{=}100$ absolute FMD carries ${\sim}40$-scale sampling noise; the PMT-vs-beat-grid gap ($113$: $159$ vs.\ $272$) exceeds this floor, so the arm \emph{ordering} is robust even where absolute magnitudes are noisy. We run no per-comparison significance tests and coverage thins at scale (27B two-seed, 35B-A3B single-seed): the \emph{large} PMT-vs-beat-grid gaps on FMD and JSD$_{ioi}$ far exceed the seed spread, but finer orderings should be read as trends (this appendix).
\paragraph{Per-scale sharding and cost.}\label{apx:per-scale-sharding-and-cost} All arms keep effective batch $16$ by splitting per-device batch and gradient accumulation by scale: batch $16{\times}1$ A800 at 0.8B, $8{\times}2$ at 2B, $4{\times}4$ at 4B, and $1{\times}8$ with gradient accumulation $2$ and DeepSpeed ZeRO-3 sharding at 9B/27B. Because the Qwen3.5 family shares its tokenizer, one encoded dataset serves every scale. One 0.8B cell costs $\approx$3.1 A800-hours; the 0.8B--4B grid $\approx$70 A800-hours; each 27B cell adds $\approx$36. The extended music vocabularies are PMT 609, Beat-TSD 475, REMI 444, MIDI-Like 562, and PerTok 473 symbols, all mean-initialized.

\paragraph{Trainability check.}\label{apx:trainability-check} Before the full runs, a 64-example overfit drove training loss to $0.005$ with 6/6 valid generations on the sanity suite, confirming the vocabulary-extension and caption-masking pipeline learns the music tokens.

\paragraph{Statistical rigor.}\label{apx:statistical-rigor} For the headline comparisons we report bootstrap $95\%$ confidence intervals ($B{=}1000$, resampling test pieces). On JSD$_{ioi}$ (Table~\ref{tab:controls}): PMT $[.027,.051]$ is separated from Beat-TSD $[.144,.184]$ and REMI $[.135,.165]$ but overlaps PerTok $[.011,.035]$, so we assert the PMT-vs-beat-grid ordering and explicitly \emph{decline} to claim PMT $>$ PerTok. On FMD the same resampling gives PMT $[144,204]$ non-overlapping with Beat-TSD $[268,339]$, so the headline FMD gap is not an $n{=}100$ sampling artifact. Beyond this we do not run per-comparison significance tests, and seed coverage thins at scale (27B is two-seed, 35B-A3B single-seed, some cells single-seed). The large PMT-vs-beat-grid gaps on FMD far exceed the seed spread we observe, but finer orderings, cross-scale monotonicity and the PMT-vs-PerTok ordering, sit within noise and are read as trends. The PMT-0.8B-vs-REMI-27B FMD gap ($159$ vs.\ $272$) is by contrast over $10\times$ the per-seed std, a magnitude difference we state as fact, not a trend.

\section{Limitations Details}
\label{app:limitations}

\paragraph{Limitations in full.}\label{apx:limitations-in-full} Each of the five limitations with its supporting numbers inline.

\subsection{Limitations}\label{apx:limitations}
Five limitations bound our claims (supporting numbers in this appendix). \emph{(1) Near-duplicate splits}: the folk source contains multiple community ``settings'' of the same tune and our split is setting-level, so near-duplicate melodies can straddle train/test, inflating held-out likelihood for very large models; we exclude $>$4B points from NLL analyses and will release a tune-level deduplicated split. Generation-side metrics survive a title-level dedup control, so Tables~\ref{tab:main}--\ref{tab:muspy} are unaffected. \emph{(2) No human listening study}: our evidence for the performance-timing thesis is objective (JSD$_{dur}$, groove, FMD) and our automatic MLLM forced-choice proxy is inconclusive; a \emph{human} pairwise preference study on rendered audio is the natural next validation, left to future work. \emph{(3) Seed coverage at scale}: 27B rows are two-seed and 35B-A3B is single-seed; smaller cells use up to three seeds. \emph{(4) Caption adherence}: PMT's key adherence is weak (CK .10 vs.\ MIDI-LLM's .32; Table~\ref{tab:midicaps}), but this is an absolute-key \emph{conditioning} gap, not tonal incompetence: PMT recovers major/minor \emph{mode} above chance and improves with data, and a lightweight decode-time attribute-conditioning constraint (restricting \texttt{PROG} tokens to the caption's requested instruments and down-weighting out-of-key \texttt{PITCH} tokens) more than doubles \emph{both} instrument-F1 ($.28\!\to\!.60$) and Correct-Key ($.16\!\to\!.35$) on $80$ captions, at unchanged validity ($.99\!\to\!1.00$) and no distributional cost (FMD $266\!\to\!271$, within the $n{=}80$ floor), with no retraining, so explicit attribute-conditioning is a \emph{demonstrated}, not merely hypothesized, direction (this appendix). Our captions are also machine-generated by a single captioner (Qwen3-Omni-30B) and audited for prompt \emph{leakage} but not factual \emph{accuracy}; a human-rated accuracy/hallucination subset is the clearest dataset-side next step. \emph{(5) Corpus scale in the controlled study}: the \emph{experiments} use the 86.6k four-modality set, trading raw count for per-case alignment and caption supervision. As a first step to scale, a PMT-4B model trained on the released 6.25M corpus already improves MidiCaps caption-key adherence over the 86.6k model and, on held-out in-distribution data, cuts FMD $2.0\times$ over its first pass ($291.7\!\to\!144.8$ on the tracking reference, $123.5$ on a source-stratified one; a steep early gain then a plateau, not monotone; this appendix). We report that one-epoch model and do not claim a converged one: the full multi-scale grid at 6.25M remains future work, and no claim in this paper depends on it.
This appendix expands three of the five limitations of \S Limitations with their supporting numbers.

\paragraph{Near-duplicate split control.}\label{apx:near-duplicate-split-control} We verified generation-side metrics survive title-level deduplication: a source-stratified reference rebuilt after title-level dedup ($n{=}475$) has chord-time $36.1\%$ vs.\ $36.2\%$ for the full reference, and PMT-0.8B's FMD moves $1\%$ ($112.7$ vs.\ $113.8$, size-matched control), so Tables~\ref{tab:main}--\ref{tab:muspy} are unaffected by the near-duplicate folk settings. The setting-level split also inflates held-out likelihood for very large models (a 27B model reaches implausibly low NLL by recalling variants), which is why we exclude $>$4B points from NLL analyses.

\paragraph{Key-adherence breakdown.}\label{apx:key-adherence-breakdown} A key-offset analysis restricted to captions that explicitly state a key (where the exact-key metric is well-defined, so figures run slightly below the full-set CK $.10$) locates the adherence gap: PMT reproduces the caption's \emph{exact} stated tonic only $12$--$26\%$ of the time (86.6k--6.25M models) versus $49$--$62\%$ for MIDILM/Amadeus, yet recovers the major/minor \emph{mode} above chance ($59\%$) and improves monotonically with data (exact-tonic $12\%\!\to\!26\%$, keyed-subset CK $.08\!\to\!.15$). The gap is thus absolute-key \emph{conditioning}, not tonal incompetence, which is why a lightweight decode-time constraint recovers it without retraining. Extending the in-key pilot to a full \emph{attribute-conditioning} decode on the 6.25M model (checkpoint at step $116$k), restricting \texttt{PROG} tokens to the caption's requested instruments and softly suppressing out-of-key pitches, more than doubles \emph{both} adherence axes \emph{simultaneously} ($n{=}80$: instrument-F1 $.28\!\to\!.60$ with the instrument constraint, Correct-Key $.16\!\to\!.35$ with the key constraint, validity $.99\!\to\!1.00$), at \emph{no} distributional cost (FMD $266\!\to\!271$, within the $n{=}80$ sampling floor). So caption adherence is a decode-time conditioning gap, closable without sacrificing fidelity, and not a deficit of the representation (Figure~\ref{fig:conditioning}).

\begin{figure}[tb]
\centering
\includegraphics[width=\columnwidth]{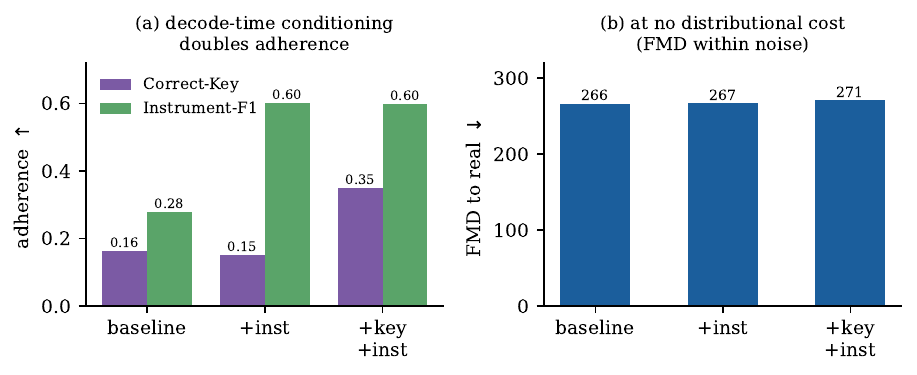}
\caption{\textbf{Caption adherence is a decode-time conditioning gap} ($n{=}80$, 6.25M model at step 116k).}
\label{fig:conditioning}
\end{figure} The captioner also sees score-derived metadata (instrument, key, tempo), so some attributes may be metadata paraphrases rather than heard, and the self-built portion inherits basic-pitch transcription noise.

\paragraph{Embedder robustness: the same gap under CLaMP-3.}\label{apx:embedder-robustness-the-same-gap-under} The headline distributional numbers are Fr\'echet distances in one embedding space, so a fair objection is that they measure CLaMP-2 rather than the music. We therefore recompute the Fr\'echet distance over \textbf{CLaMP-3} music embeddings~\citep{wu2025clamp}: a different model family, trained with text/audio/symbolic contrastive alignment rather than CLaMP-2's symbolic objective, and reading MIDI through the MTF serialization, which is a lossless tick-level dump of the MIDI event stream (so the swap changes the embedder, not the timing information reaching it). Everything else is held fixed: the same $100$ generations per arm from the same 0.8B controlled-swap cells, the same $500$-piece reference, the same Gaussian estimator. To confirm the protocols are comparable we also recompute CLaMP-2 on those exact directories and reproduce the main-table cells to within $0.1$ (PMT $161.6$ vs.\ $161.55$; Beat-TSD $285.2$; REMI $285.8$). Table~\ref{tab:clamp3} reports both spaces.

\begin{table}[t]
\centering\small
\begin{tabular}{lcc}
\toprule
0.8B arm & CLaMP-2 FMD$\downarrow$ & CLaMP-3 FD$\downarrow$ \\
\midrule
PMT (perf.-timed)     & \textbf{161.6} & \textbf{110.4} \\
PerTok (perf.-timed)  & 195.1 & 141.0 \\
Beat-TSD                    & 285.2 & 208.3 \\
REMI                        & 285.8 & 206.3 \\
MIDI-Like                   & 288.8 & 216.6 \\
\bottomrule
\end{tabular}
\caption{\textbf{The gap survives an embedder swap.} Same generations and reference per 0.8B arm; only the embedding model changes.}
\label{tab:clamp3}
\end{table}

\begin{figure}[tb]
\centering
\includegraphics[width=0.86\linewidth]{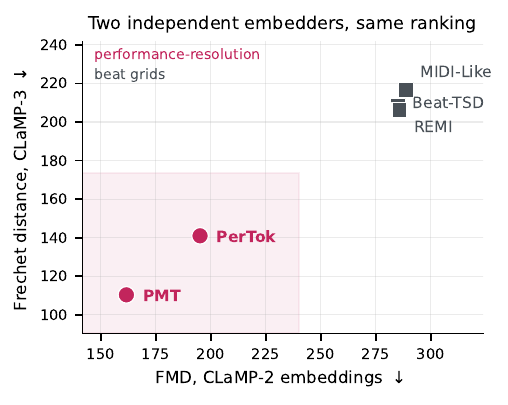}
\caption{\textbf{The gap is not an artifact of one embedding space.} Each 0.8B arm scored twice on the same generations against the same $500$-piece reference: $x$ over CLaMP-2, $y$ over CLaMP-3.}
\label{fig:embedder}
\end{figure}

\noindent Both the ordering and the magnitude survive (Figure~\ref{fig:embedder}). Under CLaMP-3 the three beat grids average $1.9\times$ PMT's distance, slightly \emph{more} than the $1.8\times$ under CLaMP-2, and the class structure is identical: both performance-resolution encodings rank ahead of all three beat grids, with PerTok intermediate. The beat grids permute among themselves (REMI and Beat-TSD swap), but they lie within $1\%$ of each other in both spaces, which is consistent with our position that we do not rank beat grids against each other. We conclude that the representation effect is a property of the generated music, not of one embedding space.

\paragraph{Reference robustness: the same gap against a source-stratified reference.}\label{apx:reference-robustness-the-same-gap-against} The second thing a single number can hide is the choice of reference. Our fixed $500$-piece reference comes from the folk source, which dominates the frozen split but is monophonic: measured with the same extractor as every other row, it has max-polyphony $2.0$, chord-time $9.0\%$ and $1.00$ instruments per piece, against $6.1$, $35.0\%$ and $2.20$ for a reference sampled proportionally across all eight sources of that split. The captions are stratified either way, so this is purely a reference swap and needs no regeneration. It moves the absolute scale and, importantly, moves it \emph{against} the reported numbers (Table~\ref{tab:refswap}).

\begin{table}[t]
\centering\small
\begin{tabular}{lcc}
\toprule
0.8B arm & folk reference & stratified reference \\
\midrule
PMT       & \textbf{161.6} & \textbf{110.5} \\
PerTok    & 195.1 & 141.0 \\
REMI      & 285.8 & 266.3 \\
Beat-TSD  & 285.2 & 270.2 \\
MIDI-Like & 288.8 & 275.7 \\
\bottomrule
\end{tabular}
\caption{\textbf{The gap survives a reference swap.} Same generations per 0.8B arm; folk vs.\ source-stratified $500$-piece reference.}
\label{tab:refswap}
\end{table}

\noindent The ordering is identical, and PMT moves much closer to the stratified reference ($161.6\!\to\!110.5$) while the beat grids barely move ($285\!\to\!266$--$276$), so the class separation grows from $1.8\times$ to $2.4\times$. A second stratified sample of $475$ pieces reproduces this ($113.8$ vs.\ $271$--$281$). The tables therefore report the \emph{conservative} reference: the effect we claim is smaller than the effect a corpus-representative reference would show. This also rules out the mirror-image objection, that PMT wins by matching a sparse single-instrument reference; if that were the mechanism, PMT would degrade against the richer reference rather than improve.

\paragraph{6.25M-corpus scaling trajectory.}\label{apx:6-25m-corpus-scaling-trajectory} A PMT-4B model trained on the 6.25M corpus (built by applying the Qwen3-Omni pipeline to the deduplicated raw union, filtered against the frozen test split and the C1--C5 caption gates) improves MidiCaps caption-key adherence over the 86.6k model even at a \emph{very} early checkpoint. We train the model for one pass over the corpus, ${\approx}166$k optimizer steps on $2\times8$ A800s (the model trains on the ${\approx}5.3$M pieces that fit the 8192-token block, at 32 pieces/step), and every 6.25M number in this paper is that one-epoch model; the checkpoints below are its intermediate steps. At 0.1 epoch (step 16.6k): key adherence including relative/parallel keys (CKD) $.17\!\to\!.27$, exact CK $.10\!\to\!.11$, validity $1.00$. At 0.4 epoch (step 66.4k): CKD $.17\!\to\!.30$, CK $.10\!\to\!.14$. At the full epoch (step 166k), the model we report: CKD $.17\!\to\!.32$, CK $.10\!\to\!.17$, validity $1.00$, and same-protocol FMD $211$ against $314$--$421$ for the published systems. On \emph{held-out in-distribution} data ($200$ frozen-test captions against a self-built reference of the same distribution) FMD drops steeply and then flattens with fluctuation over the first epoch: $291.7$ (0.1\,ep), $168.0$ (0.4), $163.4$ (0.5), $194.1$ (0.7), $140.7$ (0.9), $144.8$ (1.0). The first-epoch reduction is $2.0\times$ ($291.7\!\to\!144.8$), so the corpus is learnable and the representation transfers at this scale. Two honest qualifications. The path is \emph{not} monotone: the 0.7-epoch checkpoint regresses by $31$ points before 0.9 recovers. And the last three points ($163.4$, $140.7$, $144.8$) sit within a ${\approx}20$-point band, so what the epoch shows is a steep early gain followed by a plateau, not steady improvement to the end; the $0.9$ and $1.0$ checkpoints differ by $4$ points, which we read as noise. Each point is a single checkpoint and seed at $n{=}200$, so we assert the direction and the plateau, not the ordering of adjacent points.

One caveat bounds how far this trajectory can be read: the $200$ tracking captions and their reference come from the self-built source, the narrowest slice of the corpus (median $12$ distinct pitches and $164$ notes, against $25$ and $254$ for a corpus-wide sample), so the curve measures distance to one source rather than to the corpus mixture. We therefore also score against a source-stratified reference, on which the 1-epoch model reaches $123.5$.

\textbf{What we report: one epoch.} All 6.25M numbers in this paper are the \emph{one-epoch} model, and we fix them there deliberately rather than training longer. The claim this corpus has to support is that it trains, and one full pass settles that: FMD falls $2.0\times$ on the tracking reference ($291.7\!\to\!144.8$) and reaches $123.5$ against the source-stratified reference, caption-key adherence nearly doubles over the 86.6k model (CKD $.17\!\to\!.32$, CK $.10\!\to\!.17$), and on the MidiCaps public test its FMD of $211$ stays clear of every published system ($314$--$421$; Table~\ref{tab:midicaps}). None of the paper's claims scale with further passes: the headline representation result is the 86.6k controlled grid, and the corpus and harness are contributions independent of how long any one model trains. A converged 6.25M model, and the multi-scale grid at that corpus size, are future work; the one-epoch result is what we have measured end to end, and it is what we report.

\section{Scalable Audio-to-Symbolic Data Curation}
\label{app:curation}

\paragraph{Corpus construction in full.}\label{apx:corpus-construction-in-full} The four-modality corpus and the 6.25M scaled corpus, with per-source provenance.

\textbf{Dataset.} We build and release a four-modality corpus of 86{,}598 real-music pieces, each aligned across an English caption, sliced MIDI, ABC, and rendered audio, drawn from four sources (The Session folk, GiantMIDI classical piano~\citep{kong2022giantmidi}, a genre-balanced Lakh~\citep{raffel2016learning} subset, and jazz; per-source counts, caption pipeline, and source texture profiles in this appendix). Splits are frozen and source-stratified (84{,}576 train / 2{,}022 test, seed 42), and PMT-encoded lengths are heavy-tailed (median 835 tokens), motivating the 1024-token training block that covers 58\% of pieces unclipped. Table~\ref{tab:datasets} places the corpus among existing symbolic collections: the large raw-MIDI datasets supply scale but no language supervision, multi-track audio--MIDI sets such as Slakh~\citep{manilow2019cutting} pair synthesized audio but no captions, and the one captioned dataset (MidiCaps) is single-modality, whereas ours is the only corpus aligning \emph{four} modalities per piece. Because the largest raw collections (GigaMIDI, Aria-MIDI, Discover-MIDI) ship un-captioned, we apply the same pipeline to their deduplicated union and build a \emph{scaled} captioned corpus of \textbf{6.25M} MIDI--caption pairs (Table~\ref{tab:datasets}, ``Ours (scaled)''), the largest captioned symbolic dataset, $37\times$ MidiCaps and $6.5\times$ MetaScore, deduplicated against the frozen test split. Redistribution follows each source's terms (captions, derived features, and source pointers where raw redistribution is precluded). The controlled experiments below use the aligned 86.6k set for its per-case four-modality alignment; this appendix details the scaling pipeline, its caption quality-control filter, and why we do not draw on the task-mixed MusicPile~\citep{yuan2024chatmusician} instruction corpus.
\begin{table*}[t]
\centering\small
\setlength{\tabcolsep}{4.5pt}
\begin{tabular}{lrccccc}
\toprule
Dataset & \#Pieces & T & M & B & A & Multi-src \\
\midrule
Lakh~\citep{raffel2016learning}          & 174.5k & & \checkmark & & & \\
MidiCaps~\citep{melechovsky2024midicaps}    & 168.4k & \checkmark & \checkmark & & & \\
MetaScore~\citep{xu2024generating}$^{\dagger}$ & 326k & \checkmark & \checkmark & & & \checkmark \\
MetaMIDI~\citep{ens2021building} & 436.6k & & \checkmark & & & \checkmark \\
Aria-MIDI~\citep{bradshaw2025aria}       & 1.19M  & & \checkmark & & & \\
GigaMIDI~\citep{lee2025gigamidi}    & 1.43M  & & \checkmark & & & \checkmark \\
\midrule
\textbf{Ours (aligned)} & 86.6k & \checkmark & \checkmark & \checkmark & \checkmark & \checkmark \\
\textbf{Ours (scaled)} & \textbf{6.25M} & \checkmark & \checkmark & & & \checkmark \\
\multicolumn{7}{l}{\quad\emph{scaled-corpus sources (all captioned by our pipeline):}}\\
\quad Discover-MIDI$^{\ddagger}$          & 3.56M & & & & & \\
\quad GigaMIDI$+$Aria~\citep{lee2025gigamidi,bradshaw2025aria} & 1.88M & & & & & \\
\quad Purchased packs                     & \phantom{0}623k & & & & & \\
\quad Self-built (audio$\to$MIDI)$^{\S}$   & \phantom{0}192k & & & & & \\
\bottomrule
\end{tabular}
\caption{Symbolic music corpora. \textbf{T}/\textbf{M}/\textbf{B}/\textbf{A} mark the modalities each corpus provides \emph{per case} (text/MIDI/ABC/audio); \textbf{Cap.} marks text captions.}
\label{tab:datasets}
\end{table*}

The 86.6k four-modality corpus of \S Experiments is deliberately small, curated for per-piece alignment across caption, MIDI, ABC, and audio. To scale caption--music supervision toward the volume that raw collections offer but ship \emph{without} language (GigaMIDI 1.43M, Aria-MIDI 1.19M; \S Limitations), we run a separate, continuously-operating pipeline that turns \emph{raw audio}, for which no score exists, into curated symbolic training pairs. This appendix documents that pipeline and the 6.25M-pair captioned corpus it produces.

\paragraph{Aligned 86.6k corpus.}\label{apx:aligned-86-6k-corpus} The four sources of the controlled corpus are The Session folk archive ($52{,}794$, CC BY-SA, native ABC), GiantMIDI classical piano~\citep{kong2022giantmidi} ($10{,}847$), a genre-balanced multi-instrument subset derived from Lakh~\citep{raffel2016learning} ($22{,}347$), and jazz ($610$). Each piece carries an English caption generated by Qwen3-Omni-30B from deterministically rendered audio (real scraped titles prepended; invented names forbidden; mean $2{,}277$ characters; $0\%$ prompt leakage on a $500$-sample audit), sliced MIDI, ABC (native or converted, flagged), and audio. Source texture profiles span both regimes: GiantMIDI $100\%$ chordal (mean max-polyphony $11.8$), the balanced subset $98\%$ chordal ($88\%$ multi-instrument), and folk $15\%$ chordal (single-line). PMT-encoded lengths are heavy-tailed (median $835$ tokens, mean $1{,}442$, p90 $3{,}174$), so the $1024$-token block covers $58\%$ of pieces unclipped. We do not draw on MusicPile~\citep{yuan2024chatmusician}, the other large LLM-music corpus, despite its scale ($5.17$M samples): it is a task-mixed \emph{instruction}-tuning set (its music portion, ${\sim}0.9$M ABC score-chat plus ${\sim}0.75$M GPT-4 music-QA/summary items, sits alongside math, code, and general chat) whose music is stored as \emph{ABC notation}, score, not performance, and targets music \emph{understanding} rather than generation. Our corpus is instead single-purpose (caption$\rightarrow$performance-MIDI pairs) and preserves performance parameters ABC discards, so MusicPile is neither a substitute data source nor a like-for-like comparison for our generation corpus.

\paragraph{Acquisition, all sources.}\label{apx:acquisition-all-sources} The corpus draws from four source families spanning Western and Chinese repertoire; Table~\ref{tab:sources} gives per-source counts. \textbf{(1)~Native symbolic collections} (pre-scored, no transcription): GigaMIDI~\citep{lee2025gigamidi} ($1.43$M) and Aria-MIDI~\citep{bradshaw2025aria} ($1.19$M raw; we keep a $371$k content-deduplicated subset), plus \textbf{Discover-MIDI}, the Los-Angeles-MIDI web aggregate ($6.74$M raw $\rightarrow$ $3.59$M high-quality net-new after content-hash dedup and the gates below); the genre-balanced Lakh~\citep{raffel2016learning} subset and GiantMIDI~\citep{kong2022giantmidi} classical piano additionally feed the aligned 86.6k set. \textbf{(2)~Licensed packs}: multiple purchased collections ($627$k after dedup). \textbf{(3)~Self-built audio$\rightarrow$MIDI transcriptions} of permissively-licensed audio: Free Music Archive ($137.6$k), AudioSet-music ($26.2$k), MTG-Jamendo/Jamendo CC crawls, and CC0-music ($8.3$k, ships native captions). \textbf{(4)~Chinese-repertoire sources}, added to counter the Western-centric public sets: ACE-Opencpop ($77.1$k human-annotated singing MIDI$+$lyrics), M4Singer and crawled singing ($42.7$k), the CCMusic folk/vocal suite ($4.0$k), POP909 ($549$; audio$+$MIDI$+$chord annotations), MusicCaps-zh ($5.1$k), and MIR-1K. MAESTRO~\citep{hawthorne2018enabling} and POP909~\citep{wang2020pop909} are held out as classical-piano and C-pop \emph{evaluation} benchmarks (never trained on). Web crawls are queried per world region (Mandarin/Cantonese and Western/world tags) to broaden coverage. Native and annotation sources enter the gates directly; audio-only sources carry no score and are transcribed next.

\paragraph{Streaming transcription.}\label{apx:streaming-transcription} Audio is transcribed to MIDI with a CNN onset-and-pitch model (\textsc{basic-pitch}, ONNX backend). Because raw audio is $10$--$50\times$ larger than its transcribed MIDI and working storage is bounded, transcription is \emph{streaming}: each clip is downloaded, transcribed, and its audio deleted immediately, so peak disk is set by the in-flight working set rather than corpus size. Download and transcription run as self-healing daemons (relaunch on failure, skip completed shards) across machines against a shared store; completed source archives are offloaded to object storage.

\paragraph{Quality curation.}\label{apx:quality-curation} Transcription of complex or low-fidelity audio is noisy, spurious onsets, over-segmented sustains, or degenerate streams. Because streaming deletes the reference audio, we cannot score true audio--MIDI \emph{alignment}; instead we apply five interpretable well-formedness gates that proxy transcription cleanliness (a clip is retained only if \emph{all} pass): \textbf{(G1)}~$\ge 8$ notes and duration in $[5,900]$\,s; \textbf{(G2)}~note rate in $[0.3,25]$\,notes/s (rejects sparse artifacts and onset-noise storms); \textbf{(G3)}~fewer than $60\%$ of notes shorter than $60$\,ms (rejects over-segmentation); \textbf{(G4)}~$\ge 4$ distinct pitches and pitch range $\ge 3$ semitones (rejects degenerate streams); \textbf{(G5)}~median note duration $\ge 45$\,ms (rejects click-train transcriptions). Each retained clip also receives a continuous quality score $q\in[0,1]$, the geometric mean of soft sub-scores on density-centredness (log-centred at $4$\,notes/s), low fragmentation, pitch diversity, and sustain, for optional stricter thresholding. On the current transcribed corpus ($137.7$k clips), $93.4\%$ pass all gates ($128{,}498$ retained; median quality $q{=}0.89$, mean $0.85$, and $90.3\%$ at the stricter $q\!\ge\!0.5$), with median density $4.2$\,notes/s and median short-note fraction $0.00$, clean output, not fragmented noise. Rejections are dominated by near-empty ($3.8\%$, $<\!8$ notes) and pitch-degenerate ($2.4\%$, $<\!4$ distinct pitches) clips; density, duration, and click-train gates remove a further $<\!0.4\%$ combined. Thresholds and per-clip scores ship with the release for reproducibility and re-filtering.

\paragraph{Deduplication and captioning.}\label{apx:deduplication-and-captioning} Retained MIDIs are deduplicated by content hash against both the aggregated pool and the frozen evaluation split (preventing test leakage), then captioned with the same Qwen3-Omni pipeline as the main corpus (deterministic render $\rightarrow$ MLLM caption), yielding (caption, MIDI) pairs directly usable by \S Method. The pipeline is representation-agnostic: curated MIDIs feed the PMT serializer and every baseline tokenizer identically, so corpus growth benefits all arms of the controlled comparison equally.

\paragraph{Captioning at scale.}\label{apx:captioning-at-scale} We caption the full deduplicated pool with Qwen3-Omni-30B, a large audio--language model, run as a self-healing sharded fleet, atomic per-shard claims with stale-claim stealing on preemption, that sustains ${\approx}0.75$\,s/piece per A800 (throughput is decode-bound by caption length, not render or batch). Each MIDI is rendered to audio deterministically (FluidSynth, fixed SoundFont) and the model is prompted with the rendered audio \emph{plus} score-derived metadata, instrument list, time signature, tempo, key, duration, and any scraped catalog title/artist, and asked to describe genre, mood, instrumentation, tempo, rhythm, harmony, and texture in $3$--$5$ sentences (mean ${\approx}365$ tokens). Grounding on the render (not on tags) yields captions of what is actually heard, while the metadata keeps instrument, tempo, and key factual and forbids invented titles. Rendering is bounded by a short wall-clock cap: a small fraction of web-scraped MIDIs carry corrupt duration metadata that drives the synthesizer into a multi-hour runaway tail, so we retain only the leading seconds of real audio and discard the tail, bounding working storage and keeping the pathological files captionable rather than skipped. The pass yields \textbf{6{,}250{,}143} (caption, MIDI) pairs ($99.2\%$ of the target pool; the remainder are un-renderable files).

\paragraph{Caption quality control.}\label{apx:caption-quality-control} Machine captions carry a small failure tail, refusals, empty stubs, off-topic text, degenerate repetition, or non-English fragments, so, mirroring the MIDI gates, we apply five interpretable caption gates (a caption is kept only if \emph{all} pass): \textbf{(C1)}~length $8$--$400$ words; \textbf{(C2)}~non-repetition, distinct-$4$-gram ratio $\ge 0.55$ and no $3$-gram repeated more than $4\times$; \textbf{(C3)}~$\ge 3$ distinct musical terms from a curated genre/instrument/tempo/mood/theory lexicon; \textbf{(C4)}~English, ASCII-letter ratio $\ge 0.90$; \textbf{(C5)}~no refusal or ``no-audio'' artifact. Each caption also receives a continuous quality score $q\in[0,1]$ (geometric mean of soft sub-scores on length-centredness, low repetition, term density, script, and artifact-freeness) for optional stricter thresholding. On the $6.25$M captions, \textbf{$99.97\%$ pass all gates} ($6{,}248{,}214$ retained; only $1{,}929$ removed), rejections dominated by refusal artifacts ($0.009\%$) and length stubs ($0.016\%$), with off-topic ($0.005\%$), repetition ($0.001\%$), and non-English ($0.0002\%$) negligible, direct evidence that audio-grounded MLLM captioning is well-formed at scale. Gates, per-caption $q$, and rejects ship with the release for reproducibility and re-filtering.

\paragraph{Positioning among caption datasets.}\label{apx:positioning-among-caption-datasets} The result is, to our knowledge, the largest captioned symbolic-music dataset: $37\times$ MidiCaps~\citep{melechovsky2024midicaps} ($168$k) and $6.5\times$ MetaScore~\citep{xu2024generating} ($963$k), the two prior symbolic caption sets, and larger than every music-specific caption resource, including the audio-domain LP-MusicCaps~\citep{doh2023lp} ($2.2$M tag-derived pseudo-captions); only \emph{general}-audio caption corpora reach comparable counts. Beyond scale, our captions are more strongly grounded and richer than prior symbolic sets: MidiCaps composes audio-feature extraction with LLM templating and MetaScore derives text from score \emph{metadata} (tags/genre), whereas ours are produced by an audio--language model that \emph{listens} to the rendered performance. Two caveats bound the claim honestly: the captions are grounded in a deterministic \emph{rendering} of the same MIDI rather than a real recording (consistent self-annotation, not human ground truth), and quality control is automated well-formedness rather than a human hallucination audit, a human-rated subset with inter-rater agreement and a measured hallucination rate is the natural next step, left to future work. Table~\ref{tab:capdata} places our corpus among \emph{music} caption datasets of both domains, audio-domain (captions of rendered/real music audio) and symbolic-domain (captions paired with scores). Scoped correctly to \emph{music} (not general audio), it is the largest: $2.8\times$ the largest prior music caption set of any domain (the audio-domain LP-MusicCaps~\citep{doh2023lp}, 2.2M) and $6.5$--$37\times$ the prior \emph{symbolic} sets (MetaScore~\citep{xu2024generating}/MidiCaps~\citep{melechovsky2024midicaps}). We are careful to distinguish this from \emph{general-audio} caption corpora, AudioSetCaps~\citep{audiosetcaps2024} (${\sim}$6M) and WavCaps caption speech and environmental sound alongside music, a different task, so we do not count them as music datasets. Human-annotated music sets (MusicCaps~\citep{agostinelli2023musiclm}, Song Describer~\citep{manco2023song}) remain the quality gold standard but are three orders of magnitude smaller.

\begin{table*}[t]
\centering\small
\setlength{\tabcolsep}{3pt}
\begin{tabular}{lrccc}
\toprule
Caption dataset & \#Items & Domain & Caption source & Grnd. \\
\midrule
MusicCaps~\citep{agostinelli2023musiclm}            & 5.5k  & audio & human & \checkmark \\
Song Describer~\citep{manco2023song} & 1.1k  & audio & human & \checkmark \\
MusicBench~\citep{melechovsky2024mustango}        & 53k   & audio & MusicCaps{+}MIR & $\sim$ \\
JamendoMaxCaps~\citep{roy2025jamendomaxcaps}& 362k  & audio & audio-LM & \checkmark \\
LP-MusicCaps~\citep{doh2023lp}     & 2.2M  & audio & LLM${\leftarrow}$tags & $\times$ \\
\midrule
MidiCaps~\citep{melechovsky2024midicaps}            & 168k  & symbolic & feat.{+}LLM tmpl. & $\sim$ \\
MetaScore~\citep{xu2024generating}          & 963k$^{\ddagger}$ & symbolic & LLM${\leftarrow}$meta & $\times$ \\
\textbf{Ours}                            & \textbf{6.25M} & symbolic & audio-LM${\leftarrow}$render & \checkmark \\
\bottomrule
\end{tabular}
\caption{\textbf{Music} caption datasets, audio-domain and symbolic-domain. \textbf{Grnd.}\ $=$ captions grounded in actual audio content (\checkmark), partial ($\sim$), or from tags/metadata only ($\times$). Ours is the largest music caption dataset, $2.8\times$ the next (LP-MusicCaps, 2.2M) and $6.5$--$37\times$ prior \emph{symbolic} sets. $^{\ddagger}$MetaScore's full size; only the 326k publicly captioned subset is listed in Table~\ref{tab:datasets}. We deliberately \emph{exclude} general-audio caption corpora (AudioSetCaps~\citep{audiosetcaps2024} ${\sim}$6M, WavCaps), these caption speech and environmental sound, not music, and are not comparable.}
\label{tab:capdata}
\end{table*}

\paragraph{Per-source acquisition--curation funnel.}\label{apx:per-source-acquisition-curation-funnel} Table~\ref{tab:funnel} accounts for every source from raw acquisition through content-hash deduplication and the quality gates to the high-quality MIDI that enters the pool. Native pre-curated collections (GigaMIDI, Aria-MIDI) pass the gates near-unchanged; the heavy filtering falls on the licensed packs, where near-empty loops and cross-pool duplicates cut the count to a quarter, and on the largest web aggregate, Discover-MIDI, where $27\%$ duplicates content we already hold and a further $20\%$ fails the well-formedness gates, leaving $53\%$ ($3.59$M) high-quality net-new. The audio path adds a pre-transcription audio gate (loudness, silence fraction, spectral flatness) that removes $13{,}974$ of $132{,}975$ crawled clips before any transcription compute is spent. Across all sources, $11.3$M raw candidates reduce to $6.30$M curated MIDI ($56\%$); this is the raw symbolic pool that the captioning pipeline scales language supervision onto, and is $73\times$ the $86.6$k four-modality set used for the controlled experiments.

\begin{table}[t]
\centering\small
\setlength{\tabcolsep}{2.6pt}
\begin{tabular}{lrrr}
\toprule
Source & Raw & Removed & Retained \\
\midrule
GigaMIDI $+$ Aria (native)      & 1{,}884{,}262 & --$^{a}$      & 1{,}884{,}262 \\
Purchased packs (native)        & 2{,}518{,}754 & 1{,}891{,}595 & \phantom{0}627{,}159 \\
Self-built (audio$+$annot.)$^{b}$ & \phantom{0,0}272{,}021 & \phantom{00,0}77{,}655 & \phantom{0}194{,}366 \\
Discover-MIDI (native)          & 6{,}747{,}346 & 3{,}153{,}119 & 3{,}594{,}227 \\
\midrule
\textbf{Total curated pool}     & \textbf{11.4M} & \textbf{5.1M} & \textbf{6{,}300{,}014} \\
\bottomrule
\end{tabular}
\caption{Per-source acquisition$\rightarrow$curation funnel. \emph{Removed} $=$ content-hash duplicates $+$ clips failing the quality gates. $^{a}$Native, pre-deduplicated at source; gates near-pass-through. $^{b}$Raw $=132{,}975$ crawled audio clips ($\rightarrow$$119{,}001$ past the audio gate $\rightarrow$ transcribed) plus $139{,}046$ score-annotation records (ACE-Opencpop, POP909); strict G1--G5 subset $=168{,}667$, i.e.\ $87.7\%$ of the 192k captioned self-built pool (Table~\ref{tab:sources}).}
\label{tab:funnel}
\end{table}

\begin{table*}[t]
\centering\footnotesize
\setlength{\tabcolsep}{5pt}
\begin{tabular}{llllr}
\toprule
Source & Sourced from & Category & Caption/label & \#Pieces \\
\midrule
GigaMIDI$+$Aria~\citep{lee2025gigamidi,bradshaw2025aria} & public & W. multi-instr. & Omni & 1{,}884k \\
Discover-MIDI                      & web (LA-MIDI) & W. multi-instr. & Omni & 3{,}557k \\
Purchased packs                    & licensed & mixed multi-instr. & Omni & \phantom{0,}623k \\
\midrule
\multicolumn{5}{l}{\emph{Self-built (192k captioned after well-formedness gating $+$ content-hash dedup):}}\\
FMA                                & audio$\to$MIDI & W. mixed & Omni & \phantom{0,}138k \\
ACE-Opencpop                       & annot. singing MIDI & Mandarin sing & native+Omni & \phantom{0,0}77k \\
Crawled $+$ M4Singer               & audio$\to$MIDI & Mandarin sing & Omni & \phantom{0,0}43k \\
AudioSet-Music                     & audio$\to$MIDI & general & Omni & \phantom{0,0}26k \\
CC0-Music                          & CC0 audio & mixed & native+Omni & \phantom{0,00}8k \\
MusicCaps-zh                       & MusicCaps-zh & mixed & native & \phantom{0,00}5k \\
CCMusic                            & Ch. folk/vocal & Chinese trad. & Omni & \phantom{0,00}4k \\
POP909~\citep{wang2020pop909}      & audio+MIDI+chords & C-pop & Omni & \phantom{0,00}0.5k \\
\midrule
\textbf{Total (captioned)} & & & & \textbf{6{,}248k} \\
\bottomrule
\end{tabular}
\caption{\textbf{Constituent sources of the 6.25M scaled corpus.} Counts are curated-and-captioned pieces; per-source provenance and gating are in the text.}
\label{tab:sources}
\end{table*}

\paragraph{Corpus statistics: duration, density, token length.}\label{apx:corpus-statistics-duration-density-token-length} Table~\ref{tab:corpusstats} and Figure~\ref{fig:datadist} report the per-pool and overall distributions (30k-per-pool sample). The pool totals $164{,}064$ hours ($\approx$18.7 years; mean $94$\,s/piece). \emph{Duration is bimodal} (Fig.~\ref{fig:datadist}a): licensed loops and short transcriptions concentrate below $30$\,s ($46\%$ of pieces), while the public and Discover collections contribute a long-form mode at $2$--$10$ minutes ($28\%$), so the corpus spans both loop-scale and full-arrangement material (median $36$\,s, p99 $626$\,s). Note density is healthy everywhere, per-pool medians $3.7$--$8.2$ notes/s, confirming the gates leave dense musical content rather than sparse transcription artifacts (a tightened sparse/silence gate, density $<0.5$\,notes/s or max onset gap $>20$\,s, removes a further $<\!1\%$ long-dead-air tail). \emph{PMT-token length inherits this structure and is strongly heavy-tailed} (Fig.~\ref{fig:datadist}b,c): median $940$ but mean $3{,}930$ and p99 $33{,}633$, with $14\%$ of pieces exceeding $8$k tokens. Coverage is pool-dependent, the licensed short-loop pool is $95\%$ captured by a $1024$-token block, whereas the long-form public and Discover pools sit below $50\%$, so training-block economics depend on the source mixture, not just the piece count. \emph{Handling the tail.} Pieces up to $4\times$ the block are trained head-truncated; longer pieces are dropped by a residual-explosion guard: at block $8192$ we train $86\%$ of pieces whole, head-truncate $13\%$, and skip only the $1\%$ pathological tail ($>\!32$k tokens), versus $74\%/20\%/6\%$ at block $4096$.

\begin{table}[t]
\centering\small
\setlength{\tabcolsep}{3pt}
\begin{tabular}{lrrrrr}
\toprule
Pool & Pieces & Hrs & Dur$_{\text{m}}$ & Dens$_{\text{m}}$ & Tok$_{\text{m}}$/$\le$1k \\
\midrule
Public       & 1{,}884{,}262 & 59{,}102 & 63\,s & 7.5 & 1412 / 47\% \\
Self-built   & \phantom{0}194{,}366 & \phantom{0}2{,}747 & 30\,s & 3.7 & \phantom{0}367 / 78\% \\
Purchased    & \phantom{0}627{,}159 & \phantom{0}2{,}924 & \phantom{0}8\,s & 8.2 & \phantom{0}210 / 95\% \\
Discover     & 3{,}594{,}227 & 99{,}291 & 43\,s & 7.5 & 1247 / 46\% \\
\midrule
\textbf{Total} & \textbf{6{,}300{,}014} & \textbf{164{,}064} & 36\,s & 7.4 & 940 / 52\% \\
\bottomrule
\end{tabular}
\caption{Curated-pool statistics by source. \textbf{Hrs} $=$ total hours; \textbf{Dur$_{\text{med}}$} $=$ median piece duration; \textbf{Dens$_{\text{med}}$} $=$ median note density (notes/s); \textbf{Tok$_{\text{med}}$/${\le}1024$} $=$ median PMT-token length and the fraction of pieces a $1024$-token block covers unclipped. Duration/token statistics from a $30$k-per-pool random sample (density from an earlier $12$k sample); counts and hours exact.}
\label{tab:corpusstats}
\end{table}

\begin{figure*}[tp]
\centering
\includegraphics[width=\textwidth]{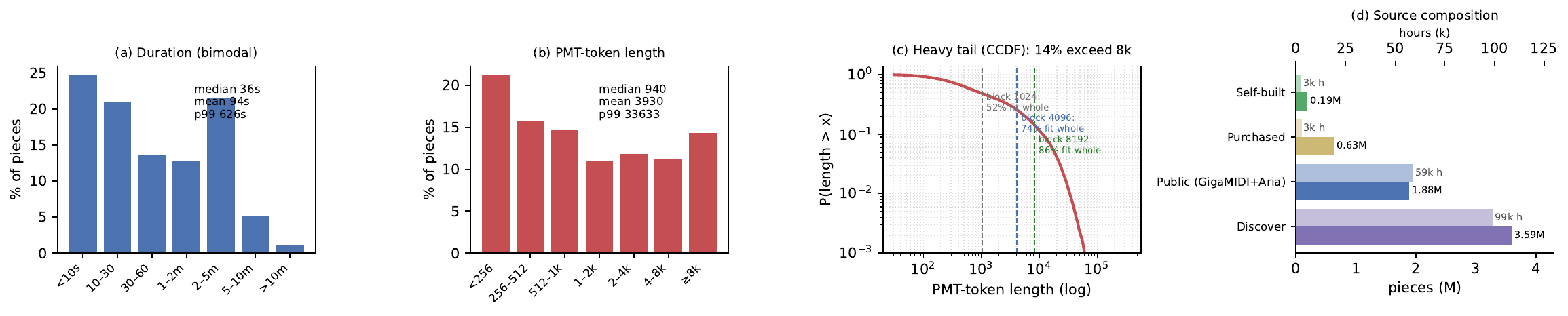}
\caption{\textbf{Statistics of the 6.25M captioned corpus} (30k-per-pool random sample; counts and hours exact). \emph{(a)}~Piece duration. \emph{(b)}~PMT-token length. \emph{(c)}~Block coverage. \emph{(d)}~Source composition.}
\label{fig:datadist}
\end{figure*}

\section{Generation Gallery: Caption, Piano Roll and Engraved Score}
\label{app:gallery}
Figures~\ref{fig:gala}--\ref{fig:galk} put thirty-one generations on the page in the three views that let
a reader judge them without listening: the caption that produced the piece, its piano roll, and its
opening bars engraved as notation. All three come from the same MIDI file in the same pass, so the views
cannot disagree with each other. Figures~\ref{fig:cmpa}--\ref{fig:cmpb} then give the same caption to our
model and to five published systems, with the real reference alongside.

Two things about the gallery need saying plainly. \textbf{The examples are curated.} Every one was kept by
a human listener working through a pool of several hundred, so this is the strong end of the distribution,
not a sample from it; the printed note count, instrument count, maximum polyphony and chord-time fraction
are there so a reader can see how far from typical each one is (aggregate distributions are in
Appendix~\ref{app:extmetrics}). \textbf{The staff labels are what the model emitted, not what the caption
asked for.} They are the General MIDI programs in the generated file, so where a caption requests one
instrumentation and the notation shows another, that is a real caption-adherence miss of the kind
Table~\ref{tab:midicapsfull} quantifies, visible here rather than averaged away.

Blocks are labelled with the model that produced them: most are the 4B model trained on the 6.25M corpus,
some come from the 86.6k controlled-grid models at 4B/27B/35B-A3B, and two decode with the budget raised
to 7000 tokens (Appendix~\ref{app:declen}), which is where the longest pieces in the gallery come from.
The engraved excerpt is the opening bars after quantization to a sixteenth grid, a legibility choice for
print and not how the model writes: the notation is quantized, the audio and the roll are not. The number
of bars shown adapts to the number of staves. The system comparison uses rolls only and a common $24$\,s
window, so bar widths mean the same thing on every row.

\begin{figure*}[tp]
\centering
\includegraphics[width=\textwidth]{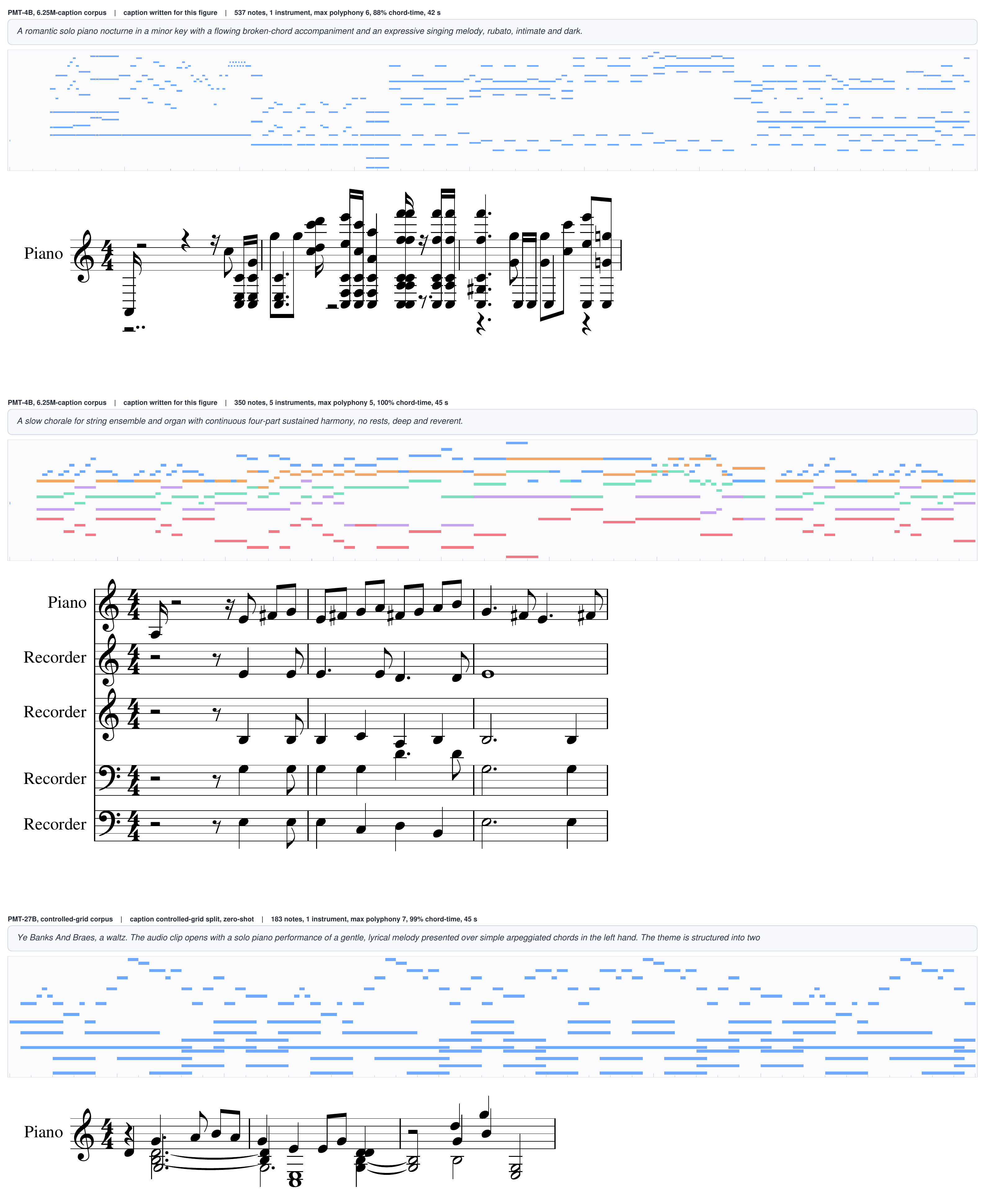}
\caption{\textbf{Generation gallery (1 of 11).} Per example: model and caption provenance, measured
texture, the caption, the piano roll over the whole clip (colour $=$ track, vertical axis $=$ pitch), and
the opening bars engraved. All curated; see Appendix~\ref{app:gallery}.}
\label{fig:gala}
\end{figure*}

\begin{figure*}[tp]
\centering
\includegraphics[width=\textwidth]{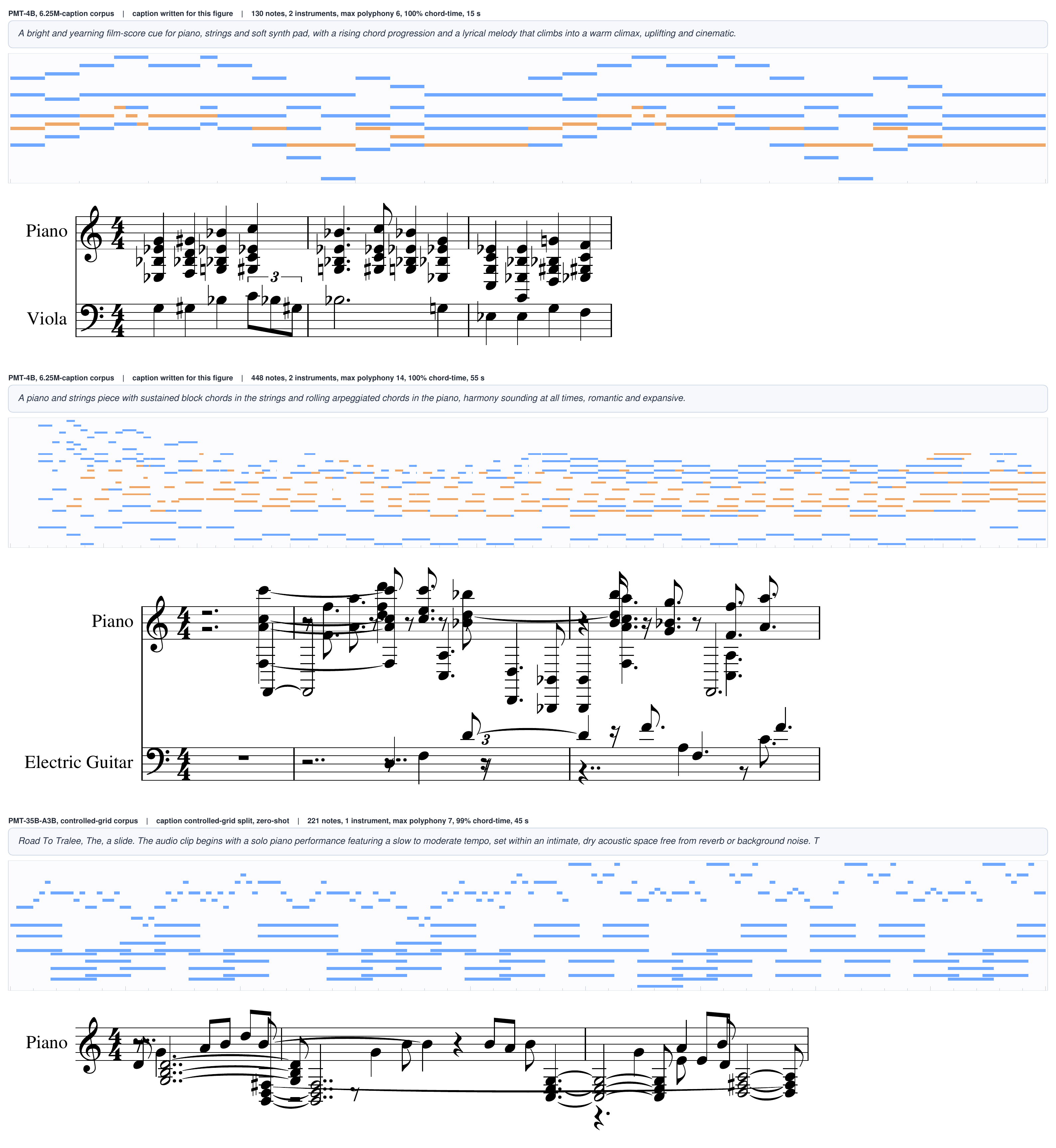}
\caption{\textbf{Generation gallery (2 of 11).} Per example: model and caption provenance, measured
texture, the caption, the piano roll over the whole clip (colour $=$ track, vertical axis $=$ pitch), and
the opening bars engraved. All curated; see Appendix~\ref{app:gallery}.}
\label{fig:galb}
\end{figure*}

\begin{figure*}[tp]
\centering
\includegraphics[width=\textwidth]{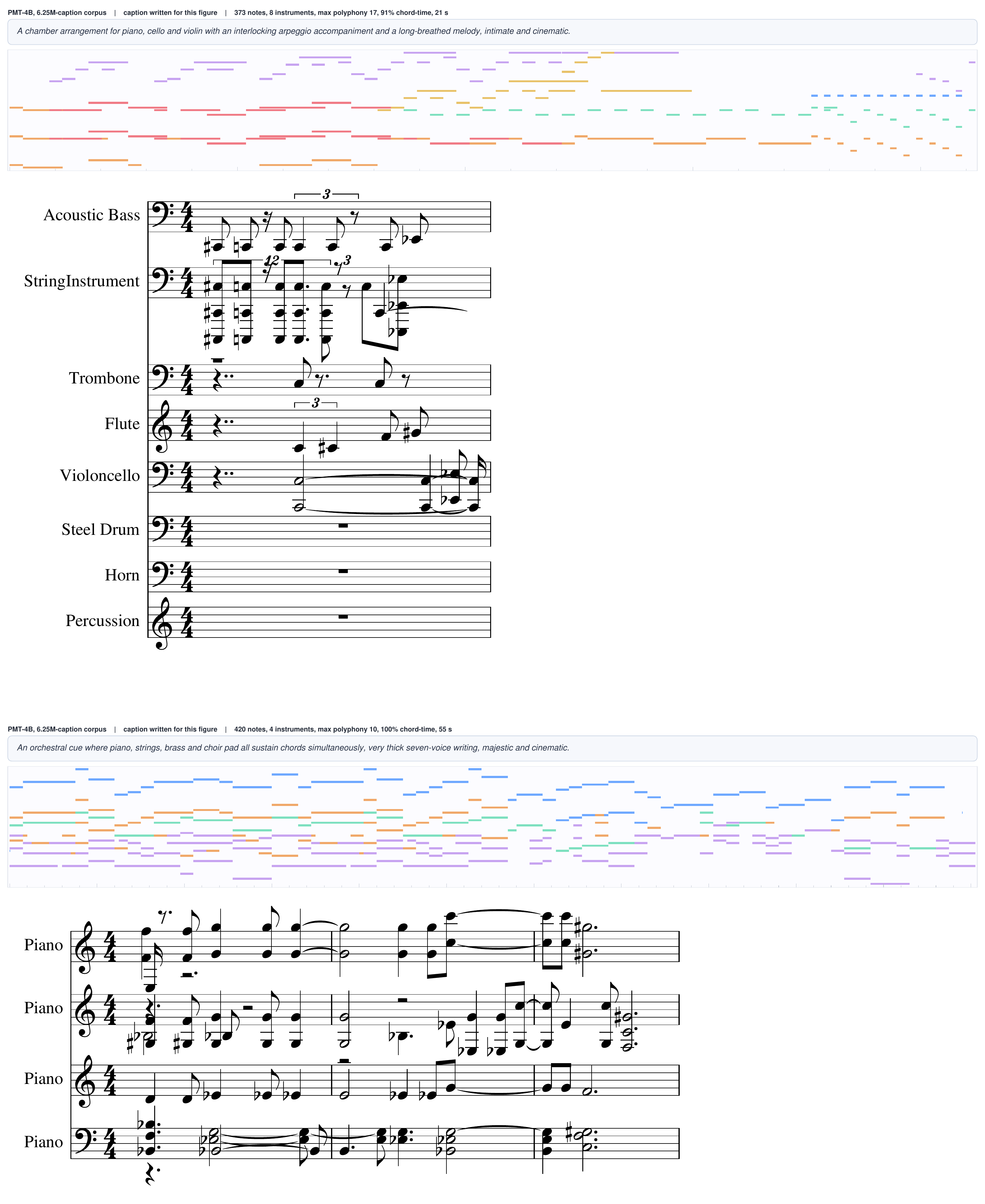}
\caption{\textbf{Generation gallery (3 of 11).} Per example: model and caption provenance, measured
texture, the caption, the piano roll over the whole clip (colour $=$ track, vertical axis $=$ pitch), and
the opening bars engraved. All curated; see Appendix~\ref{app:gallery}.}
\label{fig:galc}
\end{figure*}

\begin{figure*}[tp]
\centering
\includegraphics[width=\textwidth]{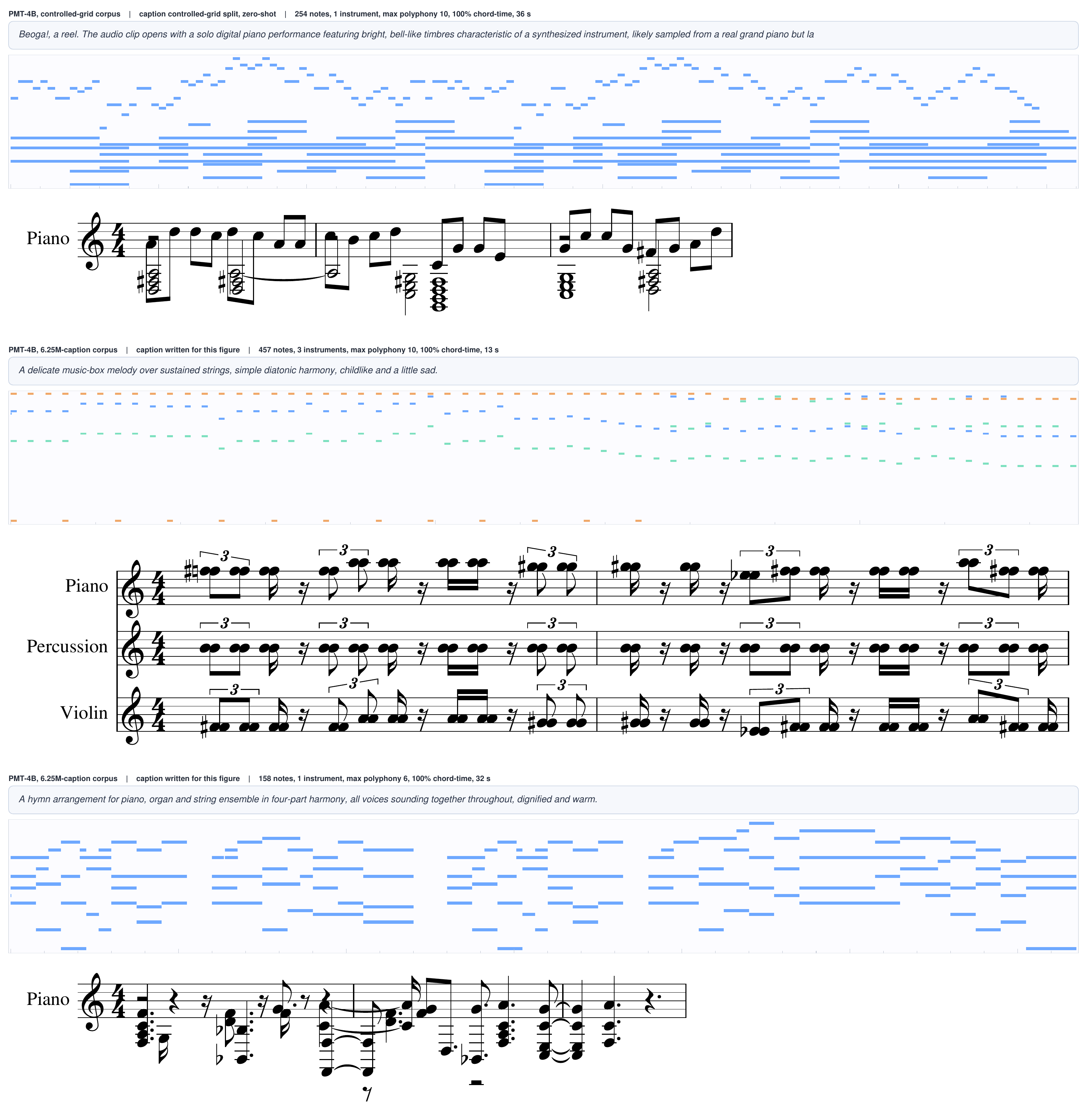}
\caption{\textbf{Generation gallery (4 of 11).} Per example: model and caption provenance, measured
texture, the caption, the piano roll over the whole clip (colour $=$ track, vertical axis $=$ pitch), and
the opening bars engraved. All curated; see Appendix~\ref{app:gallery}.}
\label{fig:gald}
\end{figure*}

\begin{figure*}[tp]
\centering
\includegraphics[width=\textwidth]{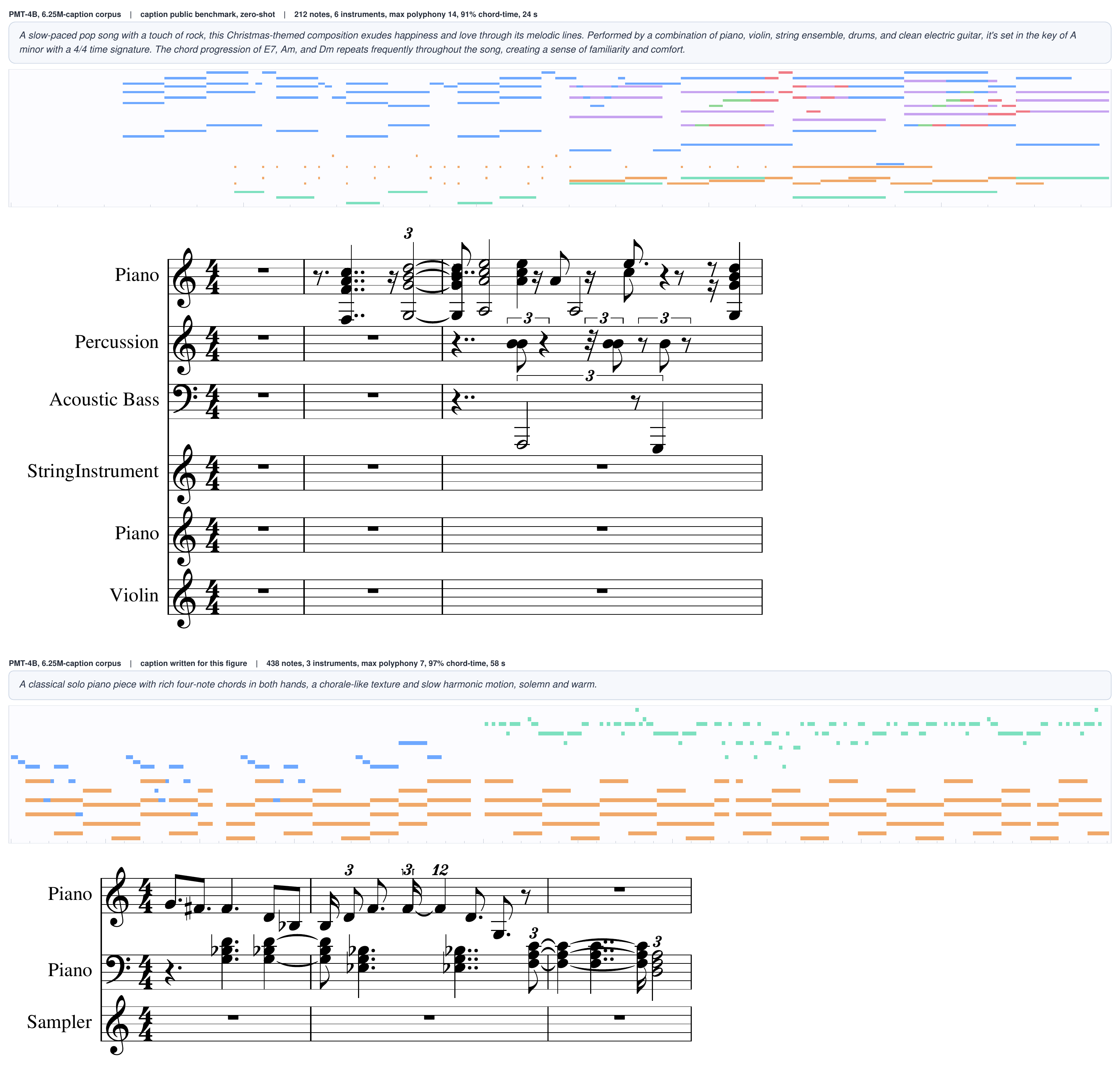}
\caption{\textbf{Generation gallery (5 of 11).} Per example: model and caption provenance, measured
texture, the caption, the piano roll over the whole clip (colour $=$ track, vertical axis $=$ pitch), and
the opening bars engraved. All curated; see Appendix~\ref{app:gallery}.}
\label{fig:gale}
\end{figure*}

\begin{figure*}[tp]
\centering
\includegraphics[width=\textwidth]{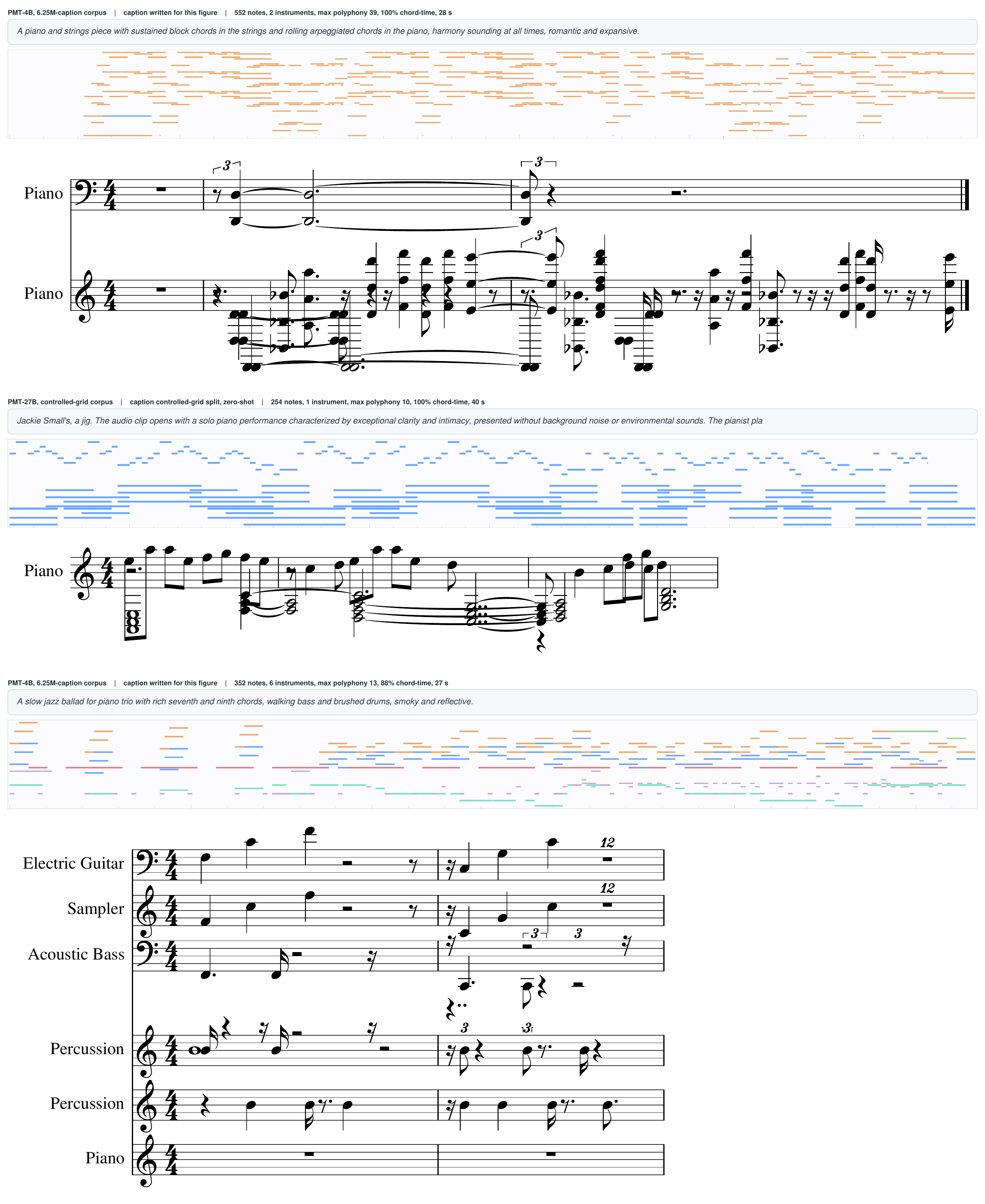}
\caption{\textbf{Generation gallery (6 of 11).} Per example: model and caption provenance, measured
texture, the caption, the piano roll over the whole clip (colour $=$ track, vertical axis $=$ pitch), and
the opening bars engraved. All curated; see Appendix~\ref{app:gallery}.}
\label{fig:galf}
\end{figure*}

\begin{figure*}[tp]
\centering
\includegraphics[width=\textwidth]{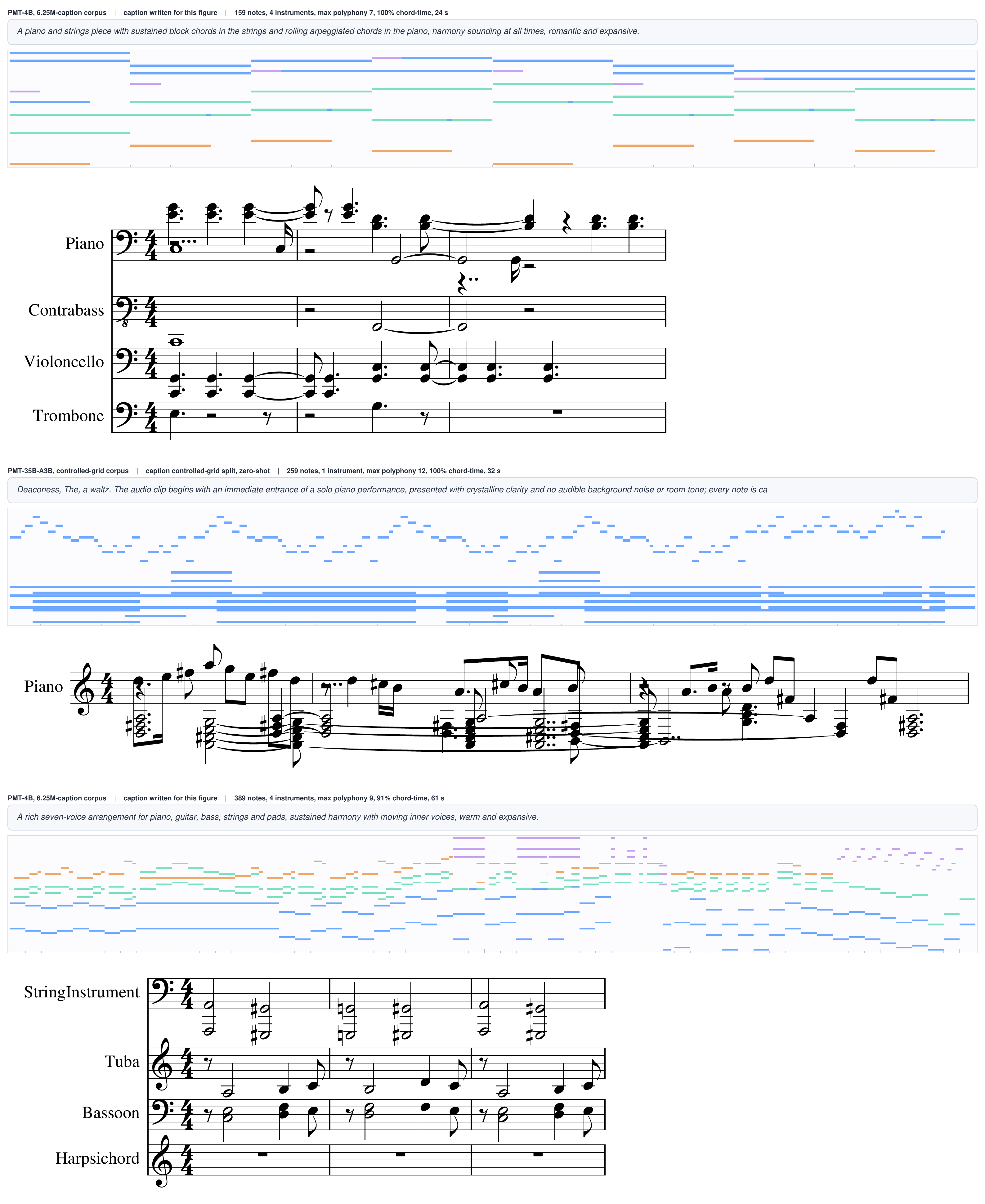}
\caption{\textbf{Generation gallery (7 of 11).} Per example: model and caption provenance, measured
texture, the caption, the piano roll over the whole clip (colour $=$ track, vertical axis $=$ pitch), and
the opening bars engraved. All curated; see Appendix~\ref{app:gallery}.}
\label{fig:galg}
\end{figure*}

\begin{figure*}[tp]
\centering
\includegraphics[width=\textwidth]{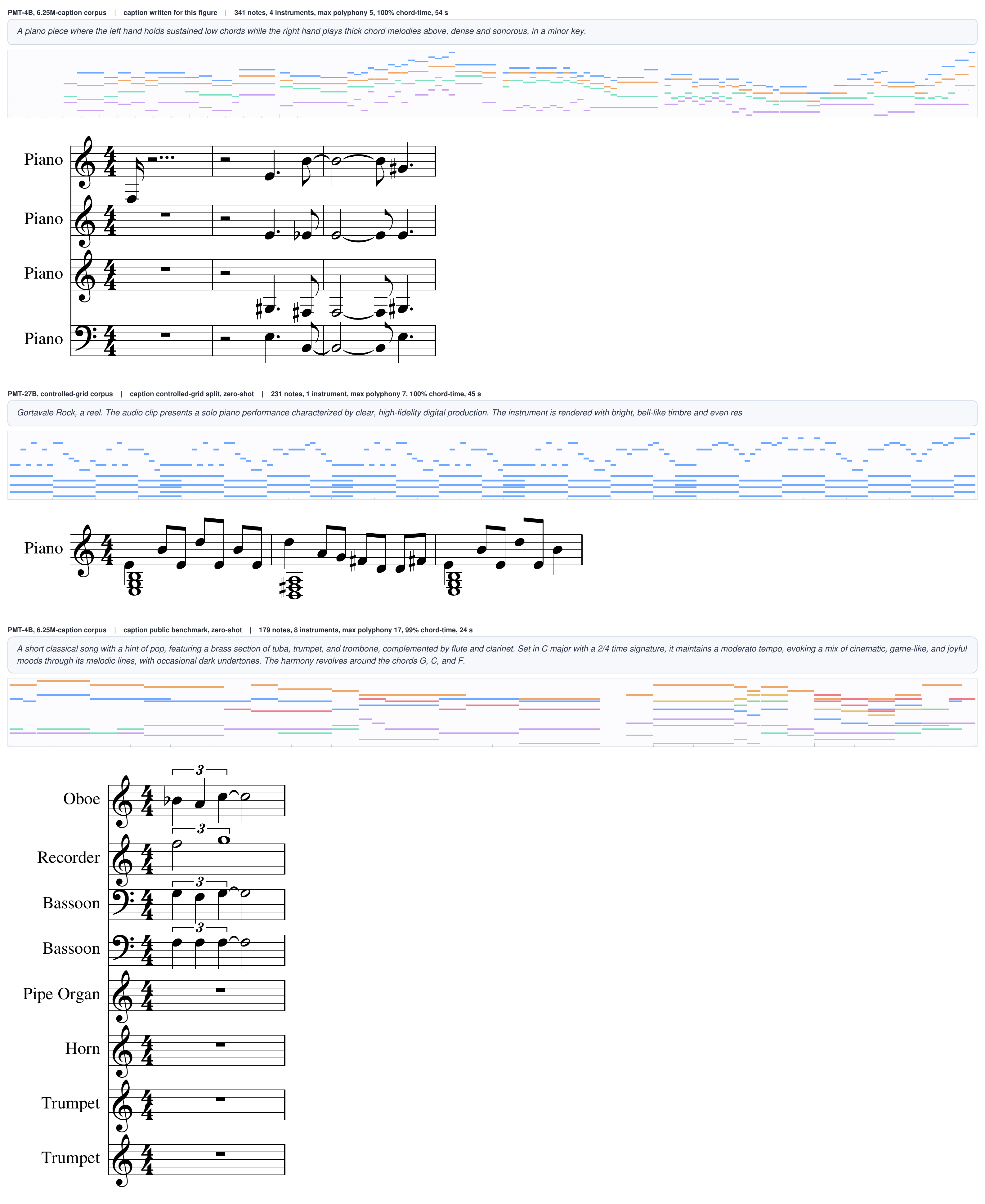}
\caption{\textbf{Generation gallery (8 of 11).} Per example: model and caption provenance, measured
texture, the caption, the piano roll over the whole clip (colour $=$ track, vertical axis $=$ pitch), and
the opening bars engraved. All curated; see Appendix~\ref{app:gallery}.}
\label{fig:galh}
\end{figure*}

\begin{figure*}[tp]
\centering
\includegraphics[width=\textwidth]{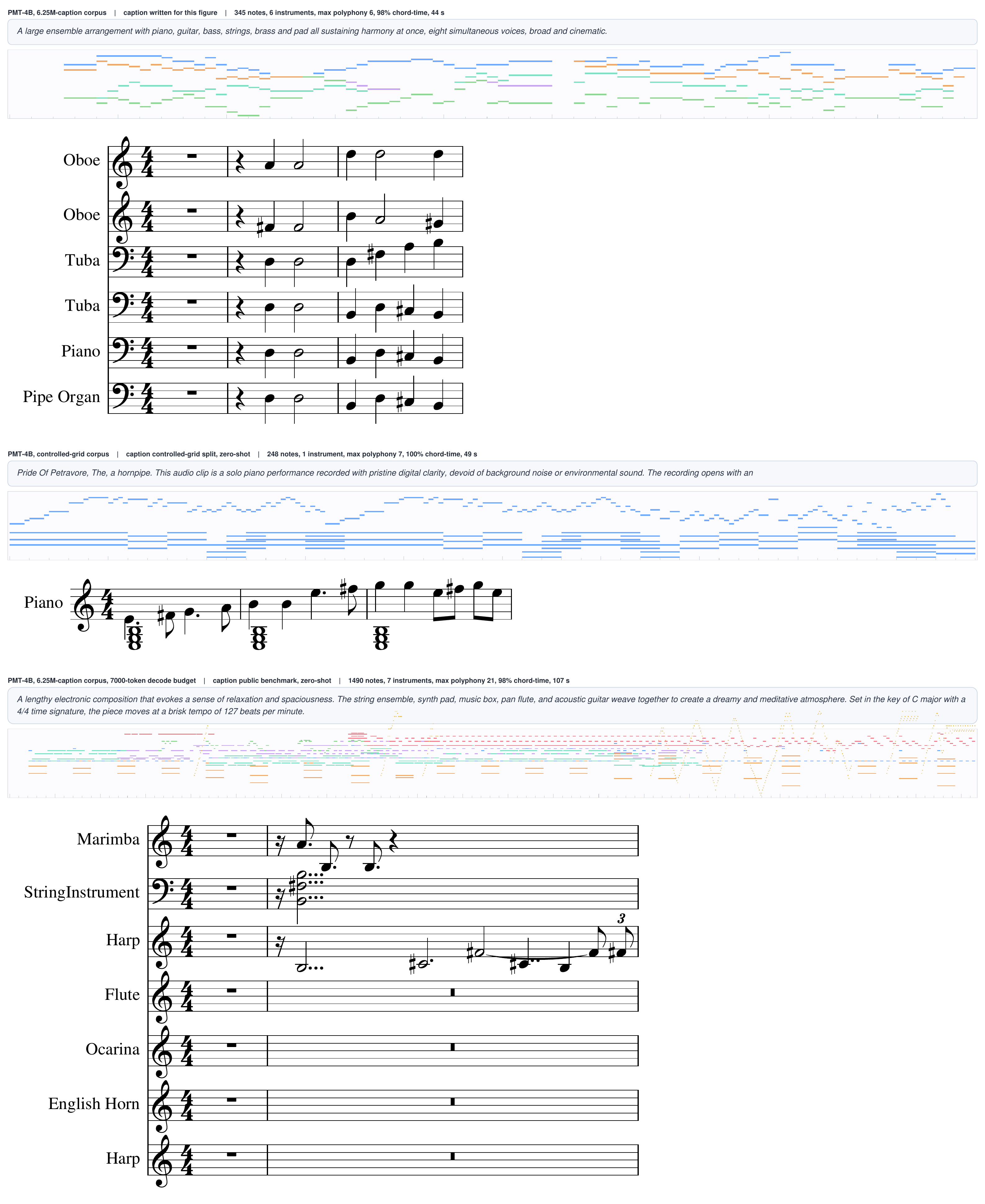}
\caption{\textbf{Generation gallery (9 of 11).} Per example: model and caption provenance, measured
texture, the caption, the piano roll over the whole clip (colour $=$ track, vertical axis $=$ pitch), and
the opening bars engraved. All curated; see Appendix~\ref{app:gallery}.}
\label{fig:gali}
\end{figure*}

\begin{figure*}[tp]
\centering
\includegraphics[width=\textwidth]{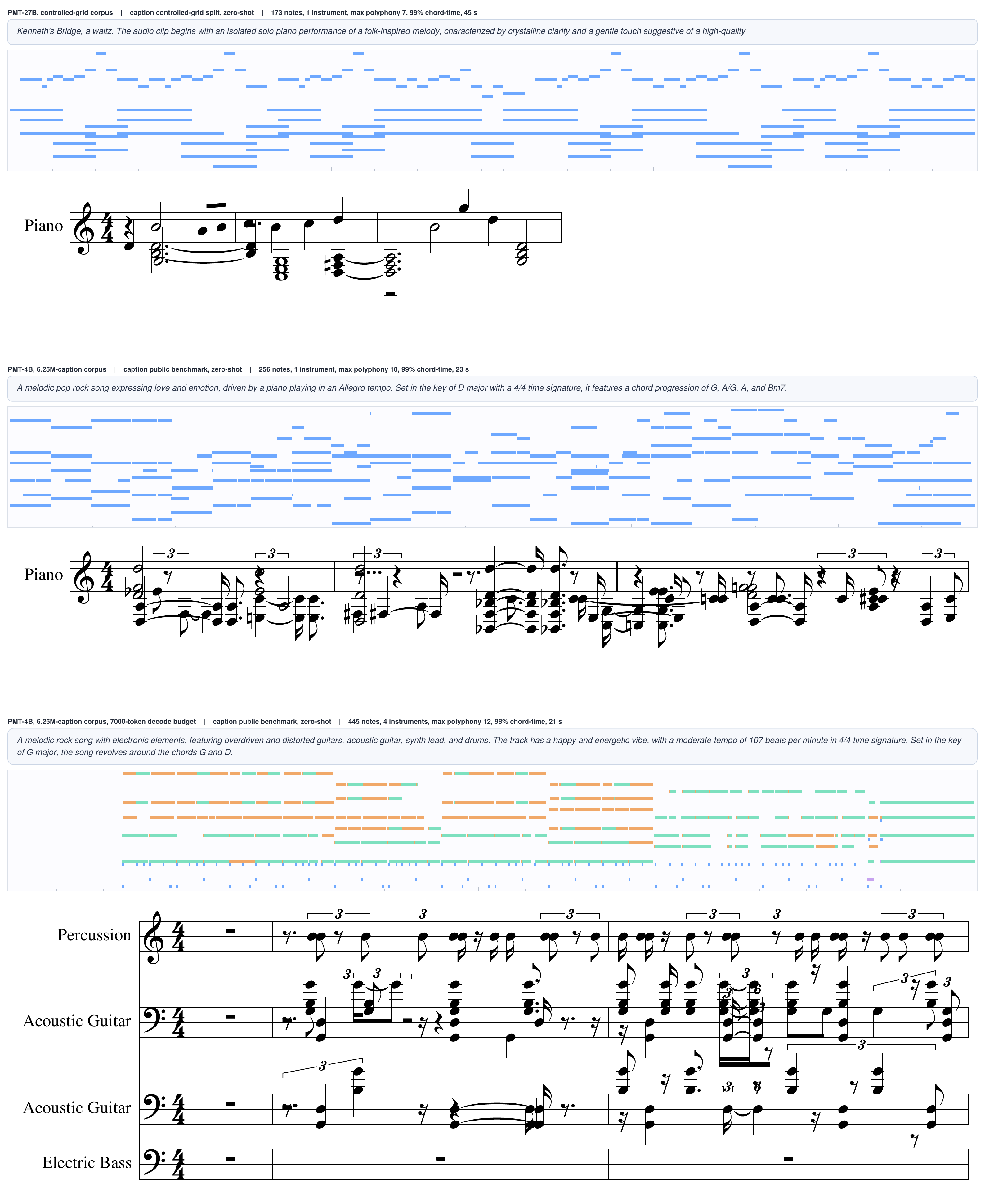}
\caption{\textbf{Generation gallery (10 of 11).} Per example: model and caption provenance, measured
texture, the caption, the piano roll over the whole clip (colour $=$ track, vertical axis $=$ pitch), and
the opening bars engraved. All curated; see Appendix~\ref{app:gallery}.}
\label{fig:galj}
\end{figure*}

\begin{figure*}[tp]
\centering
\includegraphics[width=\textwidth]{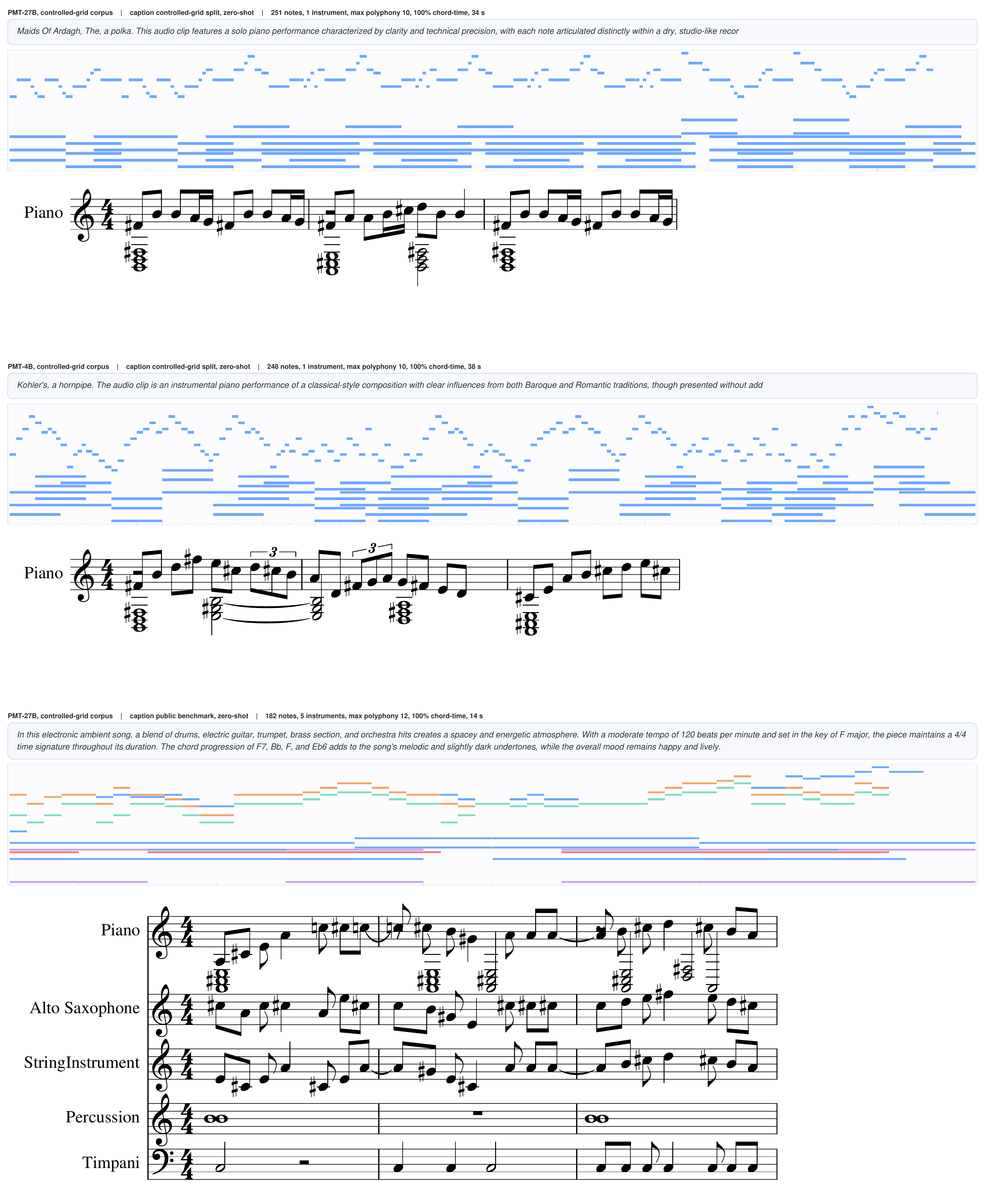}
\caption{\textbf{Generation gallery (11 of 11).} Per example: model and caption provenance, measured
texture, the caption, the piano roll over the whole clip (colour $=$ track, vertical axis $=$ pitch), and
the opening bars engraved. All curated; see Appendix~\ref{app:gallery}.}
\label{fig:galk}
\end{figure*}

\begin{figure*}[tp]
\centering
\includegraphics[width=\textwidth]{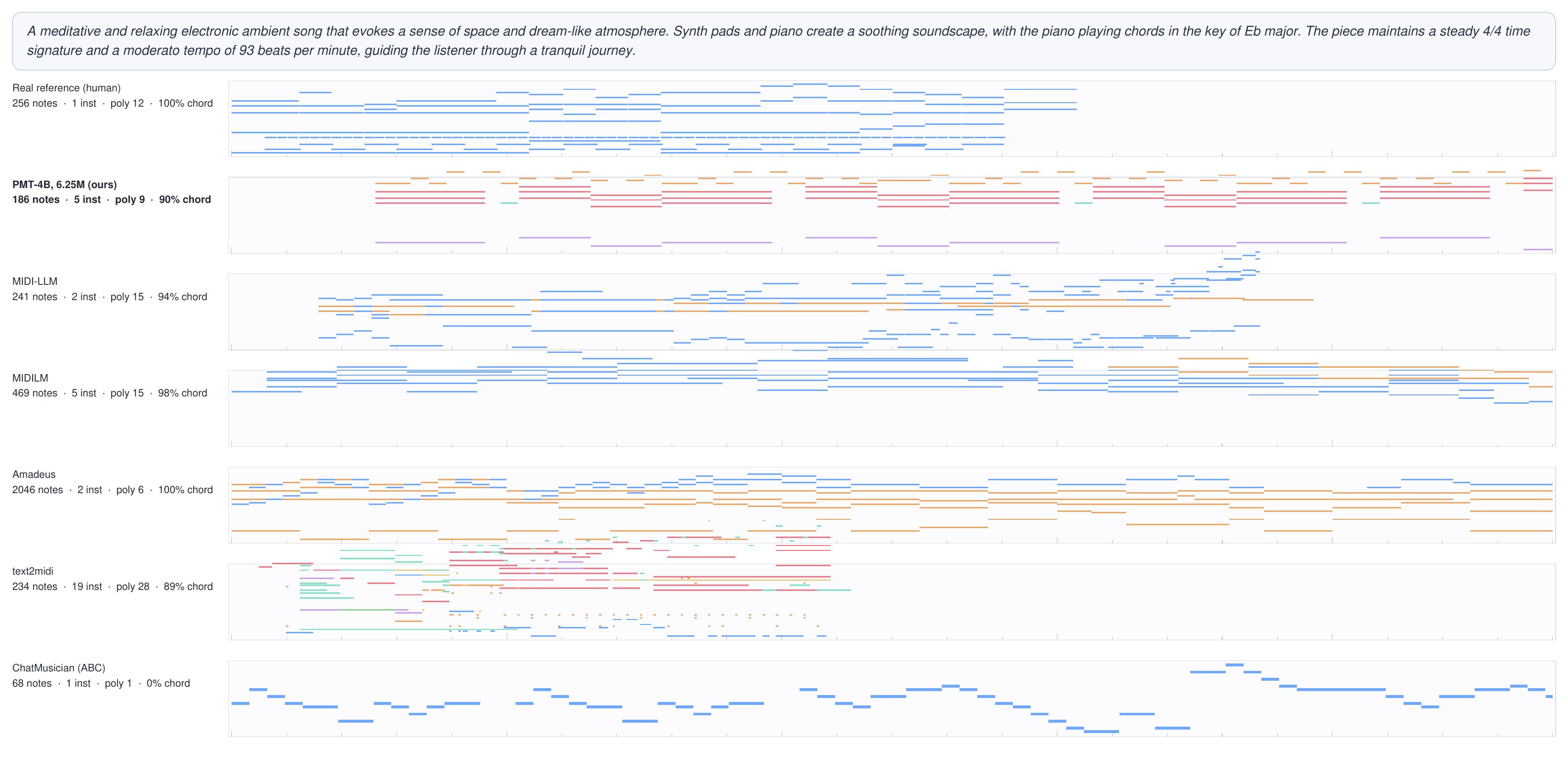}
\caption{\textbf{One caption, seven systems.} Rolls over a common $24$\,s window; row labels carry the
measured texture.}
\label{fig:cmpa}
\end{figure*}

\begin{figure*}[tp]
\centering
\includegraphics[width=\textwidth]{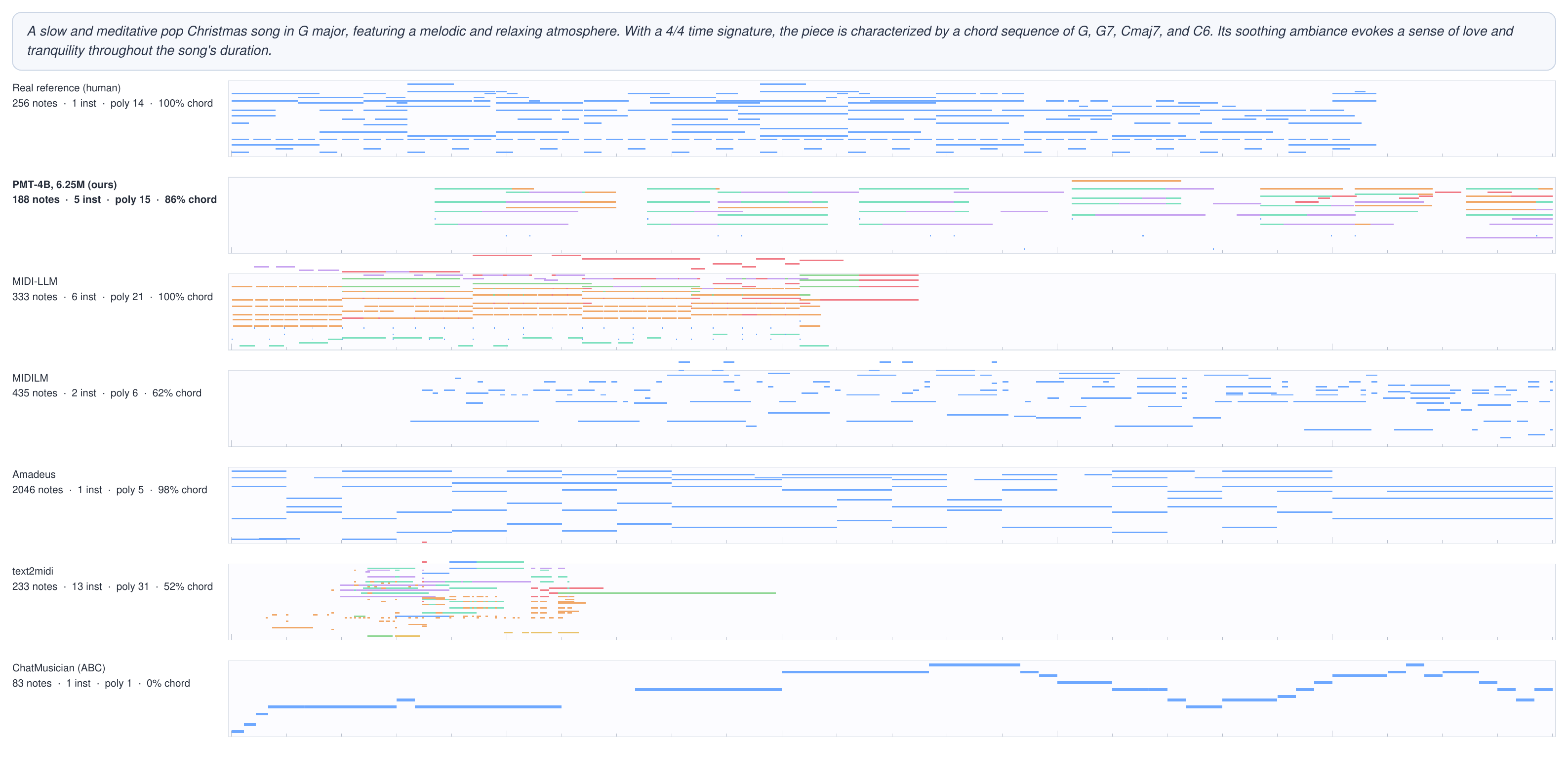}
\caption{\textbf{One caption, seven systems (second caption).} Same protocol as
Figure~\ref{fig:cmpa}. The reference row is clipped at $256$ notes by the benchmark's own construction,
which is the length bias discussed in Appendix~\ref{app:declen}.}
\label{fig:cmpb}
\end{figure*}

\end{document}